\documentclass[journal,letterpaper,onecolumn,11pt]{IEEEtran}
\usepackage{hyperref}
\usepackage{tikz}
\usetikzlibrary{positioning, arrows.meta, fit, backgrounds, calc}
\usetikzlibrary{arrows.meta}
\usepackage{amsmath}
\hypersetup{
 colorlinks,
 linkcolor={blue!100!black},
 citecolor={blue!100!black},
 urlcolor={blue!80!black}
}
\usepackage{accents}
\newlength{\dhatheight}

\def\BibTeX{{\rm B\kern-.05em{\sc i\kern-.025em b}\kern-.08em
    T\kern-.1667em\lower.7ex\hbox{E}\kern-.125emX}}
\usepackage{stmaryrd} 

\newcommand{\bF}{\mathbb{F}}

\usepackage{booktabs} 
\newcommand{\cA}{\mathcal{A}}

\newcommand{\cC}{\mathcal{C}}
\newcommand{\cD}{\mathcal{D}}
\newcommand{\cE}{\mathcal{E}}
\newcommand{\cF}{\mathcal{F}}

\newcommand{\cI}{\mathcal{I}}

\newcommand{\cK}{\mathcal{K}}

\newcommand{\cS}{\mathcal{S}}
\newcommand{\cT}{\mathcal{T}}

\newcommand{\boldc}{\mathbf{c}}

\newcommand{\boldf}{\mathbf{f}}

\newcommand{\boldp}{\mathbf{p}}
\newcommand{\boldq}{\mathbf{q}}
\newcommand{\boldr}{\mathbf{r}}
\newcommand{\bolds}{\mathbf{s}}

\newcommand{\boldv}{\mathbf{v}}
\newcommand{\boldw}{\mathbf{w}}
\newcommand{\boldx}{\mathbf{x}}
\newcommand{\boldy}{\mathbf{y}}

\newcommand{\rank}{\operatorname{rank}}

\usepackage{tikz}
\newif\ifFULL
\FULLfalse
\usepackage[utf8]{inputenc} 
\usepackage[T1]{fontenc}
\usepackage{url}
\usepackage{bbm}
\usepackage{ifthen}
\usepackage[cmex10]{mathtools}

\usepackage{verbatim}
\usepackage{amsthm,amssymb}
\usepackage{xcolor}
\usepackage{graphicx}
\usepackage{amssymb}
\usepackage{epstopdf}

\newtheorem{theorem}{Theorem}
\newtheorem{observation}{Observation}
\newtheorem{remark}{Remark}

\newtheorem{example}{Example}
\newtheorem{lemma}{Lemma}
\newtheorem{definition}{Definition}

\newtheorem{corollary}{Corollary}
\usepackage[style=ieee,citestyle=numeric-comp,sorting=nty]{biblatex}
\usepackage{algorithm}
\usepackage{algorithmic}
\newcommand{\blue}[1]{\textcolor{blue}{#1}}
\newcommand{\gray}[1]{\textcolor{gray}{#1}}

\begin{document}
\title{Improved Torn Paper Coding via Local Alignment} 


\author{%
  \IEEEauthorblockN{\textbf{Junsheng Liu} and \textbf{Netanel Raviv}}  \IEEEauthorblockA{\\~\\Department of Computer Science and Engineering\\
                    Washington University in St Louis, St Louis, MO, USA\\
                    \texttt{junsheng,netanel.raviv@wustl.edu}}
}


\maketitle


\begin{abstract}
   In the torn paper channel, a transmitted codeword is broken at random locations into fragments that arrive at the decoder in an unordered manner.
   A central theoretical challenge within this model is global alignment---the task of determining each fragment's original position---in order to faithfully reconstruct the entire codeword. 
   Prior work by Shomorony and Vahid introduced an interleaved-pilot scheme that successfully achieved a vanishing error probability. 
   However, their alignment strategy relies heavily on global statistics, requiring fragments to exceed a minimum length and effectively discarding many shorter ones as erasures, which results in rates significantly below capacity.
   To address this gap, we propose an improved coding scheme that achieves a provable rate increase through a novel approach we call \textit{local alignment}.   
   This approach identifies global alignment bits within each fragment using only local information, allowing the decoder to determine the positions of fragments that are shorter than those used in previous work. 
   Consequently, the decoder can extract information from a much larger fraction of the channel output than in previous work, yielding significantly higher rates.
  Furthermore, we extend our analysis to torn paper coding with lost pieces (TPC-LP), a generalized model that accounts for length-dependent fragment deletion. For a class of TPC-LP channels that delete all fragments below a logarithmic length threshold while allowing arbitrary length-dependent deletion probabilities for longer fragments, we show that the proposed local alignment strategy achieves an arbitrarily small additive gap to capacity as the threshold increases.
\end{abstract}

\section{Introduction}
{\let\thefootnote\relax\footnote{
Parts of this work accepted for publication in the 2026 IEEE International Symposium on Information Theory. This work was supported in part by NSF grant CNS-2223032.}}

Consider a simple game: you write an information sequence on a long strip of paper, but before the receiver can read it, a friend plays a trick by tearing the paper into random, shuffled pieces. 
While the text on each individual fragment remains perfectly legible, all context regarding their original order is completely lost. 
To win the game and recover the message, you must find a way to piece the fragments back together, based only on the information written on each fragment.
This intuitive scenario captures the core mechanics of torn paper coding, a fundamental reconstruction problem in information and coding theory.

This reconstruction problem arises naturally in emerging data storage and transmission systems that are vulnerable to physical fragmentation, with DNA storage and forensic 3D fingerprinting providing key examples.
DNA storage has emerged as an important alternative for large-scale archival systems~\cite{douglas2012logic,goldman2013towards,blawat2016forward,antkowiak2020low}, but its biochemical nature makes DNA molecules susceptible to random strand breaks. This vulnerability motivates the torn-paper coding problem studied herein, where information must be reconstructed from unordered fragments.
Beyond synthetic biology, this fragmentation problem also appears in physical embedding of data within 3D-printed objects. 
To enable secure forensic identification, fingerprinting information is encoded directly into the physical structure of a 3D-printed model~\cite{wang2024secureinformationembeddingextraction}.
However, if the object is broken---whether accidentally or through malicious tampering---the embedded data string is shattered. 
An adversary might even conceal certain pieces, leaving the decoder to recover the information from an incomplete, unordered set of fragments.

These physical vulnerabilities—from degrading DNA to shattered 3D prints—lead to a natural coding-theoretic abstraction: an encoder transmits a codeword through a channel that breaks it into pieces, shuffles them, and potentially discards some. The objective is to reliably recover the encoded information from the surviving, unordered fragments.

Torn paper coding was initially proposed in the seminal work by Shomorony and Vahid~\cite{shomorony2021torn}, which established the channel capacity and proposed initial encoding strategies. 
This framework was later generalized by Ravi, Vahid and Shomorony~\cite{ravi2024recovering} to account for lost pieces, i.e., where each fragment is independently discarded with a length-dependent probability. 
In parallel, practical designs in~\cite{nassirpour2023dna, jiao2025efficient} utilize nested codes to achieve high empirical rates; however, these heuristic approaches currently lack rigorous guarantees that the error probability vanishes as the codeword length approaches infinity.

Beyond torn paper coding, several adjacent models address related fragmentation challenges under different constraints.
Motivated specifically by 3D printing forensics, break-resilient codes~\cite{wang2024breakresilientcodesforensic3d}  are designed to tolerate an adversary introducing up to a fixed number of breaks, ensuring that every resulting fragment carries enough positional information to allow full recovery.
Similarly, \cite{liu2025single} studied a model in which the decoder must reconstruct the data from only a single, sufficiently large fragment.
The adversarial torn paper model of~\cite{barlev} approaches the problem by restricting the lengths of possible fragments, while the sliced channel~\cite{sima2023error} examines a setting where the sequence is partitioned into fixed-length slices that may suffer internal substitution or deletion errors.
While these adversarial and structured models provide valuable insights, in this paper we focus on torn paper coding, where fragments are generated by random cuts.

Under these random cuts, the fundamental challenge is global alignment, i.e., recovering the original sequence from an unordered multiset of fragments. 
To tackle this, prior work introduced an interleaved-pilot scheme~\cite{shomorony2021torn} that embeds a De Bruijn  pilot sequence periodically among the symbols of a randomly shifted erasure code. 
Since sufficiently long fragments contain a sufficiently long portion of the pilot sequence,
the decoder is able to determine the absolute global position of such fragments. 
However, to reliably distinguish the pilot subsequence from other data within the fragment, the decoder must rely on global statistical uniqueness, which imposes a relatively large restriction on minimum fragment length.
Consequently, the decoder is forced to establish a minimum fragment length threshold, discarding any fragment that falls below it. 
By treating a substantial volume of these shorter fragments as erasures, this approach leaves a significant gap in the achievable transmission rate.

To close this gap, we propose an improved coding scheme that shifts the alignment paradigm from global statistical uniqueness to local structural patterns.
Specifically, we impose a Run-Length Limited (RLL) block constraint on the erasure-coded information to strictly cap the maximum number of consecutive zeros.
We then strategically embed a sequence violation---a deterministic marker of consecutive zeros exceeding this cap---exclusively within the De Bruijn pilot sequence. 
Since this embedded marker is mathematically prohibited from appearing anywhere in the erasure-coded information, the decoder can distinguish the pilot-sequence bits from the erasure-coded bits within any fragment that contains the marker. 
This local alignment method effectively \textit{halves} the minimum required fragment length. 
By safely extracting positional information from much shorter fragments without the need for long statistical uniqueness, our scheme utilizes a significantly larger fraction of the channel output, yielding strictly higher achievable rates compared to existing provable schemes, while maintaining a vanishing decoding error probability.
The improved capacity in shown in Figure~\ref{Figure: comparison for rate of different methods}, with comparison between the rates of~\cite{shomorony2021torn}, the rates of~\cite{Junsheng2026torn}, the channel capacity~\cite{shomorony2021torn}, and the rates in this work.
\begin{figure}[t]
    \centering
    \includegraphics[width=0.5\linewidth]{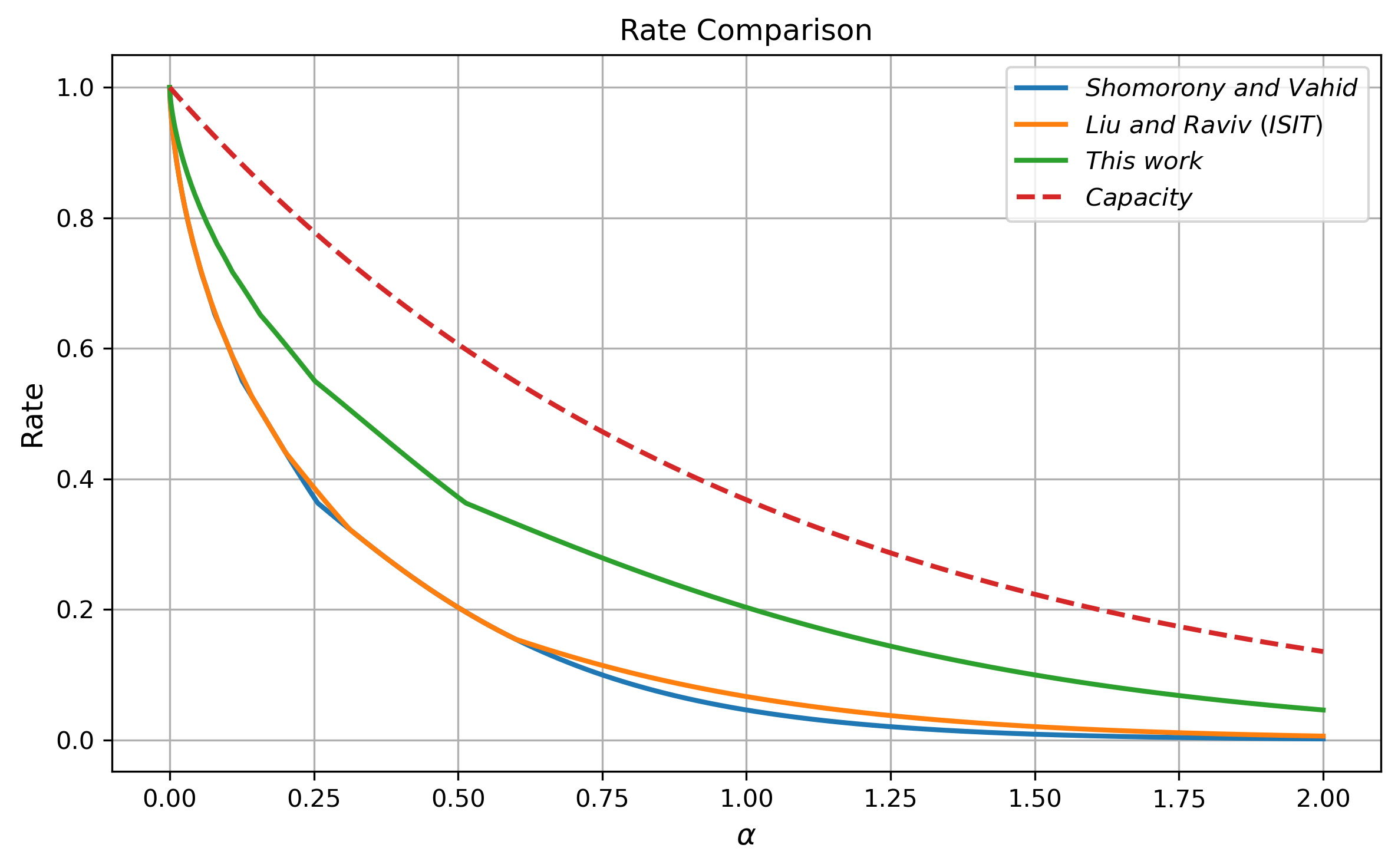}
    \caption{ A comparison between the rates of~\cite{shomorony2021torn}, the rates of~\cite{Junsheng2026torn}, the channel capacity~\cite{shomorony2021torn}, and the rates in this work. }
    \label{Figure: comparison for rate of different methods}
\end{figure}

Furthermore, the advantages of this local alignment strategy extend naturally to torn paper coding with lost pieces (TPC-LP)~\cite{ravi2024recovering}. 
In this generalized model, fragments are independently discarded (or ``deleted'') with a length-dependent probability, mirroring physical scenarios where shorter pieces are entirely lost. 
Since our proposed encoding treats short fragments as erasures, it is inherently robust to such deletions.
We analytically demonstrate that our method not only guarantees reliable recovery under such fragment deletions, but for a specific parameter regime it achieves an arbitrarily small gap to the theoretical capacity of the TPC-LP.

The remainder of this paper is organized as follows. Section~\ref{Section:problem definition} formally defines the torn paper coding model, reviews the capacity result of~\cite{shomorony2021torn}, and discusses the interleaved-pilot scheme of~\cite{shomorony2021torn} which serves as the starting point for our construction. 
Section~\ref{Section:propsed work} presents the proposed local-alignment framework. 
In particular, Section~\ref{section:rll} introduces the run-length-limited constructions, Section~\ref{section: improved encoding} describes the encoding procedure, Section~\ref{Section:improved decoding} develops the decoding algorithm, and Section~\ref{section:calculation of rate} rigorously proves reliable recovery (i.e., vanishing decoding error probability as the codeword length goes to infinity) and derives the achievable rate of the proposed scheme. Section~\ref{section:comparison} numerically compares the optimized achievable rates of the proposed method with prior provable constructions and with the torn-paper channel capacity. 
Finally, Section~\ref{section:compare capacity for tpclp} extends the local-alignment framework to torn paper coding with lost pieces, where we show that under a threshold-deletion regime the resulting scheme achieves an arbitrarily small additive gap to the TPC-LP capacity.
\section{Problem Definition and Previous Results}\label{Section:problem definition}
We consider the problem of encoding 
a message in $\{0,1\}^k$ into a codeword~$\boldx=(x_1,\ldots,x_n)\in\{0,1\}^n$. 
During storage or transmission, the codeword is subject to random physical breaks between consecutive bits. 
For each index $i\in \{1,2,\dots,n-1\}$, a break occurs between $x_i$ and $x_{i+1}$ with probability $p_n$ and no break with probability $1-p_n$, where~$p_n$ is a channel parameter which may depend on~$n$.
The fragmentation process does not delete or alter any bits; it only introduces random cuts between consecutive positions. 
More specifically, let $N_1, N_2, \ldots$ be i.i.d. $\operatorname{Geometric}(p_n)$ random variables representing the lengths of the broken pieces, and let $K$ be the smallest index such that $$\sum_{i=1}^{K} N_i \ge n.$$
The channel tears the string $\boldx$ into fragments $\vec{X}_1, \vec{X}_2, \ldots, \vec{X}_K$, where 
$$\vec{X}_i \triangleq \left( x_{1+\sum_{j=1}^{i-1} N_j}, \dots, x_{\sum_{j=1}^i N_j} \right)$$
for~$1\le i<K$, and the final fragment is bounded by the total length $n$:$$\vec{X}_K \triangleq \left( x_{1+\sum_{j=1}^{K-1} N_j}, \dots, x_n \right).$$

The decoder receives the unordered multiset $\{\{\vec{X}_1, \vec{X}_2, \dots, \vec{X}_K\}\}$ of all resulting fragments, and the goal is to design a family of binary codewords $\mathcal{C} \subseteq \{0,1\}^n$ such that the probability of decoding error—defined as $P_e = \max_{c \in \mathcal{C}} \text{Pr}(\psi(\{\{\vec{X}_j\}\}_{j=1}^K) \neq c)$ for a decoding function $\psi$ that maps the multiset of fragments back to the original codeword—vanishes as $n \to \infty$.
Clearly, for any practical use the function~$\psi$ must be efficiently computable. 
The channel is illustrated in Figure~\ref{Figure: channel}, where raw information is encoded into a string, the string is torn into pieces, and the decoder must decode the correct information based on the unordered set of substrings.

We assume that the received fragments retain their orientation, i.e., when receiving a fragment~$(x_i,\ldots,x_{i+\ell})$, the decoder does not confuse this fragment with its reverse $(x_{i+\ell},\ldots,x_{i})$.
This assumption reflects DNA storage settings, where the underlying biochemical directionality of the DNA strand dictates the correct reading direction~\cite{wiki:Directionality_molecular_biology} and is similarly mirrored in 3D-printing forensics where asymmetric inner codes and orientation-specific markers are embedded into the model’s geometry to break rotational symmetry and ensure unique readability~\cite{wang2024secureinformationembeddingextraction}.
Thus, the orientation of each fragment is assumed to be known.

\begin{figure}[ht]
    \centering
    \includegraphics[width=0.6\linewidth]{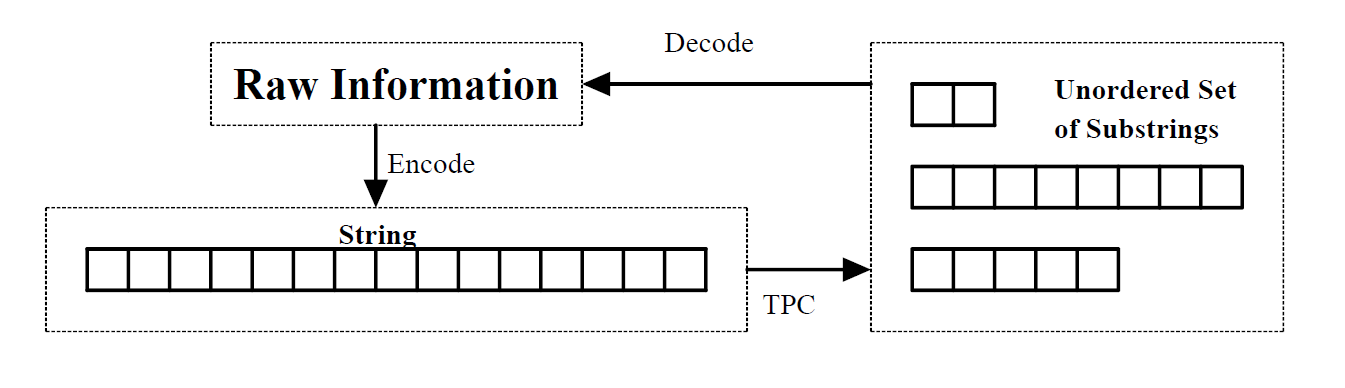}
    \caption{A sketch of torn paper coding.}
    \label{Figure: channel}
\end{figure}

The capacity of the associated channel (called the \textit{torn paper channel)} has been studied in~\cite{shomorony2021torn}.
They defined $\alpha=\lim_{n\to\infty}p_n\log n$ and proved that the capacity of this channel is $e^{-\alpha}$.
In the case that $p_n=o(\frac{1}{\log n})$, we have that $e^{-\alpha}$ approaches~$1$ when $n\xrightarrow[]{}\infty$, and similarly, if $p_n=\omega(\frac{1}{\log n})$ then $e^{-\alpha}$ approaches~$0$ when $n\xrightarrow[]{}\infty$.
Thus, the remaining regime of interest is $p_n=\Theta(\frac{1}{\log n})$, i.e., where $\alpha$ is constant.

Ref.~\cite{shomorony2021torn} also shows that purely random codes achieve this capacity.
As purely random codes are impractical for encoding and decoding, the authors introduce an interleaving framework based on combining a De Bruijn sequence with a shifted pseudorandom erasure code. 
In this framework, the De Bruijn sequence serves as a global pilot structure for fragment localization; however, this requires fragments to be long enough to successfully distinguish pilot markers from information bits. We provide technical details in Section~\ref{section:interleaved pilot scheme}, and extend this idea using a technique we call \textit{local alignment} in Section~\ref{section: improved encoding}.
This technique uses local structures within a fragment to identify the locations of De Bruijn sequence bits, and thereby relaxes the fragment length requirement and leads to higher rates.

Ref.~\cite{ravi2024recovering} studied torn paper coding with lost pieces (TPC-LP), which extends torn paper coding to accommodate lost fragments by introducing a length-dependent deletion probability. 
In this model, an encoded binary sequence first passes through the standard torn paper channel. 
Then, each resulting fragment $\vec{X}_i \in \{\{\vec{X}_1, \vec{X}_2, \ldots, \vec{X}_K\}\}$ is independently discarded with some probability~$d(\cdot)$ that depends on fragment length. 
The channel output is the unordered multiset of the surviving fragments, once again assuming no internal bit-level errors (the full details of TPC-LP are given in Section~\ref{section:compare capacity for tpclp}). 
As demonstrated in Section~\ref{section:compare capacity for tpclp}, our proposed coding method achieves an arbitrarily small gap to the capacity of the TPC-LP in a specific fragment deletion regime.

A complementary line of work studies practical torn paper coding schemes through empirical evaluation, but does not provide rigorous analytical guarantees, i.e., does not establish that the decoding error probability vanishes as $n \to \infty$.
For instance,~\cite{nassirpour2023dna,jiao2025efficient} propose nested designs that achieve higher rates than~\cite{shomorony2021torn} in practice.
Thus, while these approaches offer useful algorithmic constructions, their fundamental limits and provable performance remain unknown.

\subsection{Interleaved-pilot Scheme}\label{section:interleaved pilot scheme}
Since our method is similar to the interleaved pilot scheme of~\cite{shomorony2021torn}, we introduce it here as the basis for our refined construction in Section~\ref{section: improved encoding}.
We begin with the definition of a De Bruijn sequence. 
\begin{definition}\label{Definition:De Bruijn sequence}
A \emph{De Bruijn sequence} of order $\log n$ over $\mathbb{F}_2$ is a binary sequence of length $n$ in which every possible binary string of length $\log n$ appears exactly once as a (possibly cyclic) contiguous substring.
\end{definition}

For a fixed positive integer $m$, consider a De Bruijn sequence $\mathbf{p}$ of length $n/m$, which we refer to as the \emph{pilot} sequence. 
According to Definition~\ref{Definition:De Bruijn sequence}, every sequence of length $\log(n/m)$ appears in $\mathbf{p}$ exactly once.
To construct the codebook, the authors of~\cite{shomorony2021torn} interleave the pilot sequence with codewords of an erasure code. 
Specifically, let $\cC_{\mathrm{er}}$ be an erasure code of length $n/m$ and rate $R_{\mathrm{er}}$. They form a \emph{shifted} version of this code by drawing an i.i.d.\ $\mathrm{Ber}(1/2)$ random vector in $\{0,1\}^{n/m}$ and XORing it with each codeword of $\cC_{\mathrm{er}}$, thereby obtaining the pseudo-random code $\Tilde{\cC}_{\mathrm{er}}$.
A key property required for successful global alignment is that the pilot sequence substrings should not coincide with any substring of a randomly shifted erasure codeword as follows.
\begin{observation}\label{Lemma: probability of match in ilan's method}
    For any $\mathbf{s} \in \Tilde{\cC}_{\mathrm{er}}$, the probability that $\mathbf{s}$ shares a length-$k$ substring with the pilot sequence is
\begin{align*}
    \Pr\Big(&\mathbf{p}[i:i+k-1]=\mathbf{s}[j:j+k-1] \text{ for any distinct }i,j\in[\tfrac{n}{m}-k]
    \Big) \leq (\tfrac{n}{m})^22^{-k},
\end{align*} 
where the probability is taken over the choices of the random vector used to construct~$\tilde{\cC}_\text{er}$ from~$\cC_\text{er}$, and where $\mathbf{p}[i:i+k-1]$ refers to the $i$-th through $(i+k-1)$-th bits of~$\mathbf{p}$.
\end{observation}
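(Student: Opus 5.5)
The plan is a union bound over all pairs of starting positions, combined with the observation that a fixed length-$k$ window of any shifted codeword is uniformly distributed.

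Fix $\mathbf{s}\in\Tilde{\cC}_{\mathrm{er}}$, so that $\mathbf{s}=\mathbf{c}\oplus\mathbf{r}$ for a fixed codeword $\mathbf{c}\in\cC_{\mathrm{er}}$ and an i.i.d.\ $\mathrm{Ber}(1/2)$ vector $\mathbf{r}\in\{0,1\}^{n/m}$. The randomness lies only in $\mathbf{r}$. The pilot $\mathbf{p}$ and the codeword $\mathbf{c}$ are deterministic.

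First, fix a pair of indices $i,j\in[\tfrac{n}{m}-k]$ and consider the event $E_{i,j}=\{\mathbf{p}[i:i+k-1]=\mathbf{s}[j:j+k-1]\}$. This event holds if and only if
\[
\mathbf{r}[j:j+k-1]=\mathbf{c}[j:j+k-1]\oplus\mathbf{p}[i:i+k-1],
\]
and the right-hand side is a fixed binary string of length $k$. The bits of $\mathbf{r}$ in positions $j,\dots,j+k-1$ are $k$ distinct i.i.d.\ uniform bits, so $\mathbf{r}[j:j+k-1]$ is uniform on $\{0,1\}^k$. Hence $\Pr(E_{i,j})=2^{-k}$ exactly.

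Next, take a union bound over all pairs. The number of pairs $(i,j)$, with or without the distinctness restriction, is at most $(\tfrac{n}{m})^2$. Therefore
\[
\Pr\Bigl(\textstyle\bigcup_{i,j}E_{i,j}\Bigr)\le \sum_{i,j}\Pr(E_{i,j})\le(\tfrac{n}{m})^2\,2^{-k},
\]
which is the claimed bound.

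There is no serious obstacle. The one point needing care is that the argument does not require independence across different windows: overlapping windows $\mathbf{s}[j:j+k-1]$ are correlated, but the union bound does not care. The only fact used is that each single window of $\mathbf{s}$ is uniform, and this follows because XOR with a fixed string is a bijection of $\{0,1\}^k$. The bound holds for every fixed $\mathbf{c}$, so it holds for every $\mathbf{s}$ in the shifted code. If one wants the simultaneous statement over all codewords, that would require a further union bound over the $2^{R_{\mathrm{er}}n/m}$ codewords, but the observation as stated is per codeword.
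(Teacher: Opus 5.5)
Your proof is correct and is the argument the paper has in mind: the paper only says the observation follows from a straightforward union bound, and your bound $\Pr(E_{i,j})=2^{-k}$, from uniformity of each window of the shift vector, summed over at most $(\tfrac{n}{m})^2$ pairs is that union bound written out. One caution on your closing aside: a further union bound over all $2^{R_{\mathrm{er}}n/m}$ codewords would not vanish for $k=(2+\delta)\log n$, so the simultaneous statement cannot be obtained that way; the paper instead uses the per-codeword bound only to keep a $(1-\epsilon)$ fraction of codewords that share no window with the pilot.
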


Observation~\ref{Lemma: probability of match in ilan's method} is proved via a straightforward application of the union bound. 
Therefore, if $k=(2+\delta)\log n$ for some $\delta>0$, this probability goes to $0$ when $n\xrightarrow[]{}\infty$. 
This means that for any $\epsilon>0$ there exists a large enough~$n$ so that it is possible to choose at least a $1-\epsilon$ fraction of the codewords in $\Tilde{\cC}_{\mathrm{er}}$ such that \textit{any} $(2+\delta)\log n$ segment in \textit{any} of its codewords is distinct from \textit{any} $(2+\delta)\log n$ segment of the pilot sequence~$\mathbf{p}$.

Let $\cS=\{\bolds_1,\bolds_2,\dots,\bolds_{|\cS|}\}\subset\Tilde{\cC}_{\mathrm{er}}$ be a set of $(1-\epsilon)2^{\frac{n}{m}R_{\mathrm{er}}}$ such codewords, i.e., where none of them shares any $(2+\delta)\log n$ contiguous subsequence with $\mathbf{p}$. 
Ref.~\cite{shomorony2021torn} builds a codeword $\boldc$ by taking $m-1$ codewords 
from $\cS$ and interleaving their symbols with symbols from $\mathbf{p}$. 
More precisely, for any function~$u:[m-1]\to[|\cS|]$, they build the codeword $\boldc_u=(\boldc_u[0],\boldc_u[1],\ldots, \boldc_u[n-1])$ where for every $t\in\{0,1,\dots,n/m-1\}$, we have
\begin{equation}
    \boldc_u[mt+j]=
    \begin{cases}
    \mathbf{p}[t], & j= 0, \\
    \bolds_{u(j)}[t], & j=1, 2,3,\dots,m-1.
\end{cases}
\end{equation}
The resulting codebook is of size $|\cS|^{m-1}=(1-\epsilon)^{m-1}2^{(1-\frac{1}{m})nR_{\mathrm{er}}}$. 

Subsequently, decoding begins by recovering the global alignment of the fragments through the periodically embedded pilot symbols drawn from the De Bruijn sequence $\mathbf{p}$. 
Since a window of length $(2+\delta)\log n$ from the pilot sequence coincides with a window of the shifted erasure code with probability which tends to zero as $n \to \infty$, every fragment whose length exceeds $(2+\delta)m\log n$ must contain at least one pilot window of length $(2+\delta)\log n$ that is uniquely identifiable. 
Moreover, because every $\log(n/m)$ window of the De Bruijn sequence $\mathbf{p}$ is distinct, the occurrence of such a window immediately reveals the fragment’s absolute position in the codeword, a process we call \textit{global alignment}. 
Once these long fragments are anchored, the short fragments—those too small to contain any identifiable pilot window—are discarded, and the resulting missing positions are treated as erasures. 
The reconstruction of the full message is then completed by the outer shifted-erasure code $\Tilde{\cC}_{\mathrm{er}}$, which deterministically corrects the erased locations created by the random breaking process.

Notice that in the interleaved-pilot scheme, the decoder identifies the locations of the pilot sequence bits only in sufficiently long fragments. 
The rationale is that the probability of a long window of the De Bruijn sequence coinciding with a window of any shifted erasure codeword is vanishingly small. 
However, this approach has an inherent gap: the number of bits required to ensure that a De Bruijn window is \emph{not} identical to a shifted erasure codeword window is approximately  $2\log n$ (Observation~\ref{Lemma: probability of match in ilan's method}), whereas the number of bits needed to uniquely identify a De Bruijn window is only $\log (n/m)$. 
Consequently, the decoder in~\cite{shomorony2021torn} restricts itself to fragments of length at least $(2+\delta)m\log n$, discarding the large number of shorter fragments, which reduces the effective rate since all such fragments are treated as erasures.

This observation motivates the question of whether one can identify the locations of the pilot sequence bits by exploiting the local pattern surrounding each De Bruijn bit, rather than relying solely on long fragments. 
Specifically, we embed a small deterministic marker into the De Bruijn sequence and actively prevent this marker from appearing within the erasure code. 
By detecting these distinct structural patterns within any given fragment, we can distinguish the De Bruijn bits from the erasure-coded bits—a technique we term \textit{local alignment}. 
Once this pattern is identified, the De Bruijn window can be reliably inferred. 
This enables the decoder to determine the global position of fragments whose lengths are well below $(2+\delta)m\log n$.

\section{Improved Torn Paper Coding via Local Alignment}\label{Section:propsed work}
To implement this local alignment, our proposed scheme constrains both the erasure code $\cC_{\mathrm{er}}$ and the De Bruijn sequence using a Run-Length Limited (RLL) $(0, k)$ block code. We then strategically embed a sequence violations---specifically, multiple substrings of $k+1$ consecutive $0$'s called \textit{markers}---exclusively within the De Bruijn sequence to serve as  local alignment anchors. 
Since this violation is strictly prohibited from occurring within the RLL-constrained $\cC_{\mathrm{er}}$-codeword (denoted~$\bar{\cC}_{\mathrm{er}}$), as well as from the remainder of the De Bruijn sequence, the marker\blue{s} remains entirely distinguishable. 
Consequently, the modified De Bruijn sequence functions as a dual-purpose pilot: it acts as a delimiter to identify local boundaries within a fragment while simultaneously establishing global alignment. 
This dual functionality ensures alignment integrity, facilitating the recovery of short segments that would otherwise be discarded, ultimately achieving a measurable increase in the overall information rate.

\subsection{Run Length Limited Code}\label{section:rll}
Since our method uses an RLL technique for local alignment, we introduce it here as the basis for our construction in Section~\ref{section: improved encoding}.
The codewords of an RLL code are binary sequences of a fixed length $n$, characterized by parameters $d$ and $k$. 
These parameters indicate the minimum and maximum number of $0$'s between consecutive $1$'s in the sequence. 
For example, an $\text{RLL}(1,3)$ code, also known as Modified Frequency Modulation (MFM)~\cite{immink1990runlength}, requires that between any two $1$'s there are at least $d=1$ and at most $k=3$ $0$'s.
To meet the requirements of Section~\ref{section: improved encoding} which follows, 
we require an~$\mathrm{RLL}(0,k-1)$ scheme and an~$\mathrm{RLL}(0,k-2)$ scheme, which are obtained by insertion of~$1$'s into the information words, and are introduced in the following simple lemmas. 


\begin{lemma}\label{lemma: there exists an RLL fixed length code}
    There exists an $\mathrm{RLL}(0,k-1)$ fixed-length block code with rate $\frac{k-1}{k}$.
\end{lemma}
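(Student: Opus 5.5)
The plan is to build the code explicitly by inserting $1$'s, as the paragraph before the lemma suggests. An information word $\mathbf{w}\in\{0,1\}^{(k-1)t}$, where $t$ is the number of blocks (so the code length is $n=kt$, assuming $k\mid n$, with padding otherwise), is split into $t$ consecutive blocks $\mathbf{w}_1,\ldots,\mathbf{w}_t$ of length $k-1$. The encoder appends a single $1$ after each block and outputs
\[
\mathbf{c}=(\mathbf{w}_1,1,\mathbf{w}_2,1,\ldots,\mathbf{w}_t,1)\in\{0,1\}^{kt}.
\]
The map $\mathbf{w}\mapsto\mathbf{c}$ is clearly injective: deleting the coordinates $k,2k,\ldots,tk$ recovers $\mathbf{w}$. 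The code therefore has $2^{(k-1)t}$ codewords of length $kt$, and its rate is exactly $\frac{(k-1)t}{kt}=\frac{k-1}{k}$.

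The remaining step is to verify the run-length constraint, namely that no codeword contains $k$ consecutive $0$'s. Every window of $k$ consecutive positions in $\{1,\ldots,kt\}$ contains a multiple of $k$, and every multiple of $k$ carries an inserted $1$, so every length-$k$ window contains at least one $1$. Hence every run of $0$'s has length at most $k-1$. This holds between consecutive $1$'s and also at the beginning of the codeword, where at most $k-1$ leading $0$'s precede position $k$. At the end there are no trailing zeros at all, because the codeword ends with a $1$. This is exactly the $\mathrm{RLL}(0,k-1)$ condition, and $d=0$ imposes no constraint.

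No step is a real obstacle. The only point needing care is the boundary behaviour: the constraint must hold both inside each codeword and, for later use in Section~\ref{section: improved encoding}, across concatenations of codewords. Placing the inserted $1$ at the end of each block (rather than at the start) ensures that any concatenation of codewords still has $1$'s at every $k$-th position, so zero runs of length at most $k-1$ are preserved across block boundaries. If $k\nmid n$, pad the last partial block with $1$'s; this does not change the asymptotic rate.
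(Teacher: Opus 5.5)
Your proposal is correct and is essentially the paper's proof: split the input into blocks of length $k-1$, append a $1$ to each block, and note that every window of $k$ consecutive bits contains one of the appended $1$'s. This gives rate $\frac{k-1}{k}$ and zero runs of length at most $k-1$. Your treatment of padding, via the code length rather than the information length, and your remark about concatenations are small variations the paper handles implicitly; neither changes the argument.
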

\begin{proof}
    Consider a binary input word $\boldv\in\{0,1\}^l$, and assume for simplicity that $k-1$ divides $l$ (otherwise, pad the last block with at most $k-2$ dummy bits, which does not affect the asymptotic rate).
    Partition $\boldv$ into $\frac{l}{k-1}$ consecutive blocks of length $k-1$, say $\boldv^{(1)},\boldv^{(2)},\dots,\boldv^{(l/(k-1))}$, and append a~$1$ to every block:
    \[
        \boldx^{(i)} \triangleq \boldv^{(i)}1,\qquad i=1,\dots,\frac{l}{k-1}.
    \]
    The final codeword is the concatenation $\boldx \triangleq \boldx^{(1)}\boldx^{(2)}\cdots \boldx^{(l/(k-1))}$, and therefore has length $\frac{lk}{k-1}$, yielding rate $\frac{l}{lk/(k-1)}=\frac{k-1}{k}$.
    Since every length-$k$ block ends with a $1$, any $k$ consecutive bits contain a~$1$ and hence any zero run has length at most $k-1$; in particular, $\boldx$ satisfies the $\mathrm{RLL}(0,k-1)$ constraint.
    Decoding simply reverses this construction: split the received word into length-$k$ blocks and delete the last bit of each block.
\end{proof}
While Lemma~\ref{lemma: there exists an RLL fixed length code} established the existence of an $\mathrm{RLL}(0, k-1)$ fixed-length block code with rate $\frac{k-1}{k}$, the following lemma introduces an $\mathrm{RLL}(0, k-2)$ fixed-length block code achieving a rate of $\frac{k-2}{k}$.
\begin{lemma}\label{lemma: there exists another RLL fixed length code}
    There exists an $\mathrm{RLL}(0,k-2)$ fixed-length block code with rate $\frac{k-2}{k}$.
\end{lemma}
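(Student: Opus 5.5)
The construction mirrors the proof of Lemma~\ref{lemma: there exists an RLL fixed length code}, except that each block of length $k$ now carries $k-2$ information bits and two inserted $1$'s, rather than $k-1$ information bits and a single $1$. Let $\boldv\in\{0,1\}^l$ be the input word and assume $k-2$ divides $l$; otherwise we pad the last block with at most $k-3$ dummy bits, which does not affect the asymptotic rate. We partition $\boldv$ into $\frac{l}{k-2}$ blocks $\boldv^{(1)},\ldots,\boldv^{(l/(k-2))}$ of length $k-2$. Each block is encoded as
\[
\boldx^{(i)}\triangleq \boldv^{(i)}11, \qquad i=1,\ldots,\tfrac{l}{k-2},
\]
and the codeword is the concatenation $\boldx=\boldx^{(1)}\cdots\boldx^{(l/(k-2))}$. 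Its length is $\frac{lk}{k-2}$, so the rate is $\frac{k-2}{k}$.

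To verify the $\mathrm{RLL}(0,k-2)$ constraint, it suffices to show that every window of $k-1$ consecutive bits of $\boldx$ contains a $1$. Inside $\boldx^{(i)}$, the only zeros lie in the first $k-2$ positions, namely those of $\boldv^{(i)}$. Each such stretch is bounded on the right by the $1$'s at positions $k-1$ and $k$ of $\boldx^{(i)}$, and on the left by the trailing $1$'s of $\boldx^{(i-1)}$, or by the start of the word when $i=1$. A zero run therefore cannot cross a block boundary, and it has length at most $k-2$. Decoding is immediate: split the received word into length-$k$ blocks and delete the last two bits of each block.

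The argument is routine, and the only real choice is where to place the two $1$'s. The placement matters for the later use of this code, where a marker of $k-1$ or $k$ zeros must remain distinguishable. Putting both $1$'s at the end of each block gives the cleanest bound on zero runs, and it also guarantees that every block ends with a $1$. Alternatively, the $1$'s could be placed at positions $\lceil k/2\rceil$ and $k$, which would give runs of at most roughly $k/2$. That would also suffice, but it is not needed here. The one point I would double-check is the boundary analysis at block junctions, which is handled above.
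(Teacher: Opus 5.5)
Your proof is correct. Each block $\boldv^{(i)}11$ has length $k$ and carries $k-2$ information bits. Since every block ends in $11$, each maximal $0$-run is confined to a single $\boldv^{(i)}$ and so has length at most $k-2$.

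The paper uses the same idea of adding two $1$'s per block of $k-2$ information bits, but it places them at both ends, $\boldx^{(i)}=1\boldv^{(i)}1$. For the lemma alone the two placements are interchangeable: they give the same rate, the same maximal run length $k-2$, and equally simple decoding. So putting both $1$'s at the end does not give a ``cleaner'' bound.

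The difference matters exactly in the later use you mention, and there your remark points the wrong way. In Lemma~\ref{lemma:a run of consecutive zeros occurs}, markers of $\beta=k$ zeros are inserted at block boundaries of $\bar{\boldq}$. The proof needs the bit immediately before \emph{and} the bit immediately after each marker to be $1$, so every block must both begin and end with a $1$. With your blocks, the bit following a marker is the first bit of the next $\boldv^{(i+1)}$, which may be $0$. A marker can then merge with up to $k-2$ further zeros and create length-$\beta$ zero windows that are not markers, which breaks the ``only if'' direction of that lemma. The paper's symmetric placement is what makes the markers unambiguous.
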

\begin{proof}
    Consider a binary input word $\boldv\in\{0,1\}^l$, and assume for simplicity that $k-2$ divides $l$ (otherwise, pad the last block with at most $k-3$ dummy bits, which does not affect the asymptotic rate).
    Partition $\boldv$ into $\frac{l}{k-2}$ consecutive blocks of length $k-2$, say $\boldv^{(1)},\boldv^{(2)},\dots,\boldv^{(l/(k-2))}$, and prepend and append a~$1$ to each block:
    \[
        \boldx^{(i)} \triangleq 1\boldv^{(i)}1,\qquad i=1,\dots,\frac{l}{k-2}.
    \]
    The final codeword is the concatenation $\boldx \triangleq \boldx^{(1)}\boldx^{(2)}\cdots \boldx^{(l/(k-2))}$, and therefore has length $\frac{lk}{k-2}$, yielding rate $\frac{l}{lk/(k-2)}=\frac{k-2}{k}$.
    Moreover, since every length-$k$ block begins and ends with a $1$, any $k$ consecutive bits contain two $1$'s and hence any zero run has length at most $k-2$; in particular, $\boldx$ satisfies the $\mathrm{RLL}(0,k-2)$ constraint.
    Decoding simply reverses this construction: split the received word into length-$k$ blocks and delete the first bit and the last bit of each block to recover~$\boldv$.
\end{proof}
Lemmas~\ref{lemma: there exists an RLL fixed length code} and \ref{lemma: there exists another RLL fixed length code} demonstrate two distinct methods for constructing fixed-length block codes that achieve an asymptotic rate of $1$ as $k \to \infty$. 
Furthermore, these two constructions are tailored to serve different purposes within the overall encoding process in Section~\ref{section: improved encoding}.
To simplify our exposition and explicitly distinguish between these two schemes, we adopt the following terminology:
\begin{itemize}
\item We refer to the $\mathrm{RLL}(0,k-1)$ fixed-length block code with rate $\frac{k-1}{k}$ from Lemma~\ref{lemma: there exists an RLL fixed length code} as the \textit{$\mathrm{RLL}(0,k-1)$ code}, and its corresponding operations as the \textit{$\mathrm{RLL}(0,k-1)$ encoder and decoder}.
\item We refer to the $\mathrm{RLL}(0,k-2)$ fixed-length block code with rate $\frac{k-2}{k}$ from Lemma~\ref{lemma: there exists another RLL fixed length code} as the \textit{$\widetilde{\mathrm{RLL}}(0,k-2)$ code}, and its corresponding operations as the \textit{$\widetilde{\mathrm{RLL}}(0,k-2)$ encoder and decoder}.
\end{itemize}
\subsection{Encoding}\label{section: improved encoding}

The encoding process starts with an inner code designed to align fragments. 
Once fragments are aligned, their missing lengths and positions are known, making this fundamentally an erasure channel (see proofs in Section~\ref{section:calculation of rate}).
By employing a random binary linear code, we can recover these erasures efficiently in polynomial time—essentially by using Gaussian elimination to solve a system of linear equations over $\mathbb{F}_2$. 
The parameters for this code are defined below.
\begin{definition}\label{definition:random linear erasure code}
Fix sufficiently small constants $\delta,\eta>0$ and a constant positive integer~$m$ (construction parameters that will be optimized later), and let $\gamma\triangleq(1+\delta)m$.
Further, let $n_{\mathrm{er}} \triangleq \frac{(\beta-1)n}{m\beta}$ for~$\beta=\beta(n)$ such that~$\beta=o(\log(n))$ and~$\beta=\omega(1)$.
We let $\cC_{\mathrm{er}}$ be a \emph{random binary linear code} with a parity-check matrix $H\in\mathbb{F}_2^{r\times n_{\mathrm{er}}}$ whose entries are chosen independently and uniformly from $\mathbb{F}_2$,
where $r=\left( 1-(\alpha\gamma+1)e^{-\alpha\gamma}+3\eta\right)n_{\mathrm{er}}$ and where $\alpha$ is the channel parameter as defined in the beginning of Section~\ref{Section:problem definition}.
\end{definition}

The following lemma proves that the matrix $H$ achieves full row rank with high probability, thereby establishing the rate of $\cC_{\mathrm{er}}$.

\begin{lemma}\label{lemma:full rank random parity matrix}
Let $H\in\mathbb{F}_2^{r\times n_{\mathrm{er}}}$ be a random binary matrix
whose entries are chosen independently and uniformly from $\mathbb{F}_2$,
where $r\le n_{\mathrm{er}}$. Then, $\operatorname{rank}(H)=r$ with high probability.
Consequently, the random binary linear code $\cC_{\mathrm{er}}$ defined by the
parity-check matrix $H$ has rate
$R_{\mathrm{er}}=1-\frac{r}{n_{\mathrm{er}}}$.
In particular, if
$r=\left(1-(\alpha\gamma+1)e^{-\alpha\gamma}+3\eta\right)n_{\mathrm{er}}$,
then
$R_{\mathrm{er}}=(\alpha\gamma+1)e^{-\alpha\gamma}-3\eta$
with high probability.
\end{lemma}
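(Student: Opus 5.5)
We will bound the probability that $H$ is rank-deficient by a union bound over nonzero combinations of its rows. The matrix $H$ has rank less than $r$ if and only if some nonzero vector $\boldy\in\mathbb{F}_2^{r}$ satisfies $\boldy^\top H=\mathbf{0}$.

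Fix such a nonzero $\boldy$. Each column of $H$ is an independent uniform vector in $\mathbb{F}_2^r$, so the inner product of $\boldy$ with that column is a uniform bit, and these bits are independent across the $n_{\mathrm{er}}$ columns. Hence $\Pr(\boldy^\top H=\mathbf{0})=2^{-n_{\mathrm{er}}}$. A union bound over the $2^r-1$ nonzero vectors gives
\begin{equation*}
\Pr\bigl(\rank(H)<r\bigr)\le (2^r-1)\,2^{-n_{\mathrm{er}}}< 2^{-(n_{\mathrm{er}}-r)}.
\end{equation*}

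This bound is the only delicate point. The bare condition $r\le n_{\mathrm{er}}$ is not enough, since for $r=n_{\mathrm{er}}$ the full-rank probability is only about $0.29$, so "with high probability" needs $n_{\mathrm{er}}-r\to\infty$. In our regime this holds. Since $\gamma>0$ and $\alpha>0$ are constants, $(\alpha\gamma+1)e^{-\alpha\gamma}$ is a constant in $(0,1)$, and $\eta$ is chosen small enough that $(\alpha\gamma+1)e^{-\alpha\gamma}-3\eta>0$. Therefore $n_{\mathrm{er}}-r=\bigl((\alpha\gamma+1)e^{-\alpha\gamma}-3\eta\bigr)n_{\mathrm{er}}$ is linear in $n_{\mathrm{er}}=\frac{(\beta-1)n}{m\beta}=\Theta(n)$. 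The failure probability is then $2^{-\Theta(n)}\to 0$. For the general statement we read "with high probability" as requiring $n_{\mathrm{er}}-r\to\infty$, and otherwise we use the more precise bound $\Pr(\rank(H)=r)=\prod_{i=0}^{r-1}(1-2^{i-n_{\mathrm{er}}})\ge 1-2^{r-n_{\mathrm{er}}}$.

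On the full-rank event, $\cC_{\mathrm{er}}=\ker H$ is a subspace of dimension $n_{\mathrm{er}}-r$. Its rate is therefore $R_{\mathrm{er}}=1-\frac{r}{n_{\mathrm{er}}}$. Substituting the given value of $r$ yields $R_{\mathrm{er}}=(\alpha\gamma+1)e^{-\alpha\gamma}-3\eta$ with high probability.
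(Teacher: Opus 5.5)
Your proof is correct. It reaches the same bound as the paper, $\Pr(\rank(H)<r)\le 2^{r-n_{\mathrm{er}}}$, through a slightly different union bound.

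\begin{itemize}
\item The paper exposes the rows one at a time. It sums, over $i=1,\dots,r$, the probability $2^{i-1-n_{\mathrm{er}}}$ that the $i$-th row lies in the span of the rows before it.
\item You instead take a union bound over the $2^r-1$ candidate dependency vectors $\boldy\neq\mathbf{0}$, each with probability $2^{-n_{\mathrm{er}}}$. This uses the fact that $\boldy^\top H$ is uniform on $\mathbb{F}_2^{n_{\mathrm{er}}}$ for every fixed nonzero $\boldy$.
\end{itemize}

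The sequential argument leads naturally to the exact product formula $\prod_{i=0}^{r-1}(1-2^{i-n_{\mathrm{er}}})$, which you also quote. Your version is a one-line counting argument that needs no conditioning on earlier rows.

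Your caveat about the hypothesis is correct and worth keeping. The condition $r\le n_{\mathrm{er}}$ alone does not give ``with high probability'': for $r=n_{\mathrm{er}}$ the limiting full-rank probability is only about $0.289$. The paper's proof also gets a vanishing failure probability only by substituting the specific $r$ with $n_{\mathrm{er}}-r=\bigl((\alpha\gamma+1)e^{-\alpha\gamma}-3\eta\bigr)n_{\mathrm{er}}$. Your proof therefore states explicitly the linear-gap requirement that the lemma's general first sentence leaves unstated.
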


\begin{proof}
The matrix $H$ fails to have full row rank only if some row lies in the span of
the previous rows. Conditioned on the first $i-1$ rows being linearly
independent, their span contains $2^{i-1}$ vectors out of the total
$2^{n_{\mathrm{er}}}$ vectors in $\mathbb{F}_2^{n_{\mathrm{er}}}$. Hence the
probability that the $i$-th row lies in this span is
$2^{i-1-n_{\mathrm{er}}}$.
By the union bound,
$\Pr(\operatorname{rank}(H)<r)\le\sum_{i=1}^{r}2^{i-1-n_{\mathrm{er}}}\le2^{r-n_{\mathrm{er}}}$.
Since
$r-n_{\mathrm{er}}=-\left((\alpha\gamma+1)e^{-\alpha\gamma}-3\eta\right)n_{\mathrm{er}}$,
the probability above tends to zero exponentially fast in
$n_{\mathrm{er}}$, provided $\eta$ is sufficiently small. Therefore,
$\operatorname{rank}(H)=r$ with high probability.
Finally, when $\operatorname{rank}(H)=r$, the null space of $H$ has dimension
$n_{\mathrm{er}}-r$, so the corresponding binary linear code has rate
$1-r/n_{\mathrm{er}}$. Substituting the value of $r$ gives
$R_{\mathrm{er}}=(\alpha\gamma+1)e^{-\alpha\gamma}-3\eta$.
\end{proof}
With $\cC_{\mathrm{er}}$ established, the next step in the encoding process is constructing the pilot sequence.
Let $\mathbf{q}$ be a De Bruijn sequence of length $\frac{n}{m}(1-\frac{\beta}{(1+\delta/2)\log n})\frac{\beta-2}{\beta}$, where $\delta$ is a small constant that satisfies $\delta > \frac{2}{\beta-2}$ and $\beta$ is as in Definition~\ref{definition:random linear erasure code}, i.e.~$\beta=o(\log(n))$ and~$\beta=\omega(1)$. 
This sequence $\mathbf{q}$ has the property that every sequence of length $\log(\frac{n(\beta-2)}{m\beta}(1-\frac{\beta}{(1+\delta/2)\log n}))$ appears in $\mathbf{q}$ exactly once.
Subsequently, we apply the $\widetilde{\text{RLL}}(0,\beta-2)$ encoder from Lemma~\ref{lemma: there exists another RLL fixed length code} to~$\mathbf{q}$, yielding a modified De Bruijn sequence $\bar{\mathbf{q}}$ of length $\frac{n}{m}(1-\frac{\beta}{(1+\delta/2)\log n})$. 
For simplicity, we further assume that $\beta|(1+\delta/2)\log n$ and define the modified pilot sequence $\mathbf{p}$ as follows.
\begin{definition}\label{Definition: marker sequence for improved encoding}
    Define the pilot sequence $\mathbf{p}=(\mathbf{p}[0],\mathbf{p}[1],\dots,\mathbf{p}[n/m-1])$ as
\begin{align*}
    \mathbf{p}[t(1+\delta/2)\log n+j]=
    \begin{cases}
    0, & 0 \leq j \leq \beta-1, \\
    \bar{\mathbf{q}}[t(1+\delta/2)\log n + (j-\beta)], & \beta \leq j \leq (1+\delta/2)\log n-1
\end{cases},
\end{align*}
where $t \in \{0,1,\dots, \frac{n}{(1+\delta/2)m\log n}-1\}$ and $j \in \{0, 1, \dots, (1+\delta/2)\log n-1\}$.
\end{definition}

In Definition~\ref{Definition: marker sequence for improved encoding} we systematically embed runs of $\beta$ consecutive zeros, called \textit{markers}, at fixed intervals throughout the pilot sequence $\mathbf{p}$. 
The process of constructing~$\boldp$ from the De Bruijn sequence~$\boldq$ is therefore:
\begin{align}\label{equation:pilot,debruijn and modified de bruijn}
\boldq
\xrightarrow{\quad \widetilde{\text{RLL}}(0,\beta-2) \quad}
\bar{\boldq}
\xrightarrow{\quad \text{embed~$\beta$ $0$-runs} \quad}
\boldp
\end{align}
Based on this process, we establish the following key property of $\mathbf{p}$.
\begin{lemma}\label{lemma:a run of consecutive zeros occurs}
A run of $\beta$ consecutive zeros in~$\boldp$ occurs if and only if it corresponds to an embedded marker.
\end{lemma}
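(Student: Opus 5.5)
The plan is to check the two directions of the ``if and only if'' directly from the block structure of $\boldp$ in Definition~\ref{Definition: marker sequence for improved encoding}. That definition splits $\boldp$ into consecutive periods of length $L\triangleq(1+\delta/2)\log n$. Each period consists of a marker of $\beta$ zeros followed by a segment of $L-\beta$ consecutive bits of $\bar{\boldq}$. The ``if'' direction is immediate: every embedded marker is, by definition, a run of $\beta$ zeros.

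For the ``only if'' direction, I will argue that any run of $\beta$ consecutive zeros must coincide with a marker. First, I will record the structure of the $\bar{\boldq}$-segments. Since $\beta\mid L$, we also have $\beta\mid(L-\beta)$, so each segment consists of whole $\widetilde{\mathrm{RLL}}(0,\beta-2)$ blocks. Here I apply Lemma~\ref{lemma: there exists another RLL fixed length code} with block length $\beta$, so each block has the form $1\boldv^{(i)}1$ with $\boldv^{(i)}$ of length $\beta-2$. I also need to confirm that the segments start at block boundaries of $\bar{\boldq}$. Segment $t$ occupies indices $t(L-\beta),\dots,(t+1)(L-\beta)-1$ of $\bar{\boldq}$, which is a multiple of $\beta$. (The indexing in the definition should be read this way, since the lengths match: $\frac{n}{mL}(L-\beta)=\frac{n}{m}(1-\beta/L)$.) Consequently, every $\bar{\boldq}$-segment begins with a $1$, ends with a $1$, and has internal zero runs of length at most $\beta-2$.

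Second, I will take a run of $\beta$ zeros in $\boldp$ and show that it meets no bit of any $\bar{\boldq}$-segment. Suppose it contained such a bit. Since the run has no $1$'s, it cannot cross a segment's first or last bit, because those are $1$'s. The run therefore lies strictly inside a single segment. But zero runs inside a segment have length at most $\beta-2<\beta$, a contradiction. The run is thus contained in the union of the marker positions. Markers in distinct periods are separated by a segment containing $1$'s. A contiguous run of zeros therefore lies within a single marker of length exactly $\beta$, and so it equals that marker. This also shows that each maximal zero run bordering a marker is exactly the marker, because the marker is flanked by the $1$ ending the previous segment and the $1$ starting the next one.

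The main obstacle is the alignment bookkeeping: checking that segment boundaries coincide with RLL block boundaries, so that each segment really begins and ends with $1$. The tools for this are the divisibility assumption $\beta\mid L$ and the matching of lengths. Two edge cases also need care. The first is the last period, which ends in a $1$. The second is the very first marker at index $0$, which has no left neighbour; this does not affect the argument, since the run still cannot extend past the segment that follows it.
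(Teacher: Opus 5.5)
Your proof is correct and follows essentially the same route as the paper. Both arguments use $\beta\mid(1+\delta/2)\log n$ to align the markers with the length-$\beta$ blocks $1\boldv 1$ of $\bar{\boldq}$, so every $\bar{\boldq}$-segment is flanked by $1$'s and has internal zero runs of length at most $\beta-2$; you merely replace the paper's two cases (run inside $\bar{\boldq}$ versus run overlapping a marker boundary) with a single containment argument, and your reading of the segment index as $t\bigl((1+\delta/2)\log n-\beta\bigr)+(j-\beta)$ is the right one, since the literal index in Definition~\ref{Definition: marker sequence for improved encoding} runs past the end of $\bar{\boldq}$ (the block alignment holds either way).
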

\begin{proof}
The ``if'' direction is immediate: every embedded marker is a $0$-run of length $\beta$, and therefore yields a run of $\beta$ consecutive zeros in $\boldp$.
To prove the converse, suppose for contradiction that there exists a run
\[
    (\boldp[j] , \boldp[j+1] , \ldots , \boldp[j+\beta-1])
    =
    \mathbf{0}_\beta,
\]
where~$\mathbf{0}_\beta$ is a vector with~$\beta$ zero, that does not correspond to an embedded marker for some $j$.
We split into cases according to the position~$j$ from which this run originates.

Recall that $\bar{\boldq}$ is obtained by the
$\widetilde{\mathrm{RLL}}(0,\beta-2)$ encoder by Lemma~\ref{lemma: there exists another RLL fixed length code}, which maps every block
$\boldv\in\{0,1\}^{\beta-2}$ in~$\boldq$ to $(1, \boldv ,1)$.
As $\bar{\boldq}$ is a concatenation of length-$\beta$ blocks and each block begins and ends with a $1$, no length-$\beta$ run of zeros can lie completely inside $\bar{\boldq}$.

It remains to consider a run which intersects a marker but is not exactly that marker. 
Since the marker itself has length $\beta$, such a run must cross one of the two boundaries between a marker and bits that originate in~$\bar{\boldq}$. 
By the assumption that $\beta \mid (1+\delta/2)\log n$, the markers are aligned with the length-$\beta$ block structure of $\bar{\boldq}$. 
Therefore the symbol of $\bar{\boldq}$ immediately before a marker is the final $1$ of some block of the form
$(1,\boldv, 1)$, and the symbol of $\bar{\boldq}$ immediately after a marker is the initial~$1$ of some block of the same form.
Consequently, any length-$\beta$ window that overlaps a marker but is not exactly the marker contains one of these boundary $1$'s, and hence cannot be equal to $\mathbf{0}_\beta$.
\end{proof}
Next, we apply the $\text{RLL}(0,\beta-1)$ fixed-length block encoding algorithm from Lemma~\ref{lemma: there exists an RLL fixed length code} to all codewords of $\cC_{\mathrm{er}}$, yielding a code $\bar{\cC}_{\mathrm{er}}$; the block length of~$\bar{\cC}_{\mathrm{er}}$ is $n/m$ and its rate is $(1-\frac{1}{\beta})R_{\mathrm{er}}$ and hence its size is~$2^{\frac{n}{m}(1-\frac{1}{\beta})R_{\mathrm{er}}}$.
To construct the final code, we interleave the pilot sequence~$\mathbf{p}$ with the codewords of $\bar{\cC}_{\mathrm{er}}$ in a manner that resembles the interleaving from~\cite{shomorony2021torn} given in Section~\ref{section:interleaved pilot scheme}. 
Rather than interleaving~$m-1$ codewords from a subset of a random coset of~$\cC_{\mathrm{er}}$ (which has no substring intersection with their pilot sequence) with a De Bruijn sequence as done in~\cite{shomorony2021torn}, we interleave~$m-1$ codewords of~$\bar{\cC}_{\mathrm{er}}$ with our pilot sequence (Definition~\ref{Definition: marker sequence for improved encoding}).

Formally, we fix an arbitrary indexing~$\bar{\cC}_{\mathrm{er}}=\{\bar{\bolds}_1,\ldots,\bar{\bolds}_{|\bar{\cC}_{\mathrm{er}}|}\}$, and write our code as
\begin{align}\label{eq:maincode}
   \mathcal{C}\triangleq \{\boldc_v=(\boldc_{v}[0], \boldc_{v}[1], \dots, \boldc_{v}[n-1])~\vert~v:[m-1]\to[|\bar{\cC}_{\mathrm{er}}|]\},
\end{align}
where for each $t \in \{0, 1, \dots, n/m-1\}$,
\begin{equation}\label{equation: interleved eq for local alignment}
    \boldc_{v}[mt+j] = 
    \begin{cases} 
    \mathbf{p}[t], & j = 0, \\
    \bar{\bolds}_{v(j)}[t], & j = 1, 2, 3, \dots, m-1
    \end{cases}.
\end{equation}
Since the encoding process above is injective (i.e., $\boldc_{v}\ne\boldc_{v'}$ whenever~$v\ne v'$), the size of the resulting $\mathcal{C}$~\eqref{eq:maincode} is $ |\bar{\cC}_{\mathrm{er}}|^{m-1}=(2^{\frac{n}{m}(1-\frac{1}{\beta})R_{\mathrm{er}}})^{m-1} = 2^{(1-\frac{1}{\beta})(1-\frac{1}{m})nR_{\mathrm{er}}}$.
Since~$\beta=o(\log(n))$ and~$\beta=\omega(1)$, we have $$\lim_{n\to\infty}\left(1-\frac{1}{\beta}\right)=1,$$ and hence the resulting rate tends to~$\left(1-1/m\right)R_{\mathrm{er}}$ as~$n$ tends to infinity, i.e., 
\begin{align}\label{equation:rate}
    \lim_{n\to\infty}\left(1-\frac{1}{\beta}\right)\left(1-\frac{1}{m}\right)R_{\mathrm{er}}=\left(1-\frac{1}{m}\right)R_{\mathrm{er}}.
\end{align}

\subsection{Decoding}\label{Section:improved decoding}
The decoding process essentially reduces the torn paper channel into a (not memoryless) erasure channel by exploiting the interleaved structure of codewords.
At a high level, this is achieved through a three-step procedure:

\begin{enumerate}
    \item[\textbf{A}] \textbf{Local Alignment:} The decoder identifies the positions of the~$\boldp$ bits (see~\eqref{equation:pilot,debruijn and modified de bruijn}) within all fragments of length at least $(1+\delta)m\log n$ and discards all other fragments.
    For each such fragment, this is done by first de-interleaving it to obtain sequences~$\boldw_0,\ldots,\boldw_{m-1}$, and identifying the unique~$\boldw_j$ which contains~$0$-runs of length~$\beta$ as the one which contains bits from~$\boldp$.
    \item[\textbf{B}] \textbf{Global Alignment:} In each sufficiently long fragment, we remove the~$0$-runs of length~$\beta$ from the  $\boldw_j$ found in step \textbf{A}, obtaining a substring from $\bar{\boldq}$, which is then decoded to obtain a substring of~$\boldq$ (again see~\eqref{equation:pilot,debruijn and modified de bruijn}).
    This substring of~$\boldq$---being a De Bruijn subsequence---is used to determine the unique global position of the fragment. 
    \item[\textbf{C}] \textbf{RLL and Erasure Decoding:} After global alignment, we obtain an estimate~$\hat{\boldc}_v$ of $\boldc_v$, and extract substrings of the codewords in $\bar{\cC}_{\mathrm{er}}$ from~$\hat{\boldc}_v$ (see~\eqref{equation: interleved eq for local alignment}). 
    These substrings are processed by an RLL decoding algorithm to remove run-length constraints, followed by standard erasure decoding for $\cC_{\mathrm{er}}$ to recover the original message.
\end{enumerate}

\begin{figure}[ht]
    \centering
    \includegraphics[width=0.6\linewidth]{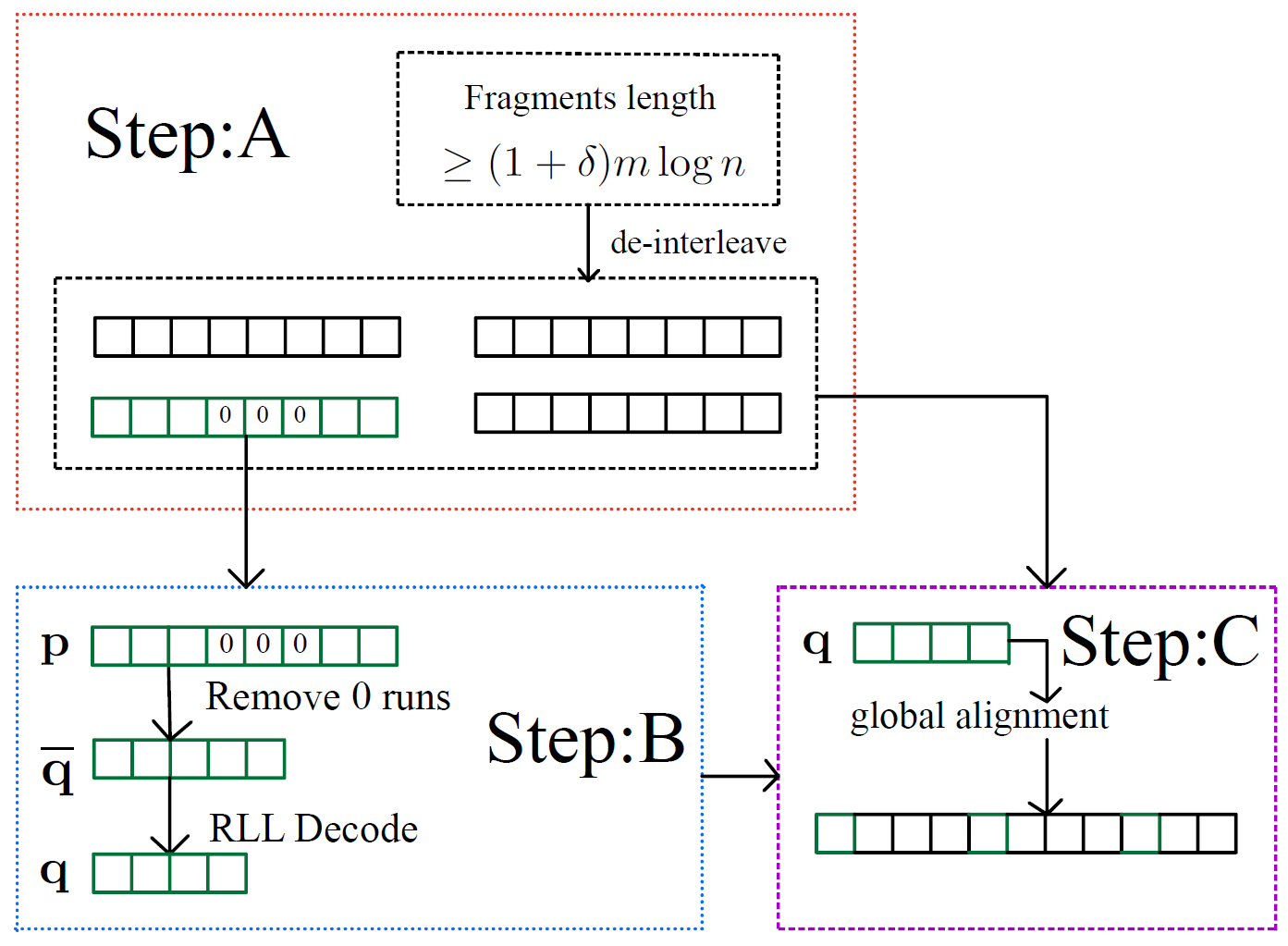}
    \caption{A sketch of the decoding process.}
    \label{Figure: decoding}
\end{figure}

Step \textbf{A} (local alignment) is detailed in Algorithm~\ref{algorithm for finding locations of p in improved decoding}, whose correctness is established in Lemma~\ref{lemma: find pilot bits in rll encoding} below. 
Throughout the proof, recall that a transmitted word~$\boldc_v$ is obtained by interleaving~$\boldp$ with codewords~$\bar{\bolds}_{v(1)},\ldots,\bar{\bolds}_{v(m-1)}$ from~$\bar{\cC}_{\text{er}}$ for some~$v:[m-1]\to |\bar{\cC}_\text{er}|$.
Additionally, let~$\cF$ be the set of all fragments whose length is at least~$(1+\delta)m\log n$.

\begin{algorithm}
\caption{Local alignment.}
\begin{algorithmic}[1]\label{algorithm for finding locations of p in improved decoding}
\STATE \textbf{Input:} Fragment $\boldf=(\boldf[0],\boldf[1],\dots,\boldf[l-1])$ of length $l \geq (1+\delta)m\log n$.
\STATE \textbf{Goal:} Identify the substring $\boldp_{\boldf}$ of the pilot sequence~$\boldp$ in the fragment~$\boldf$, and a substring $\boldy_{\boldf}^i$ of the codeword~$\bar{\bolds}_{v(i)}$ in~$\boldf$, for each~$i\in[m-1]$.
\STATE Initialize empty strings $\boldw_0,\boldw_1,\dots,\boldw_{m-1}$.
\FOR{$i = 0$ to $l-1$} \label{line:deinterleave_start1}
    \STATE Append $\boldf[i]$ to $\boldw_{i \bmod m}$\label{line:deinterleave_start2}
\ENDFOR
\FOR{$j = 0$ to $m-1$}
    \IF{$\boldw_j$ contains a~$0$-run of length~$\beta$}\label{line:identify_start1}
        \STATE $\mathbf{p}_\boldf \leftarrow \boldw_{j \bmod m}$ \label{line:identify_start2}
        \STATE $\mathbf{y}_\boldf^1,\mathbf{y}_\boldf^2,\dots,\mathbf{y}_\boldf^{m-1}\leftarrow \boldw_{(j+1) \bmod m}\dots \boldw_{(j+m-1) \bmod m}$
    \ENDIF
\ENDFOR
\end{algorithmic}
\end{algorithm}
\begin{lemma}\label{lemma: find pilot bits in rll encoding}
    For every~$\boldf\in \cF$, Algorithm~\ref{algorithm for finding locations of p in improved decoding} correctly identifies the bits~$\boldp_{\boldf}$ of $\boldp$, and the bits $\boldy_\boldf^i$ of every $\bar{\bolds}_{v(i)}$, within $\boldf$.
\end{lemma}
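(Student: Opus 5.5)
The proof splits into three steps: de-interleaving, existence of a marker in the pilot stream, and exclusion of markers from the data streams. Let $\boldf\in\cF$ have length $l\ge(1+\delta)m\log n$, and let $\boldf$ begin at position $mt_0+j_0$ of $\boldc_v$.

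\emph{De-interleaving.} By \eqref{equation: interleved eq for local alignment}, a position $mt+j$ of $\boldc_v$ carries a bit of $\boldp$ exactly when $j=0$, and a bit of $\bar{\bolds}_{v(j)}$ otherwise. Lines~\ref{line:deinterleave_start1}--\ref{line:deinterleave_start2} of Algorithm~\ref{algorithm for finding locations of p in improved decoding} therefore split $\boldf$ into $m$ strings. Each $\boldw_i$ is a contiguous substring of a single stream, namely $\boldp$ when $i\equiv -j_0 \pmod m$ and $\bar{\bolds}_{v((j_0+i)\bmod m)}$ otherwise. Each of these substrings has length at least $\lfloor l/m\rfloor\ge(1+\delta)\log n-1$. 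The remaining task is to show that exactly one index $j$ passes the test in line~\ref{line:identify_start1}, and that this index is the pilot index $j^\ast\equiv -j_0 \pmod m$. The cyclic assignment in the algorithm then labels the other strings as $\boldy^1_\boldf,\ldots,\boldy^{m-1}_\boldf$ in the correct order.

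\emph{The pilot stream contains a marker.} By Definition~\ref{Definition: marker sequence for improved encoding}, markers of length $\beta$ start at every multiple of $L\triangleq(1+\delta/2)\log n$ in $\boldp$. Any contiguous window of $\boldp$ of length at least $L+\beta-1$ must contain a full marker. So it suffices to verify
\[
(1+\delta)\log n-1\ge(1+\delta/2)\log n+\beta-1,
\]
which is equivalent to $\tfrac{\delta}{2}\log n\ge\beta$. This holds for large $n$ because $\beta=o(\log n)$. I expect this length bookkeeping, including the $\pm1$ from floors and the end of $\boldp$, to be the only delicate point. If the margin is tight, one can absorb the slack into $\delta$.

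\emph{Data streams contain no marker.} Each codeword of $\bar{\cC}_{\mathrm{er}}$ is produced by the $\mathrm{RLL}(0,\beta-1)$ encoder of Lemma~\ref{lemma: there exists an RLL fixed length code}, so it has no zero-run of length $\beta$. The same then holds for every substring $\boldw_i$ with $i\ne j^\ast$. Hence exactly one $\boldw_j$ contains a $\beta$-run of zeros, and it is the pilot string, which gives $\boldp_\boldf$ correctly. As a remark for the later global-alignment step, Lemma~\ref{lemma:a run of consecutive zeros occurs} shows that such runs inside $\boldp_\boldf$ are precisely the embedded markers.
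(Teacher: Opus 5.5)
Your proposal is correct and follows the same route as the paper's proof. You de-interleave, use the $\mathrm{RLL}(0,\beta-1)$ constraint to exclude $\beta$-long zero runs from the $\bar{\cC}_{\mathrm{er}}$ streams, and guarantee a full marker in the pilot stream because that stream is longer than $(1+\delta/2)\log n+\beta-1$ once $\beta=o(\log n)$; your treatment of $\lfloor l/m\rfloor$, the end of $\boldp$, and the check that the cyclic relabeling places each $\boldy^i_{\boldf}$ inside $\bar{\bolds}_{v(i)}$ is slightly more careful than the paper's version.
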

\begin{proof}
The decoding of fragment $\boldf\in \cF$ begins by a de-interleaving operation that extracts $m$ candidate substrings, denoted $\boldw_0,\ldots,\boldw_{m-1}$ (Lines~\ref{line:deinterleave_start1}-\ref{line:deinterleave_start2}). 
On the one hand, codewords in $\bar{\cC}_{\mathrm{er}}$ satisfy an $\text{RLL}(0,\beta-1)$ constraint, ensuring that no codeword contains a $0$-run of length~$\beta$. 
On the other hand, each codeword $\boldc_v$ contains a bit from the pilot sequence $\mathbf{p}$ at every $m$-th position, and hence any fragment of $\boldc_v$ of length at least $(1+\delta)m\log n$ contains at least $(1+\delta)\log n$ bits from~$\mathbf{p}$.

By Definition~\ref{Definition: marker sequence for improved encoding} and Lemma~\ref{lemma:a run of consecutive zeros occurs}, it follows that any consecutive subsequence of length at least~$(1+\delta/2)\log n+\beta-1$ of~$\boldp$ contains a~$0$-run of length~$\beta$ which corresponds to an embedded marker.
For sufficiently large $n$, we have $$(1+\delta)\log n>(1+\delta/2)\log n+\beta-1$$ as $\beta=o(\log(n))$.
Therefore, at least one $0$-run of length~$\beta$ is located inside a unique de-interleaved sequence~$\boldw_j$ (Lines~\ref{line:identify_start1}-\ref{line:identify_start2}).
Furthermore, once the identity of the unique~$\boldw_j$ from~$\boldp$ is known, we have that~$\boldw_{(j+1)\bmod m}$ is a substring of~$\bar{\bolds}_{v(1)}$, as well as that $\boldw_{(j+2)\bmod m}$ is a substring of~$\bar{\bolds}_{v(2)}$, and so on.
\end{proof}

Once the bits~$\boldp_{\boldf}$ of the pilot sequence $\boldp$ and the bits $\mathbf{y}_\boldf^i$ of every~$\bar{\bolds}_{v(i)}$ have been located within every fragment, step~\textbf{A} (local alignment) is concluded. 
Then, step~\textbf{B} (global alignment) is conducted via the following sub-steps, which are executed in Algorithm~\ref{algorithm: finding locations of de bruijn}.
\begin{enumerate}
    \item[\textbf{B1}] \textbf{Extraction of \textit{modified} De Bruijn sequence bits:} For every~$\boldf\in \cF$, we identify and extract the substring~$\bar{\boldq}_{\boldf}$ of~$\bar{\boldq}$ inside~$\boldp_{\boldf}$.
    \item[\textbf{B2}] \textbf{Extraction of De Bruijn sequence bits:} For every~$\boldf\in \cF$, we run the $\widetilde{\mathrm{RLL}}(0,\beta-2)$ decoder over~$\bar{\boldq}_{\boldf}$ to extract the bits $\mathbf{q}_{\boldf}$ of~$\boldq$ inside~$\boldp_{\boldf}$.
    \item[\textbf{B3}] \textbf{Perform global alignment:} Determine the global positions of every~$\boldf\in \cF$ using~$\boldq_{\boldf}$.
    
\end{enumerate}

The correctness of steps \textbf{B1} and \textbf{B2} is proved in Lemma~\ref{lemma:correctly identify modified db from pilot} and Lemma~\ref{lemma: modified de brujin sequence can be decode}; the fact that~$\boldq_{\boldf}$ is sufficiently long to conduct \textbf{B3} is proved in Lemma~\ref{lemma:global alignment for qN}.

\begin{lemma}\label{lemma:correctly identify modified db from pilot}
For every~$\boldf\in \cF$, the subsequence $\bar{\mathbf{q}}_{\boldf}$ of the modified De Bruijn sequence~$\bar{\boldq}$ inside~$\boldf$ is uniquely recoverable from the received pilot fragment $\mathbf{p}_{\boldf}$.
\end{lemma}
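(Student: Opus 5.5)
The plan is to show that the positions of marker bits inside $\boldp_{\boldf}$ can be read off from the fragment alone. Deleting them then leaves exactly the $\bar{\boldq}$ bits, in their original order. The key tool is Lemma~\ref{lemma:a run of consecutive zeros occurs}: a $0$-run of length $\beta$ in $\boldp$ occurs exactly at the embedded markers. By Lemma~\ref{lemma: find pilot bits in rll encoding}, $\boldp_{\boldf}$ is a contiguous substring of $\boldp$ of length at least $(1+\delta)\log n>(1+\delta/2)\log n+\beta-1$. So, as argued in the proof of that lemma, it contains at least one complete marker.

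\textbf{Step 1: locate one full marker.} Scan $\boldp_{\boldf}$ for length-$\beta$ windows equal to $\mathbf{0}_\beta$. Such a window is a length-$\beta$ zero-run of $\boldp$, hence it is exactly an embedded marker. The scan cannot find a window that straddles a marker and a boundary~$1$. The markers are separated by $(1+\delta/2)\log n-\beta$ bits of $\bar{\boldq}$, which begin and end with $1$. Therefore each zero-run of length $\ge\beta$ in $\boldp_{\boldf}$ has length exactly $\beta$, and any run of length exactly $\beta$ that is not truncated by the fragment boundary is a marker.

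\textbf{Step 2: propagate periodically.} Fix one detected full marker, starting at index $a$ of $\boldp_{\boldf}$. By Definition~\ref{Definition: marker sequence for improved encoding}, markers occupy the residues $0,\dots,\beta-1$ modulo $(1+\delta/2)\log n$ in $\boldp$. Therefore the marker positions inside $\boldp_{\boldf}$ are exactly the indices congruent to $a,a+1,\dots,a+\beta-1$ modulo $(1+\delta/2)\log n$, intersected with the fragment. This also handles markers cut at the two ends of the fragment: a partial zero-run at the fragment boundary is classified by its residue, not by its length, which avoids confusing it with zeros from $\bar{\boldq}$. Deleting these indices yields $\bar{\boldq}_{\boldf}$.

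\textbf{Step 3: uniqueness.} The remaining bits are precisely the $\bar{\boldq}$ bits occurring in $\boldp_{\boldf}$, and deletion preserves their order. Hence $\bar{\boldq}_{\boldf}$ is a concatenation of pieces of $\bar{\boldq}$ that are contiguous in $\bar{\boldq}$, since the markers were only inserted between blocks. Moreover, $\bar{\boldq}_{\boldf}$ is determined solely by $\boldp_{\boldf}$. Any two full markers detected give the same residue class, because all markers are spaced exactly $(1+\delta/2)\log n$ apart, so the output does not depend on which marker is chosen.

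The main obstacle is the treatment of the fragment ends, where a truncated marker or a truncated $\bar{\boldq}$ block could be misread. The periodic-residue argument in Step~2 is the remedy I would rely on. I would check it carefully against the alignment assumption $\beta\mid(1+\delta/2)\log n$. That assumption also ensures that the block boundaries of $\bar{\boldq}$ can subsequently be located for Step \textbf{B2}.
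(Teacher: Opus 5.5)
Your proof is correct and rests on the same idea as the paper's: by Lemma~\ref{lemma:a run of consecutive zeros occurs}, every length-$\beta$ zero-run in $\boldp_{\boldf}$ is an embedded marker, and deleting the markers leaves $\bar{\boldq}_{\boldf}$. Your Step~2, which locates all marker positions by their residue modulo $(1+\delta/2)\log n$ relative to one detected full marker, is a worthwhile refinement: the paper only deletes complete $\beta$-zero-runs and never addresses markers truncated at the two ends of the fragment, whose leftover zeros your argument also removes.
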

\begin{proof}
By Lemma~\ref{lemma:a run of consecutive zeros occurs}, any occurrence of $0$-runs of length $\beta$  within the received fragment $\mathbf{p}_{\boldf}$ corresponds uniquely to an inserted alignment marker.
Hence, the subsequence $\bar{\mathbf{q}}_{\boldf}$ is recovered by locating all $0$-runs of length $\beta$ within $\mathbf{p}_{\boldf}$, removing them, and concatenating the remaining segments.
\end{proof}
By Lemma~\ref{lemma:correctly identify modified db from pilot}, the bits corresponding to the modified De Bruijn sequence~$\bar{\mathbf{q}}_{\boldf}$ can be identified and extracted from every~$\boldf\in \cF$.
Following that, we apply an $\widetilde{\mathrm{RLL}}(0,\beta-2)$ decoder over $\bar{\mathbf{q}}_{\boldf}$ to obtain the De Bruijn sequence bits $\mathbf{q}_{\boldf}$, (Line~\ref{line:check if 0 run}), and use~$\boldq_\boldf$ for global alignment.
The detailed process is given in Algorithm~\ref{algorithm: finding locations of de bruijn} and proved in Lemma~\ref{lemma: modified de brujin sequence can be decode}.
\begin{algorithm}
\caption{Extraction of De Bruijn subsequence $\boldq_{\boldf}$ from a pilot fragment $\boldp_{\boldf}$} 
\begin{algorithmic}[1]\label{algorithm: finding locations of de bruijn}
\STATE \textbf{Input:} Pilot fragment $\mathbf{p}_{\boldf} = (\mathbf{p}_{\boldf}[0], \dots, \mathbf{p}_{\boldf}[l-1])$ of length $l \geq (1+\delta)\log n$.
\STATE \textbf{Output:} The recovered De Bruijn subsequence $\boldq_\boldf$.
\STATE Initialize an empty string $\mathbf{q}_{\boldf}$.
\STATE Let $U$ be the set of indices of all start locations of all $0$-runs of length $\beta$ in $\boldp_{\boldf}$.
\STATE Let $u^*=\min  U$ be the index of the first zero run in $\boldp_\boldf$..\label{line:find u*}
\FOR{$i = 0$ to $l-1$}
    
    \IF{$i \notin \bigcup_{u \in U} \{u, \dots, u+\beta-1\}$ \textbf{and} $(i - u^* + 1) \bmod{\beta} \neq 0$ and $(i - u^* ) \bmod{\beta} \neq 0$}\label{line:check if 0 run}
        \STATE \gray{\# Check if~$i$ is part of a zero run \textbf{and} check if~$i$ is an $\widetilde{\mathrm{RLL}}$ additional bit, see Lemma~\ref{lemma: there exists another RLL fixed length code}.}
            \STATE Append $\mathbf{p}_{\boldf}[i]$ to $\mathbf{q}_{\boldf}$.
    \ENDIF
\ENDFOR
\RETURN $\mathbf{q}_{\boldf}$.
\end{algorithmic}
\end{algorithm}
\begin{lemma}\label{lemma: modified de brujin sequence can be decode}
    Algorithm~\ref{algorithm: finding locations of de bruijn} correctly outputs $\mathbf{q}_{\boldf}$.
\end{lemma}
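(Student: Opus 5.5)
The plan is to make precise the block structure of $\boldp$ and show that, relative to the first marker position $u^*$ returned on Line~\ref{line:find u*}, the index test on Line~\ref{line:check if 0 run} removes exactly the marker bits and the two $\widetilde{\mathrm{RLL}}$ padding bits of each block of $\bar{\boldq}$. By Definition~\ref{Definition: marker sequence for improved encoding} and the assumption $\beta\mid(1+\delta/2)\log n$, the pilot $\boldp$ is a concatenation of length-$\beta$ blocks. Each block is either a marker $\mathbf{0}_\beta$ or a block $(1,\boldv,1)$ of $\bar{\boldq}$ with $\boldv\in\{0,1\}^{\beta-2}$ a block of $\boldq$ (Lemma~\ref{lemma: there exists another RLL fixed length code}). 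In particular, every marker in $\boldp$ starts at an index divisible by $\beta$, and so does every $\bar{\boldq}$-block.

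First, I will argue that $U$ is nonempty and that it consists exactly of the start positions, in local coordinates of $\boldp_\boldf$, of the embedded markers lying fully inside $\boldp_\boldf$. Nonemptiness was already shown in the proof of Lemma~\ref{lemma: find pilot bits in rll encoding}, since $\boldp_\boldf$ has length at least $(1+\delta)\log n>(1+\delta/2)\log n+\beta-1$. The characterization is Lemma~\ref{lemma:a run of consecutive zeros occurs}, applied to windows of $\boldp$ that lie inside $\boldp_\boldf$.

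Second, I will check that the first condition of Line~\ref{line:check if 0 run} removes exactly the marker bits. This needs care at the fragment boundaries: a partial marker at the start or end of $\boldp_\boldf$ produces a zero run shorter than $\beta$ and is therefore not in $U$. I will handle this either by noting that the lemma's claim concerns the bits of $\boldq$ that can be recovered, or by observing that such leading or trailing partial-marker zeros sit at positions congruent to those of a block. The consequence to verify is the following. Since marker starts are congruent to $0$ modulo $\beta$ in global coordinates, if $\boldp_\boldf$ begins at global position $g$, then $g+u^*\equiv 0\pmod\beta$. Hence, for every local index $i$, the quantity $(i-u^*)\bmod\beta$ equals the offset of $\boldp_\boldf[i]$ inside its length-$\beta$ block. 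This holds for indices both before and after $u^*$, because Python-style nonnegative modulo is assumed.

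Third, for $i$ outside all markers, $i$ lies in some $\bar{\boldq}$-block $(1,\boldv,1)$. Offset $0$ is exactly the condition $(i-u^*)\bmod\beta=0$, which marks the leading $1$. Offset $\beta-1$ is exactly the condition $(i-u^*+1)\bmod\beta=0$, which marks the trailing $1$. The remaining offsets $1,\dots,\beta-2$ are precisely the bits of $\boldv$. The algorithm therefore appends exactly the $\boldq$-bits, in order, and this order matches $\boldq$ because markers do not split $\bar{\boldq}$-blocks.

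The main obstacle I expect is the boundary treatment. Bits before $u^*$ or after the last marker may belong to partial blocks, or even to a partial marker that does not appear in $U$. In that case the trailing zeros of a partial marker could be misread. For example, a partial marker at offsets $1,\dots,\beta-2$ has zeros that would be appended. I would try first to show that the output is still a correct contiguous substring of $\boldq$ up to these edge bits, possibly by trimming to the span between the first and last complete markers. The trimmed interval has length roughly $(\delta/2)\log n$ shorter, which still suffices for Lemma~\ref{lemma:global alignment for qN}. Otherwise I would argue that such edge zeros can be discarded because their global offsets are known.
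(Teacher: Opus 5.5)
Your core argument is the same as the paper's. The paper simply asserts that the $\widetilde{\mathrm{RLL}}$ padding bits and the markers sit at fixed positions relative to the first detected marker $u^*$, so Line~\ref{line:check if 0 run} can filter them out. Your version makes this explicit: every marker and every $\bar{\boldq}$-block starts at a multiple of $\beta$ because $\beta\mid(1+\delta/2)\log n$. Hence $(i-u^*)\bmod\beta$ is the in-block offset of $\boldp_\boldf[i]$, and offsets $0$ and $\beta-1$ are exactly the padding $1$'s. This is correct.

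The boundary issue you raise is genuine, and the paper's proof does not address it. If $\boldp_\boldf$ starts or ends inside a marker, that zero run is shorter than $\beta$, so it contributes nothing to $U$. Its zeros at in-block offsets $1,\dots,\beta-2$ then pass both modular tests and are appended. So Algorithm~\ref{algorithm: finding locations of de bruijn}, read literally, does not output $\boldq_\boldf$ in this case, and the lemma can only be proved for a corrected marker test. Your proposal leaves this step open.

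The first remedy you suggest, trimming, does not work. Write $L\triangleq(1+\delta/2)\log n$. Up to $L-1$ bits may precede $u^*$, and up to $L-1$ may follow the last complete marker. A fragment of the minimal length $(1+\delta)\log n<2L$ can contain only one complete marker, for instance when it starts one position after a marker start. In that case the span between the first and last complete markers contains no bit of $\boldq$ at all, so the loss is not ``roughly $(\delta/2)\log n$''.

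Your second remedy is the right one once made precise:
\begin{itemize}
\item Markers recur with period $L$, and $u^*$ is the start of a complete marker. So the marker bits of $\boldp_\boldf$, complete or partial, are exactly the local indices $i$ with $(i-u^*)\bmod L<\beta$, taking the nonnegative residue.
\item Replace the condition $i\notin\bigcup_{u\in U}\{u,\dots,u+\beta-1\}$ with $(i-u^*)\bmod L\ge\beta$, so $U$ is used only to locate $u^*$.
\item Your offset argument then shows the output is exactly $\boldq_\boldf$ with no loss in length, so Lemma~\ref{lemma:global alignment for qN} applies unchanged.
\end{itemize}
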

\begin{proof}
    Lemma~\ref{lemma:correctly identify modified db from pilot} guarantees that any $0$-runs of length $\beta$  in $\mathbf{p}_{\boldf}$ can be found; thus, the set of locations $U$ representing the start locations of all $0$-runs of length $\beta$ in $\mathbf{p}_{\boldf}$ is correctly identified.    
    Recall the construction process~\eqref{equation:pilot,debruijn and modified de bruijn}: 
    \begin{itemize}
        \item The $\widetilde{\mathrm{RLL}}(0,\beta-2)$ encoder inserts two $1$'s at fixed intervals into the original De Bruijn sequence $\mathbf{q}$ to generate~$\bar{\mathbf{q}}$.
        \item The sequence $\bar{\mathbf{q}}$ is transformed into $\mathbf{p}$ by inserting $0$-runs of length $\beta$ at fixed locations, see Definition~\ref{Definition: marker sequence for improved encoding}.
    \end{itemize}
    
    Since the relative distance between the additional bits inserted by $\widetilde{\mathrm{RLL}}$ encoder (the 1's in Lemma~\ref{lemma: there exists another RLL fixed length code}) and the inserted $0$-runs of length $\beta$ (Definition~\ref{Definition: marker sequence for improved encoding}) are both fixed by the construction, the position of the additional $1$'s can be calculated relative to the leftmost detected zero run at~$u^*$ (Line~\ref{line:find u*}). 
    Line~\ref{line:check if 0 run} of Algorithm~\ref{algorithm: finding locations of de bruijn} performs this calculation to filter out the additional bits and ensures the $0$-runs themselves are excluded. Consequently, the remaining bits constitute the original De Bruijn subsequence $\boldq_\boldf$.
\end{proof}
Next, we show that $\mathbf{q}_{\boldf}$ identified in Lemma~\ref{lemma: modified de brujin sequence can be decode} can be used for global alignment, thereby proving the correctness of~\textbf{B3}.
\begin{lemma}\label{lemma:global alignment for qN}
    For every $\boldf\in \cF$, $\mathbf{q}_{\boldf}$ appears only once in $\boldq$ and hence can be used to determine the global position of~$\boldf$.
\end{lemma}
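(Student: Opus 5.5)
The plan is a counting argument. We show that $\boldq_{\boldf}$ is a contiguous substring of $\boldq$ whose length is at least the De Bruijn order $L\triangleq\log\!\big(\tfrac{n(\beta-2)}{m\beta}(1-\tfrac{\beta}{(1+\delta/2)\log n})\big)$. By Definition~\ref{Definition:De Bruijn sequence}, every length-$L$ window of $\boldq$ occurs exactly once. Hence the first $L$ bits of $\boldq_{\boldf}$ already pin down a unique starting index in $\boldq$, and so does the whole of $\boldq_{\boldf}$.

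That position then propagates back to $\boldf$. The known layout maps $\boldq\to\bar{\boldq}\to\boldp$, so it gives the index of the first pilot bit of $\boldf$ in $\boldp$. The interleaving~\eqref{equation: interleved eq for local alignment}, together with the offset $j$ found in Algorithm~\ref{algorithm for finding locations of p in improved decoding}, then gives the absolute position of $\boldf$ in $\boldc_v$.

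First I will argue contiguity. The bits of $\boldp_{\boldf}$ are consecutive in $\boldp$. By Lemmas~\ref{lemma:correctly identify modified db from pilot} and~\ref{lemma: modified de brujin sequence can be decode}, deleting the markers and the $\widetilde{\mathrm{RLL}}$ padding $1$'s returns exactly the consecutive bits of $\boldq$ whose images lie in the window. These deletions are inverse to insertions made at fixed positions, so $\boldq_{\boldf}$ is a contiguous substring of $\boldq$, possibly cyclic at the end.

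Next I will lower-bound the length. Since $\boldf\in\cF$, we have $|\boldp_{\boldf}|\ge (1+\delta)\log n$. In every period of $(1+\delta/2)\log n$ pilot bits, $\beta$ are marker bits. The remaining bits form blocks of length $\beta$, each contributing $\beta-2$ bits of $\boldq$. A window of length $\ell$ therefore covers at most $\lceil \ell/((1+\delta/2)\log n)\rceil+1$ markers. It can also lose at most two partial $\widetilde{\mathrm{RLL}}$ blocks at its ends, because a partial block still yields its interior bits and only the padding is removed. Together this gives
\[
|\boldq_{\boldf}|\ \ge\ \frac{\beta-2}{\beta}\Big((1+\delta)\log n-\Big(\tfrac{1+\delta}{1+\delta/2}+2\Big)\beta\Big)-O(\beta).
\]
Because $\beta=o(\log n)$ and $\beta=\omega(1)$, the right-hand side is $(1+\delta)\log n-o(\log n)$. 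For large $n$ this exceeds $L\le\log n$.

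The main obstacle is making this bookkeeping airtight. One needs a clean worst-case count of how many marker bits and padding bits a length-$(1+\delta)\log n$ window can hit, and care with the edge effects at the fragment boundaries. The margin of $\delta\log n$ over $L$ absorbs all $O(\beta)=o(\log n)$ losses, including the factor $(\beta-2)/\beta=1-o(1)$. I will try to bound all losses by $4\beta+\tfrac{2}{\beta}(1+\delta)\log n$ and check that this is $<\delta\log n$ for large $n$. The condition $\delta>\tfrac{2}{\beta-2}$ makes the density loss from padding fit inside the slack, with the remaining terms handled by $\beta=o(\log n)$.

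Finally I will note that only the non-cyclic occurrence can be the true one. Any other occurrence, cyclic or not, would force a repeated length-$L$ window, contradicting Definition~\ref{Definition:De Bruijn sequence}. This establishes uniqueness of the occurrence and hence of the global position of $\boldf$.
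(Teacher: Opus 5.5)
Your proposal is correct and follows essentially the same route as the paper. Both lower-bound $|\boldq_{\boldf}|$ by subtracting the marker bits from a length-$(1+\delta)\log n$ pilot window and applying the $(\beta-2)/\beta$ padding factor, then compare against the De Bruijn order $\log\big(\tfrac{n(\beta-2)}{m\beta}(1-\tfrac{\beta}{(1+\delta/2)\log n})\big)\le\log n$ using $\beta=o(\log n)$ and $\beta=\omega(1)$; the only differences are cosmetic. You use a looser marker count with an explicit $O(\beta)$ edge term where the paper uses at most two markers, and you spell out the contiguity of $\boldq_{\boldf}$ and the map back to a position in $\boldc_v$, which the paper leaves implicit.
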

\begin{proof}
    Recall from Section~\ref{section: improved encoding} that the De Bruijn sequence $\mathbf{q}$ is of length $\frac{n}{m}(1-\frac{\beta}{(1+\delta/2)\log n})\frac{\beta-2}{\beta}$. Consequently, every substring of length $\log(\frac{n(\beta-2)}{m\beta}(1-\frac{\beta}{(1+\delta/2)\log n}))$ appears in $\mathbf{q}$ exactly once.    
    Hence, to achieve global alignment, the recovered $\mathbf{q}_{\boldf}$ must exceed this unique length threshold.
    Since Algorithm~\ref{algorithm: finding locations of de bruijn} removes~$\beta$ bits every $(1+\delta/2)\log n$ bits, it follows that at most $2$ $0$-runs of length $\beta$ are removed among any $(1+\delta)\log n$ consecutive bits of pilot fragment $\boldp_\boldf$.
    Additionally, Algorithm~\ref{algorithm: finding locations of de bruijn} removes~$2$ bits every~$\beta$ bits and hence,  $\mathbf{q}_{\boldf}$ is of length at least $\frac{\beta-2}{\beta}((1+\delta)\log n-2\beta)$.
    Therefore it remains to show that 
    \begin{align}\label{eq:lemma8}
        \frac{\beta-2}{\beta}((1+\delta)\log n-2\beta) > \log\left(\frac{n(\beta-2)}{m\beta}\left(1-\frac{\beta}{(1+\delta/2)\log n}\right)\right)
    \end{align}
    for a sufficiently large~$n$.
    We expand the left-hand side (LHS) and right-hand side (RHS) of~\eqref{eq:lemma8} as follows:
    \begin{align*}
        \text{LHS} &= \left(1 - \frac{2}{\beta}\right)\left((1+\delta)\log n - 2\beta\right) \\
        &= (1+\delta)\log n - \frac{2(1+\delta)}{\beta}\log n - 2\beta + 4.
    \end{align*}
    \begin{align*}
        \text{RHS} &= \log n - \log m + \log\left(1 - \frac{2}{\beta}\right) + \log\left(1 - \frac{\beta}{(1+\delta/2)\log n}\right).
    \end{align*}
    Dividing the LHS and the RHS by $\log n$ and taking~$n$ to infinity yields the condition
    \begin{align*}
        1+\delta-\frac{2(1+\delta)}{\beta} > 1,
    \end{align*}
    which is equivalent to $\delta > \frac{2}{\beta-2}$.
    As~$\beta=o(\log(n))$ and~$\beta=\omega(1)$, $\delta > \frac{2}{\beta-2}$ holds true for any fixed $\delta > 0$ as $n\to\infty$.
\end{proof}

Finally, RLL and erasure decoding (step \textbf{C}) is executed via the following sub-steps, which are implemented in Algorithm~\ref{algorithm:retrieve Cer from aligned fragments} and proved in Lemma~\ref{lemma: rll decode} below.

\begin{enumerate}
    \item[\textbf{C1}] \textbf{Removal of $\mathrm{RLL}$-added bits:} 
    The output of step~\textbf{B3} is an estimated codeword $\hat{\boldc}_v$, which contains all fragments $\boldf\in \cF$ placed in their correct global positions, and erasures (say,~$\star$ symbols) in all other positions. 
    Since the $\mathrm{RLL}(0,\beta-1)$ encoder inserts $1$'s at prescribed locations, these positions are known from the construction. 
    We remove all such positions from $\hat{\boldc}_v$, resulting in a partially observed version of the interleaved codewords from $\cC_{\mathrm{er}}$.

    \item[\textbf{C2}] \textbf{De-interleaving:} 
    After removing the $\mathrm{RLL}$-added bits, we de-interleave the resulting sequence  to obtain estimates $\hat{\boldr}_{v(1)},\hat{\boldr}_{v(2)},\dots,\hat{\boldr}_{v(m-1)}$ of the codewords in $\cC_{\mathrm{er}}$ used to construct~$\boldc_v$ as in~\eqref{equation: interleved eq for local alignment}. 
    Each $\hat{\boldr}_{v(i)}$ contains erasures in positions that were not covered by sufficiently long fragments.

    \item[\textbf{C3}] \textbf{Erasure decoding:} 
    For each $\hat{\boldr}_{v(1)},\hat{\boldr}_{v(2)},\dots,\hat{\boldr}_{v(m-1)}$, we apply standard erasure decoding.         
\end{enumerate}
\begin{algorithm}
\caption{Extraction of $\cC_{\mathrm{er}}$-codeword estimates from globally aligned fragments}
\begin{algorithmic}[1]\label{algorithm:retrieve Cer from aligned fragments}
\STATE \textbf{Input:} An estimated codeword~$\hat{\boldc}_v$ from step \textbf{B3}.
\STATE \textbf{Output:} Estimates
$\hat{\boldr}_{v(1)},\hat{\boldr}_{v(2)},\dots,
\hat{\boldr}_{v(m-1)}$ of the codewords $\boldr_{v(1)},\boldr_{v(2)},\dots,
\boldr_{v(m-1)}\in \cC_{\mathrm{er}}$ used to create $\boldc_v$, containing erasures (i.e., $\star$ symbols).



\STATE Let $\mathcal{A}=\bigcup_{\ell=1}^{n/(m\beta)}\{m\ell\beta-(m-1),\,m\ell\beta-(m-2),\,\dots,\,m\ell\beta-1\}$.\\
\gray{\# The set~$\cA$ contains the locations of the $1$'s added to $\hat{\cC}_{\mathrm{er}}$ codewords by the
$\mathrm{RLL}(0,\beta-1)$ encoder.}
\FOR{$i=1$ to $m-1$}\label{line:algorithm3 line begin for}
    \STATE Initialize $\hat{\boldr}_{v(i)}$ as an empty string. \gray{\# An estimate of the respective $\cC_{\mathrm{er}}$-codeword.}
    \FOR{$t=0$ to $n/m-1$}
        \STATE Let $g=mt+i$.
        \IF{$g\notin \mathcal{A}$}\label{line:algorithm3 check conditions of g in mathcal A}
            \STATE Append $\hat{\boldc}_v[g]$ to $\hat{\boldr}_{v(i)}$.\label{line:algorithm3 lineappend bold cv}
        \ENDIF
    \ENDFOR
\ENDFOR\label{line:algorithm3 lineend for}
\RETURN $\hat{\boldr}_{v(1)},\hat{\boldr}_{v(2)},\dots, \hat{\boldr}_{v(m-1)}$.
\end{algorithmic}
\end{algorithm}

We comment that steps~$\textbf{C1}$ and~$\textbf{C2}$ above are implemented in Algorithm~\ref{algorithm:retrieve Cer from aligned fragments} via a linear pass over the bits of each fragment (Lines~\ref{line:algorithm3 line begin for}-\ref{line:algorithm3 lineend for}), while skipping $\textrm{RLL}(0,\beta-1)$ bits (Line~\ref{line:algorithm3 check conditions of g in mathcal A}) and sorting the remaining bits to~$\cC_\text{er}$ codewords (Line~\ref{line:algorithm3 lineappend bold cv}).
The proof of correctness of Algorithm~\ref{algorithm:retrieve Cer from aligned fragments} is as follows.

\begin{lemma}\label{lemma: rll decode}
For each $i\in[m-1]$ Algorithm~\ref{algorithm:retrieve Cer from aligned fragments} correctly outputs a partially observed version $\hat{\boldr}_{v(i)}$ of the corresponding codeword in $\cC_{\mathrm{er}}$.
\end{lemma}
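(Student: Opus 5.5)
The proof will be an index computation. We compare the positions read by Algorithm~\ref{algorithm:retrieve Cer from aligned fragments} with the positions written by the encoder~\eqref{equation: interleved eq for local alignment} and by the $\mathrm{RLL}(0,\beta-1)$ encoder of Lemma~\ref{lemma: there exists an RLL fixed length code}.

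We fix $i\in[m-1]$ and start from the encoder. By~\eqref{equation: interleved eq for local alignment}, the codeword bit $\boldc_v[mt+i]$ equals $\bar{\bolds}_{v(i)}[t]$ for every $t\in\{0,\dots,n/m-1\}$. Consequently, the inner loop, which visits $g=mt+i$ for increasing $t$, reads the bits of $\bar{\bolds}_{v(i)}$ in their original order. By step~\textbf{B3} and Lemmas~\ref{lemma: find pilot bits in rll encoding}--\ref{lemma:global alignment for qN}, every fragment in $\cF$ is placed at its correct global position. Hence each non-$\star$ entry $\hat{\boldc}_v[g]$ equals $\boldc_v[g]$, and all remaining entries are $\star$. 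Therefore, before any positions are removed, the sequence $(\hat{\boldc}_v[mt+i])_{t}$ is a partially erased copy of $\bar{\bolds}_{v(i)}$.

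Next we locate the bits added by the RLL encoder. By Lemma~\ref{lemma: there exists an RLL fixed length code}, $\bar{\bolds}_{v(i)}$ is the concatenation of length-$\beta$ blocks $\boldv^{(\ell)}1$, where $\boldv^{(\ell)}$ is the $\ell$-th block of $\boldr_{v(i)}\in\cC_{\mathrm{er}}$. The appended $1$'s therefore occupy the indices $t=\ell\beta-1$ for $\ell=1,\dots,n/(m\beta)$. In $\boldc_v$ these bits lie at $g=m(\ell\beta-1)+i=m\ell\beta-(m-i)$. As $i$ ranges over $[m-1]$, this gives exactly $\{m\ell\beta-(m-1),\dots,m\ell\beta-1\}$, which is the set $\cA$. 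We will also check the converse: any $g=mt+i\in\cA$ with $i\in[m-1]$ forces $t\equiv -1 \pmod{\beta}$, because the $m-1$ elements of $\cA$ for a given $\ell$ have distinct residues $1,\dots,m-1$ modulo $m$. Thus Line~\ref{line:algorithm3 check conditions of g in mathcal A} skips precisely the RLL-added bits and keeps every information bit.

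Deleting the last bit of each length-$\beta$ block is exactly the $\mathrm{RLL}(0,\beta-1)$ decoder of Lemma~\ref{lemma: there exists an RLL fixed length code}. It follows that $\hat{\boldr}_{v(i)}$ has length $n_{\mathrm{er}}=\frac{(\beta-1)n}{m\beta}$, and its $s$-th entry is either $\boldr_{v(i)}[s]$ or $\star$. This is the partially observed codeword the statement claims.

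The main obstacle is purely the index bookkeeping: checking that $\cA$ matches the RLL positions exactly, with no off-by-one error, under the $0$-indexed convention of~\eqref{equation: interleved eq for local alignment}. Padding is not an issue, since we may assume $\beta\mid n/m$, as in the divisibility assumptions the paper makes elsewhere.
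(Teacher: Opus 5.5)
Your proposal is correct and follows the same index-matching argument as the paper. Both locate the bits of $\bar{\bolds}_{v(i)}$ at positions $mt+i$, place the $\mathrm{RLL}(0,\beta-1)$-added $1$'s at $t=\ell\beta-1$, and map these to exactly the set $\cA$ that Algorithm~\ref{algorithm:retrieve Cer from aligned fragments} skips. Your extra checks, namely that no information position lies in $\cA$ (via distinct residues modulo $m$) and that the non-$\star$ entries of $\hat{\boldc}_v$ are correct by step~\textbf{B3}, only make explicit what the paper leaves implicit.
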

\begin{proof}

Recall that the codeword $\boldc_v$~\eqref{equation: interleved eq for local alignment} is constructed by interleaving the pilot sequence and the codewords in $\bar{\cC}_{\mathrm{er}}$, i.e.,
\[
\boldc_v[mt+j] =
\begin{cases}
\boldp[t], & j=0,\\
\bar{\bolds}_{v(j)}[t], & j\in[m-1] 
\end{cases}
\]
for~$t\in\{0,1,\ldots,n/m-1\}$.
Hence, for each $i\in[m-1]$, the bits of $\bar{\bolds}_{v(i)}$ appear in positions $\{mt+i|t\in\{0,1,\ldots,n/m-1\}\}$.
Furthermore, we identify the positions of the $1$'s added by the $\mathrm{RLL}(0,\beta-1)$ encoder, as follows. 
By Lemma~\ref{lemma: there exists an RLL fixed length code} this encoder partitions each codeword of $\cC_{\mathrm{er}}$ into blocks of length $\beta-1$ and appends a~$1$ to each block, and hence 
the added $1$'s occur at positions
\[
\beta-1,\,2\beta-1,\,3\beta-1,\dots.
\]
Following the interleaving process in~\eqref{equation: interleved eq for local alignment}, these correspond to global positions
\[
\cA=\bigcup_{\ell=1}^{n/(m\beta)}\{m\ell\beta-(m-1),\,m\ell\beta-(m-2),\,\dots,\,m\ell\beta-1\},
\]
which are removed by Algorithm~\ref{algorithm:retrieve Cer from aligned fragments}. 
Consequently, for each $i\in[m-1]$, the resulting sequence $\hat{\boldr}_{v(i)}$ is obtained by deleting all RLL-added $1$'s from $\bar{\bolds}_{v(i)}$, leaving only the original symbols of the codeword in $\cC_{\mathrm{er}}$. 
Therefore, each $\hat{\boldr}_{v(i)}$ is a partially observed version of a codeword in $\cC_{\mathrm{er}}$.
\end{proof}

It remains to prove the correctness of step~\textbf{C3} in which the estimated codewords~$\hat{\boldr}_{v(1)},\ldots,\hat{\boldr}_{v(m-1)}$ are corrected to the true codewords $\boldr_{v(1)},\ldots,\boldr_{v(m-1)}\in \cC_{\text{er}}$, which requires showing that the number of erasures in each~$\hat{\boldr}_{v(i)}$ does not exceed the erasure correction capability of~$\cC_{\text{er}}$.
This is shown, with high probability as~$n\to\infty$, in Theorem~\ref{Theorem: rate calculation for improved encoding} in the following subsection, and hence the decoding succeeds with vanishing error probability.


\subsection{Rate Calculation}\label{section:calculation of rate}
In this section we complete the analysis of step \textbf{C3} by proving that the random binary linear code
$\cC_{\mathrm{er}}$ successfully corrects the erasures induced by the torn paper channel. 
We then use this guarantee to calculate the overall achievable rate.

We first emphasize that the torn paper channel is  \emph{not} memoryless. 
The torn paper channel cuts the transmitted codeword into pieces, and therefore the discarded fragments typically
induce bursts of consecutive erasures. Consequently, standard coding tools and concentration bounds for the
memoryless binary erasure channel (BEC) cannot be applied directly.

To handle this difficulty, we analyze the specific erasure patterns induced in each $\cC_{\mathrm{er}}$-codeword.
For a given codeword in $\cC_{\mathrm{er}}$, we represent the erased coordinates by a (random) subset $E \subseteq [n_{\mathrm{er}}]$, where $n_{\mathrm{er}}$ is the codeword length. 
Recall that our encoding scheme encodes the data into $m-1$ codewords of $\cC_{\mathrm{er}}$. 
For each $i \in [m-1]$, the de-interleaved sequence $\hat{\mathbf{r}}_{v(i)}$ is a partially observed codeword of $\cC_{\mathrm{er}}$, where the erasures are caused by discarded fragments.
We decode these erased symbols using the standard erasure-decoding procedure for linear codes. Specifically,
for a code defined by a parity-check matrix
$H \in \mathbb{F}_2^{r \times n_{\mathrm{er}}}$
(see Definition~\ref{definition:random linear erasure code} and
Lemma~\ref{lemma:full rank random parity matrix}), decoding reduces to solving a linear system over
$\mathbb{F}_2$ for the missing bits using Gaussian elimination. 
This procedure succeeds if and
only if the columns of $H$ indexed by the erased positions are linearly independent. 
Equivalently, for~$\cE\subseteq[n_\text{er}]$ let $H_{\mathcal{E}}$ denote the submatrix of $H$ formed by the columns in $\mathcal{E}$; decoding succeeds if and only if $\operatorname{rank}(H_{\mathcal{E}})=|\mathcal{E}|$.
The probability of satisfying this condition is analyzed in Lemma~\ref{lemma:random linear code random erasures}.

To prove that our construction succeeds despite the non-memoryless nature of the induced erasure channel, we proceed as follows.
First, we define a collection of \emph{good erasure patterns}, denoted by $\mathcal{T}$, whose sizes are strictly
below the erasure-correction capability of $\cC_{\mathrm{er}}$. We then show that the actual erasure pattern induced
by the torn paper channel belongs to $\mathcal{T}$ with probability tending to $1$ as $n \to \infty$ (see Lemma~\ref{lemma:bound on number of erasures} in Appendix~\ref{appendix for good sets}).
Then, in Lemma~\ref{lemma:random linear code random erasures} we prove that a randomly chosen parity-check matrix achieves vanishing decoding error probability for each fixed erasure pattern in~$\cT$. 

To show that the decoding error vanishes for all erasure patterns in $\cT$, let $P_e^{(i)}(H)$ denote the decoding error probability for the $i$-th $\cC_{\mathrm{er}}$-codeword when the parity-check matrix is $H$. 
Rather than analyzing one deterministic matrix directly, we evaluate 
$\mathbb{E}_H\!\left[P_e^{(i)}(H)\right],$
where the expectation is taken over uniformly random parity-check matrices $H$.
We show that this expectation
vanishes as $n \to \infty$ (see Lemma~\ref{lemma:random_matrix_capacity}). 
With the help of Markov's inequality, we then prove  that a uniformly random choice of $H$ achieves vanishing $P_e^{(i)}(H)$ with high probability (see Lemma~\ref{lemma:high_probability_existence}).

After proving that the decoding error vanishes for each individual $\cC_{\mathrm{er}}$-codeword, we extend the argument to the full torn paper code. 
Since the overall decoder fails if at least one of the $m-1$ individual $\cC_{\mathrm{er}}$-codewords fails to decode correctly, we apply a union bound over the $m-1$ codewords in $\cC_{\mathrm{er}}$.
Corollary~\ref{corollary:total_decoding_error} shows that, with probability tending to
$1$ as $n \to \infty$, the proposed encoding and decoding scheme has vanishing total error probability. 
Finally,
Theorem~\ref{Theorem: rate calculation for improved encoding} combines this vanishing decoding error with the
construction parameters to establish the achievable rate.

In what follows, for a fixed $\eta>0$ we define the \textit{set of good erasure patterns} (good set, for short), denoted by $\cT$, as the collection of all subsets of $[n_{\mathrm{er}}]$ whose size is at most~$(1-R_{\mathrm{er}}-\eta)n_{\mathrm{er}}$, i.e.,
\begin{align}\label{equation: definition of mathcal T}
     \cT = \binom{[n_\text{er}]}{\le (1-R_{\mathrm{er}}-\eta)n_{\mathrm{er}}}.
\end{align}

\begin{lemma}\label{lemma:random linear code random erasures}
Let $\cC_{\mathrm{er}}$ be the random binary linear code from Definition~\ref{definition:random linear erasure code}, with parity-check matrix $H\in\mathbb{F}_2^{r\times n_{\mathrm{er}}}$, where $r=(1-R_{\mathrm{er}})n_{\mathrm{er}}$.
For every fixed erasure pattern $\cE\in\cT$, we have
$$ \Pr_H(\rank(H_\cE) \ne |\cE|) \le 2^{-\eta n_{\mathrm{er}}}. $$
\end{lemma}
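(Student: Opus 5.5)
The plan is to bound the probability that the random column submatrix $H_\cE\in\mathbb{F}_2^{r\times|\cE|}$ is column-rank-deficient, using the same sequential-span argument as in Lemma~\ref{lemma:full rank random parity matrix}. This time we apply it to columns rather than rows. Fix $\cE\in\cT$ and write $e=|\cE|\le(1-R_{\mathrm{er}}-\eta)n_{\mathrm{er}}$. Since the entries of $H$ are i.i.d.\ uniform over $\mathbb{F}_2$, the columns of $H_\cE$ are i.i.d.\ uniform vectors in $\mathbb{F}_2^{r}$. This holds no matter which index set $\cE$ is, so the fact that $\cE$ is fixed, rather than adversarially correlated with $H$, is exactly what makes the argument work.

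First, order the columns of $H_\cE$ as $h_1,\dots,h_e$. The event $\rank(H_\cE)\ne e$ occurs only if some $h_i$ lies in the span of $h_1,\dots,h_{i-1}$. That span has at most $2^{i-1}$ elements, and $h_i$ is uniform and independent of the earlier columns. So this event has probability at most $2^{i-1-r}$, and this bound holds unconditionally; no conditioning on independence of the earlier columns is needed. A union bound over $i$ then gives
\[
\Pr_H(\rank(H_\cE)\ne e)\le\sum_{i=1}^{e}2^{i-1-r}\le 2^{e-r}.
\]

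Second, substitute the parameters. With $r=(1-R_{\mathrm{er}})n_{\mathrm{er}}$ and $e\le(1-R_{\mathrm{er}}-\eta)n_{\mathrm{er}}$, we get $e-r\le-\eta n_{\mathrm{er}}$, and the claimed bound $2^{-\eta n_{\mathrm{er}}}$ follows.

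There is no substantial obstacle. The only points needing care are that the bound is per fixed pattern, not uniform over all patterns (that extension is deferred to later lemmas), and consistency of notation. Here $r$ is written as $(1-R_{\mathrm{er}})n_{\mathrm{er}}$, which matches Definition~\ref{definition:random linear erasure code} via Lemma~\ref{lemma:full rank random parity matrix}. If one instead keeps $r$ as in the definition, then $r=(1-R_{\mathrm{er}})n_{\mathrm{er}}$ holds exactly with $R_{\mathrm{er}}=(\alpha\gamma+1)e^{-\alpha\gamma}-3\eta$, so nothing changes.
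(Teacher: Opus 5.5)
Your proof is correct and follows the paper's argument: build $H_\cE$ column by column, bound the probability that each new column lies in the span of its predecessors by $2^{i-1-r}$, union-bound to get $2^{|\cE|-r}$, and then substitute $r=(1-R_{\mathrm{er}})n_{\mathrm{er}}$ and $|\cE|\le(1-R_{\mathrm{er}}-\eta)n_{\mathrm{er}}$. Your remark that the per-column bound holds unconditionally (the span has at most $2^{i-1}$ elements and $h_i$ is independent of the earlier columns) justifies the union bound a little more cleanly than the paper's conditioning phrasing, but the proof is the same.
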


A standard proof for Lemma~\ref{lemma:random linear code random erasures} is given in Appendix~\ref{appendix}.
We proceed to compute the expected error probability induced by a random parity-check matrix.

\begin{lemma}\label{lemma:random_matrix_capacity}
    For $H \in \mathbb{F}_2^{r \times n_{\mathrm{er}}}$, where $r = (1-R_{\mathrm{er}})n_{\mathrm{er}}$, let~$P_e^{(i)}(H)$ be as explained above. 
    The expected probability of decoding error over all uniformly random parity-check matrices vanishes, that is,
    $$ \lim_{n \to \infty} \mathbb{E}_H[P_e^{(i)}(H)] = 0.$$
\end{lemma}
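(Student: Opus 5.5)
We will bound $P_e^{(i)}(H)$ by conditioning on the erasure pattern $\cE$ that the torn paper channel induces in the $i$-th $\cC_{\mathrm{er}}$-codeword. The key observation is that $\cE$ is determined only by the break locations, the fixed pilot and RLL structure, and the deterministic decoding rules of Steps \textbf{A}--\textbf{C}. It does not depend on $H$ or on the transmitted message. This holds because Lemmas~\ref{lemma: find pilot bits in rll encoding}--\ref{lemma: rll decode} guarantee that every fragment in $\cF$ is aligned correctly regardless of the codeword, and all other fragments are discarded. Hence $\cE$ and $H$ are independent random objects. Decoding fails only if $\rank(H_\cE)\neq|\cE|$, so
\begin{align*}
P_e^{(i)}(H)\le \Pr_{\cE}(\cE\notin\cT)+\sum_{E\in\cT}\Pr(\cE=E)\,\mathbbm{1}\{\rank(H_E)\neq|E|\}.
\end{align*}

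Taking expectation over $H$ and using independence, the first term is unaffected. The second becomes $\sum_{E\in\cT}\Pr(\cE=E)\Pr_H(\rank(H_E)\ne|E|)$. By Lemma~\ref{lemma:random linear code random erasures}, each summand is at most $\Pr(\cE=E)\,2^{-\eta n_{\mathrm{er}}}$, so the whole sum is at most $2^{-\eta n_{\mathrm{er}}}$. Since $n_{\mathrm{er}}=\frac{(\beta-1)n}{m\beta}\to\infty$, this term vanishes. We therefore obtain
\begin{align*}
\mathbb{E}_H[P_e^{(i)}(H)]\le \Pr(\cE\notin\cT)+2^{-\eta n_{\mathrm{er}}}.
\end{align*}

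It remains to show $\Pr(\cE\notin\cT)\to0$, that is, that the number of erasures is at most $(1-R_{\mathrm{er}}-\eta)n_{\mathrm{er}}$ with high probability. This is exactly the content of Lemma~\ref{lemma:bound on number of erasures} in Appendix~\ref{appendix for good sets}, which we invoke.

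The substance of that lemma is the main obstacle, and we sketch how we expect it to go. The expected fraction of bits covered by fragments of length at least $\gamma\log n=(1+\delta)m\log n$ is asymptotically $(\alpha\gamma+1)e^{-\alpha\gamma}$: with geometric lengths of parameter $p_n\approx\alpha/\log n$, the length-biased tail satisfies $\Pr(N\ge \ell)(1+p_n\ell)\approx e^{-\alpha\gamma}(1+\alpha\gamma)$. Because $r$ includes a slack of $3\eta n_{\mathrm{er}}$ beyond the expected erasure fraction, we have $(1-R_{\mathrm{er}}-\eta)n_{\mathrm{er}}=(1-(\alpha\gamma+1)e^{-\alpha\gamma}+2\eta)n_{\mathrm{er}}$. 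Thus we need only a concentration bound for the number of erased coordinates at margin $2\eta$.

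The channel is not memoryless, so we would obtain concentration from bounded differences (McDiarmid) over the $n-1$ independent break indicators. Toggling one break changes the covered set by $O(\log n)$ positions only on typical events, so we would first truncate to the event that every fragment has length $O(\log^2 n)$ and apply the inequality on that event. Alternatively, we would use a Chebyshev bound with covariances that decay beyond $O(\log n)$ distance. Either route gives a variance of order $n\,\mathrm{polylog}(n)=o(n^2)$.

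Two smaller effects must also be controlled. The RLL-added positions removed by Algorithm~\ref{algorithm:retrieve Cer from aligned fragments} contribute a vanishing $1/\beta$ fraction, and coordinates are only at fragment boundaries, which is also a lower-order effect. Together with the concentration bound these yield $\Pr(\cE\notin\cT)\to0$, and hence $\lim_{n\to\infty}\mathbb{E}_H[P_e^{(i)}(H)]=0$.
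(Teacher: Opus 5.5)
Your proposal is correct and follows the paper's proof essentially step for step: you split $P_e^{(i)}(H)$ over erasure patterns inside and outside $\cT$, use that the erasure pattern is independent of $H$, bound the good-pattern part by $2^{-\eta n_{\mathrm{er}}}$ via Lemma~\ref{lemma:random linear code random erasures}, and get $\Pr(E_i\notin\cT)\to 0$ from Lemma~\ref{lemma:bound on number of erasures}. Your McDiarmid/Chebyshev sketch for that last lemma is not needed for this statement, since the paper proves it by combining the coverage concentration of $V_\gamma$ from \cite[Lemma~6]{shomorony2021torn} with an interval-counting bound and $K=O(n/\log n)$.
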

\begin{proof}
    Let $E_i$ be a random variable representing the set of erased coordinates in the $i$-th estimated codeword $\hat{\mathbf{r}}_{v(i)}$.
    As explained earlier, an error occurs if and only if the columns of~$H$ corresponding to the erased positions are linearly dependent, and therefore
    \begin{align*}
        P_e^{(i)}(H)=\sum_{\cE_i\in\mathcal{T}}\Pr(E_i=\cE_i)\mathbf{1}(\rank(H_{\cE_i})\ne |\cE_i|)+\sum_{\cE_i\notin\mathcal{T}}\Pr(E_i=\cE_i)\mathbf{1}(\rank(H_{\cE_i})\ne |\cE_i|),
    \end{align*}
    where~$\mathbf{1}(\cdot)$ is an indicator function.
    Since the value of~$E_i$ depends on the channel and not on the code construction of $\cC_{\mathrm{er}}$, taking the expectation $\mathbb{E}_H[\cdot]$ on both sides yields
    \begin{align*}
        \mathbb{E}_H[P_e^{(i)}(H)] &= \sum_{\cE_i\in\mathcal{T}}\Pr(E_i=\cE_i)\mathbb{E}_H[\mathbf{1}(\rank(H_{\cE_i})\ne |\cE_i|)]  + \sum_{\cE_i\notin\mathcal{T}}\Pr(E_i=\cE_i)\mathbb{E}_H[\mathbf{1}(\rank(H_{\cE_i})\ne |\cE_i|)]\\
        &=\sum_{\cE_i\in\mathcal{T}}\Pr(E_i=\cE_i)\Pr_H(\rank(H_{\cE_i})\ne |\cE_i|)  + \sum_{\cE_i\notin\mathcal{T}}\Pr(E_i=\cE_i)\Pr_H(\rank(H_{\cE_i})\ne |\cE_i|)\\
        &\overset{Lemma~\ref{lemma:random linear code random erasures}}{\le}\sum_{\cE_i\in\mathcal{T}}\Pr(E_i=\cE_i) \cdot 2^{-\eta n_{\mathrm{er}}} + \sum_{\cE_i\notin\mathcal{T}}\Pr(E_i=\cE_i) \cdot 1 \\
        &= 2^{-\eta n_{\mathrm{er}}} \sum_{\cE_i\in\mathcal{T}}\Pr(E_i=\cE_i) + \Pr(E_i \notin \mathcal{T})\le 2^{-\eta n_{\mathrm{er}}} + \Pr(E_i \notin \mathcal{T}).
    \end{align*}
    
    
    
    
    As explained in the proof of Lemma~\ref{lemma:random linear code random erasures},
    $2^{-\eta n_{\mathrm{er}}}$ goes to~$0$ exponentially fast as $n \to \infty$. 
    Furthermore, it is shown in Lemma~\ref{lemma:bound on number of erasures} in Appendix~\ref{appendix for good sets} that $\lim_{n\to\infty}\Pr(\cE_i \notin \mathcal{T}) = 0$. Therefore, the expected error probability for the $i$-th sub-codeword over the random code ensemble vanishes:
    \begin{align*}
       \lim_{n \to \infty} \mathbb{E}_H[P_e^{(i)}(H)]&= 0.\qedhere 
    \end{align*}    
\end{proof}
 The following lemma applies Markov's inequality to demonstrate that a randomly chosen parity-check matrix successfully decodes with high probability.
\begin{lemma}\label{lemma:high_probability_existence} 
    With probability which goes to~$1$ as $n\to \infty$, a randomly chosen parity-check matrix~$H$ 
    achieves a vanishing error probability for the $i$-th $\cC_\text{er}$-codeword.
\end{lemma}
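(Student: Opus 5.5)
The plan is to apply Markov's inequality to the nonnegative random variable $P_e^{(i)}(H)$, with the randomness over the uniformly random $H$, and to feed it the bound on $\mathbb{E}_H[P_e^{(i)}(H)]$ from Lemma~\ref{lemma:random_matrix_capacity}. Write $\mu_n \triangleq \mathbb{E}_H[P_e^{(i)}(H)]$. The proof of Lemma~\ref{lemma:random_matrix_capacity} gives the explicit bound $\mu_n \le 2^{-\eta n_{\mathrm{er}}} + \Pr(E_i\notin\cT)$, and Lemma~\ref{lemma:bound on number of erasures} gives $\mu_n \to 0$.

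The threshold must tend to zero more slowly than $\mu_n$, so I would choose $\tau_n \triangleq \sqrt{\mu_n}$. If $\mu_n=0$ for some $n$, replace $\tau_n$ by any sequence tending to zero, for example $\max(\sqrt{\mu_n},1/n)$. Markov's inequality then gives
\begin{align*}
\Pr_H\bigl(P_e^{(i)}(H) \ge \tau_n\bigr) \le \frac{\mu_n}{\tau_n} = \sqrt{\mu_n} \xrightarrow[n\to\infty]{} 0 .
\end{align*}
Hence, with probability at least $1-\sqrt{\mu_n}\to 1$ over the choice of $H$, the matrix $H$ satisfies $P_e^{(i)}(H) < \tau_n$. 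Since $\tau_n \to 0$, this matrix achieves a vanishing error probability for the $i$-th $\cC_{\mathrm{er}}$-codeword, which is the claim.

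The main subtlety is what ``vanishing'' means when the matrix is random. I would state the result precisely as $\Pr_H\bigl(P_e^{(i)}(H)\ge \tau_n\bigr)\to 0$ for an explicit sequence $\tau_n\to 0$, so that $P_e^{(i)}(H)\to 0$ in probability. This is the form the subsequent union bound over the $m-1$ codewords in Corollary~\ref{corollary:total_decoding_error} needs.

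The argument should also intersect with the full-rank event of Lemma~\ref{lemma:full rank random parity matrix}, which holds with probability at least $1-2^{r-n_{\mathrm{er}}}\to 1$. This guarantees that the chosen $H$ also yields the claimed rate $R_{\mathrm{er}}$. A union bound over the two failure events keeps the total failure probability vanishing.
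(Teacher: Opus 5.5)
Your proposal is correct and follows essentially the same route as the paper: Markov's inequality applied to $P_e^{(i)}(H)$ with threshold $\sqrt{\mathbb{E}_H[P_e^{(i)}(H)]}$, fed by Lemma~\ref{lemma:random_matrix_capacity}, which yields $P_e^{(i)}(H)\to 0$ in probability. Two of your steps are harmless refinements that the paper does not include in this lemma: your handling of the degenerate case $\mu_n=0$, where the paper's division by $\sqrt{\epsilon_n^{(i)}}$ is formally undefined, and your union bound with the full-rank event of Lemma~\ref{lemma:full rank random parity matrix}.
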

\begin{proof}
    Let $\epsilon_n^{(i)} = \mathbb{E}_H[P_e^{(i)}(H)]$ denote the expected probability of decoding error; by Lemma~\ref{lemma:random_matrix_capacity} we have that $\lim_{n \to \infty} \epsilon_n^{(i)} = 0$.    
    By Markov's inequality:
    \begin{align}\label{eq:lemma12}
        \Pr_H\left(P_e^{(i)}(H) \ge \sqrt{\epsilon_n^{(i)}}\right) \le \frac{\mathbb{E}_H[P_e^{(i)}(H)]}{\sqrt{\epsilon_n^{(i)}}}=\frac{\epsilon_n^{(i)}}{\sqrt{\epsilon_n^{(i)}}} = \sqrt{\epsilon_n^{(i)}}.
    \end{align}
     
    Hence, to establish convergence in probability, fix any arbitrarily small constant $\varepsilon > 0$.
    Since $\lim_{n\to\infty} \epsilon_n^{(i)} = 0$, for every~$\varepsilon>0$ there exists $n_\varepsilon$ so that for all $n > n_\varepsilon$ we have $\sqrt{\epsilon_n^{(i)}} < \varepsilon$.     
    For any such $n$, the condition $P_e^{(i)}(H) \ge \varepsilon$ implies that $P_e^{(i)}(H) \ge \sqrt{\epsilon_n^{(i)}}$, and therefore we can bound the probability as follows:
    $$ \Pr_H\left(P_e^{(i)}(H) \ge \varepsilon\right) \le \Pr_H\left(P_e^{(i)}(H) \ge \sqrt{\epsilon_n^{(i)}}\right) \overset{\eqref{eq:lemma12}}{\le} \sqrt{\epsilon_n^{(i)}}. $$
    Taking the limit of both sides yields:
    $$ \lim_{n \to \infty} \Pr_H\left(P_e^{(i)}(H) \ge \varepsilon\right) \le \lim_{n \to \infty} \sqrt{\epsilon_n^{(i)}} = 0. $$
    Thus, a randomly chosen $H$ guarantees $P_e^{(i)}(H)\to 0$ in probability when $n\to\infty$.
\end{proof}
Having established the decoding error for each of the~$m-1$ $\cC_\text{er}$-codewords, we must now account for all $\cC_\text{er}$-codewords simultaneously. 
Since the encoding scheme encodes the data into $m-1$ codewords from $\cC_\text{er}$, an overall decoding failure occurs if even one of these codewords is decoded incorrectly.
Thus, we formalize the decoding error of torn paper channel using a union bound.
\begin{corollary}\label{corollary:total_decoding_error}
    Let~$H\in\bF_2^{r\times n_\text{er}}$ be the parity-check matrix of~$\cC_\text{er}$, and let $P_e^{\text{total}}(H)$ denote the resulting decoding error of the encoding/decoding methods in Sections~\ref{section: improved encoding} and~\ref{Section:improved decoding}.
    A parity-check matrix $H$ chosen uniformly at random achieves a vanishing total error probability as $n \to \infty$.
\end{corollary}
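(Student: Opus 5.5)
The plan is to decompose the total failure event into the failures of the individual $\cC_{\mathrm{er}}$-codewords and apply a union bound. This works because steps \textbf{A}, \textbf{B}, \textbf{C1} and \textbf{C2} are deterministic and error-free for every fragment in $\cF$. By Lemmas~\ref{lemma: find pilot bits in rll encoding}, \ref{lemma:correctly identify modified db from pilot}, \ref{lemma: modified de brujin sequence can be decode}, \ref{lemma:global alignment for qN} and~\ref{lemma: rll decode}, every retained fragment is placed at its correct global position. Every $\hat{\boldr}_{v(i)}$ is therefore a correct partial observation of $\boldr_{v(i)}$, with erasures exactly at the uncovered coordinates. Hence the overall decoder errs only if Gaussian elimination fails for some $i\in[m-1]$, which gives
\[
P_e^{\text{total}}(H)\le \sum_{i=1}^{m-1} P_e^{(i)}(H).
\]
Note that the same matrix $H$ is used for all $m-1$ codewords, but the bound holds pointwise in $H$.

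Next, take expectations over $H$. By linearity and Lemma~\ref{lemma:random_matrix_capacity},
\[
\mathbb{E}_H[P_e^{\text{total}}(H)]\le \sum_{i=1}^{m-1}\epsilon_n^{(i)}.
\]
This tends to $0$ because $m$ is a fixed constant. I would then repeat the Markov argument of Lemma~\ref{lemma:high_probability_existence} directly for $P_e^{\text{total}}(H)$. Setting $\epsilon_n^{\text{tot}}=\sum_i\epsilon_n^{(i)}$, Markov gives $\Pr_H(P_e^{\text{total}}(H)\ge\sqrt{\epsilon_n^{\text{tot}}})\le\sqrt{\epsilon_n^{\text{tot}}}\to0$, so a uniformly random $H$ has vanishing total error with probability tending to $1$.

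An alternative route is to take a union bound over the $m-1$ events $\{P_e^{(i)}(H)\ge\varepsilon/(m-1)\}$ and invoke Lemma~\ref{lemma:high_probability_existence} for each. Either way, the finiteness of $m$ is what makes the sum harmless. I would also intersect with the high-probability event that $H$ has full row rank (Lemma~\ref{lemma:full rank random parity matrix}), so that the rate claimed for $\cC_{\mathrm{er}}$ holds simultaneously; the union of two vanishing-probability bad events is still vanishing.

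The main obstacle is the justification that the error events genuinely reduce to the per-codeword erasure-decoding failures. This requires the erasure set $E_i$ to be determined by the channel alone and not by $H$ or the message, as used in Lemma~\ref{lemma:random_matrix_capacity}. The set of discarded fragments depends only on fragment lengths and hence only on the break process, so $E_i$ is indeed independent of $H$ and of $v$. Making this explicit, and noting that the maximum over codewords in the definition of $P_e$ is unaffected because the bound is uniform in $v$, completes the argument.
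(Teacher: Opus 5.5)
Your proposal is correct and follows the paper's proof: the pointwise union bound $P_e^{\text{total}}(H)\le\sum_{i=1}^{m-1}P_e^{(i)}(H)$, then taking $\mathbb{E}_H$ and invoking Lemma~\ref{lemma:random_matrix_capacity} with $m$ constant, then the same Markov argument as in Lemma~\ref{lemma:high_probability_existence}. Your extra remarks (that $E_i$ is independent of $H$ and $v$, that the bound is uniform over codewords, and that one can intersect with the full-rank event) just make explicit what the paper leaves implicit.
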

\begin{proof}
    By the union bound, the total decoding error is bounded by the sum of the individual decoding errors of the $m-1$ codewords:
    $$ P_e^{\text{total}}(H) \le \sum_{i=1}^{m-1} P_e^{(i)}(H),$$    
    and taking the expectation over~$H$ implies that 
    $$ \mathbb{E}_H\left[P_e^{\text{total}}(H)\right] \le \sum_{i=1}^{m-1} \mathbb{E}_H\left[P_e^{(i)}(H)\right].$$
    Recall that $\mathbb{E}_H[P_e^{(i)}(H)] \overset{n\to\infty}{\longrightarrow} 0$ for every $i \in [m-1]$ by Lemma~\ref{lemma:random_matrix_capacity}, and since $m$ is a constant, it follows that
    $$ \lim_{n \to \infty} \mathbb{E}_H\left[P_e^{\text{total}}(H)\right] = 0.$$
    Applying Markov's inequality to this total expected error, identical to the procedure in Lemma~\ref{lemma:high_probability_existence}, guarantees that a randomly chosen $H$ will successfully decode all $m-1$ codewords simultaneously with high probability.
\end{proof}
 Corollary~\ref{corollary:total_decoding_error} shows that with probability that goes to~$1$ as~$n$ goes to infinity, our encoding/decoding methods guarantee vanishing error probability.
 The following theorem establishes the resulting rate.
\begin{theorem}\label{Theorem: rate calculation for improved encoding}
    The rate of the encoding method in Section~\ref{section: improved encoding} is arbitrarily close to $(1-\frac{1}{m})(m\alpha+1)e^{-m\alpha}$ for any fixed integer $m\geq 2$.
\end{theorem}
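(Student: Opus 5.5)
The proof combines the rate expression \eqref{equation:rate} with the reliability guarantee of Corollary~\ref{corollary:total_decoding_error}, and then lets the free construction parameters shrink. By the computation at the end of Section~\ref{section: improved encoding}, the code $\mathcal{C}$ has rate $(1-\frac{1}{\beta})(1-\frac{1}{m})R_{\mathrm{er}}$. This tends to $(1-\frac1m)R_{\mathrm{er}}$ because $\beta=\omega(1)$. By Lemma~\ref{lemma:full rank random parity matrix}, with high probability over $H$,
\begin{align*}
R_{\mathrm{er}}=(\alpha\gamma+1)e^{-\alpha\gamma}-3\eta, \qquad \gamma=(1+\delta)m.
\end{align*}
The rank event of Lemma~\ref{lemma:full rank random parity matrix} and the good-decoding event of Corollary~\ref{corollary:total_decoding_error} each hold with probability tending to $1$. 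A union bound over these two events therefore gives a single deterministic $H$ with both properties for all large $n$. The resulting scheme has vanishing error probability and rate
\begin{align*}
\left(1-\tfrac{1}{m}\right)\Big(((1+\delta)m\alpha+1)e^{-(1+\delta)m\alpha}-3\eta\Big)-o(1).
\end{align*}

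The substantive ingredient is Lemma~\ref{lemma:bound on number of erasures}, which is used in Lemma~\ref{lemma:random_matrix_capacity}. It says that each $\hat{\boldr}_{v(i)}$ has at most $(1-R_{\mathrm{er}}-\eta)n_{\mathrm{er}}$ erasures with high probability. I would rely on it as stated, but I would check its plausibility to confirm the constant $(\alpha\gamma+1)e^{-\alpha\gamma}$.

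Heuristically, with fragment lengths $\mathrm{Geometric}(p_n)$ and $p_n\approx\alpha/\log n$, the fraction of bits lying in fragments of length at least $\gamma\log n$ is
\begin{align*}
\sum_{\ell\ge\gamma\log n}\ell p_n^2(1-p_n)^{\ell-1}\to(1+\alpha\gamma)e^{-\alpha\gamma}.
\end{align*}
Since the retained fragments are exactly those of length at least $(1+\delta)m\log n=\gamma\log n$, the erased fraction is about $1-(1+\alpha\gamma)e^{-\alpha\gamma}$. This sits below $1-R_{\mathrm{er}}-\eta=1-(\alpha\gamma+1)e^{-\alpha\gamma}+2\eta$ by a margin of $2\eta$. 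Concentration then follows from a Chebyshev or martingale argument on the number of long fragments. Erasures are distributed essentially evenly among the $m-1$ de-interleaved subcodewords, up to boundary effects of size $O(m)$ per fragment; there are only $O(n/\log n)$ fragments, so this overhead is negligible. Removing RLL positions changes counts only by a factor $1-1/\beta\to1$.

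The last step is a limiting argument. The map $x\mapsto(x+1)e^{-x}$ is continuous, and $\delta,\eta>0$ are arbitrary small constants. Given any $\epsilon>0$, choose $\delta,\eta$ small enough that the rate above is within $\epsilon$ of $(1-\frac1m)(m\alpha+1)e^{-m\alpha}$. This choice must respect the standing requirement $\delta>\frac{2}{\beta-2}$, which holds eventually for any fixed $\delta>0$ since $\beta\to\infty$. Also, $\eta$ must be small enough that $r\le n_{\mathrm{er}}$ in Lemma~\ref{lemma:full rank random parity matrix}. This yields rates arbitrarily close to the claimed value for every fixed $m\ge2$.

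The main obstacle is making the order of limits rigorous. The parameters $\delta,\eta$ must be fixed first and $n\to\infty$ taken afterwards, with the $o(1)$ terms coming from $\beta$, from $\log m$, and from concentration all vanishing for each fixed $(\delta,\eta)$. The good-pattern bound must also hold uniformly over $i\in[m-1]$; this follows from a union bound since $m$ is constant.
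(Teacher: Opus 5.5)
Your proposal is correct and follows the same route as the paper. Both take the rate $(1-\frac{1}{m})(1-\frac{1}{\beta})R_{\mathrm{er}}$ from \eqref{equation:rate}, let $\beta\to\infty$, and send $\delta,\eta\to 0$ in $R_{\mathrm{er}}=((1+\delta)m\alpha+1)e^{-(1+\delta)m\alpha}-3\eta$ subject to $\delta>\frac{2}{\beta-2}$, with reliability supplied by Corollary~\ref{corollary:total_decoding_error}. Your extra steps (the union bound with the rank event to fix a deterministic $H$, the explicit continuity argument in $\delta$, and the order-of-limits bookkeeping) make explicit what the paper's proof leaves implicit, without changing the argument.
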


\begin{proof}
    
    
    The total rate of the construction is given by:
    \[
        \left(1-\frac{1}{m}\right)
        \left(1-\frac{1}{\beta}\right)
        R_{\mathrm{er}},
    \]
    as shown in~\eqref{equation:rate}.
    Since $\beta=\omega(1)$, we have $1-\frac{1}{\beta}\to 1$.
    Furthermore, the parameter~$\delta$ (Definition~\ref{definition:random linear erasure code}) can be chosen arbitrarily small subject to the constraint $\delta>\frac{2}{\beta-2}$ from Lemma~\ref{lemma:global alignment for qN}.
     According to Definition~\ref{definition:random linear erasure code} and Lemma~\ref{lemma:full rank random parity matrix}, as $\eta\to 0$, $R_{\mathrm{er}}$ approaches $(m\alpha+1)e^{-m\alpha}$, and the overall achievable rate approaches
    \begin{align*}    
        \left(1-\frac{1}{m}\right)
        (m\alpha+1)e^{-m\alpha}.&\qedhere
    \end{align*}
\end{proof}
\subsection{Comparison and Optimization}\label{section:comparison}
In this section we evaluate the theoretical rate established in Theorem~\ref{Theorem: rate calculation for improved encoding} by comparing it against the torn-paper channel capacity $e^{-\alpha}$ and previous solutions from~\cite{shomorony2021torn} and~\cite{Junsheng2026torn}. 
To benchmark the achievable rate derived in Theorem~\ref{Theorem: rate calculation for improved encoding} against this limit, we performed numerical optimization over the parameter space. Specifically, we evaluated $\alpha$ across the interval $(0, 2)$ in increments of $0.01$. 
For each value of $\alpha$, the parameter $m$ was optimized over the integer domain $[2, 10^4]$ to maximize the theoretical rate. 
The results comparing our optimized rate to the channel capacity are illustrated in Fig.~\ref{Figure: comparison for rate of different methods}.
In addition, Table~\ref{tab:f1f2capacity} presents optimized values of~$m$ and the resulting rate for several representative $\alpha$ values.
\begin{table}[ht]
\centering
\begin{tabular}{c c c c c c}
\hline
$\alpha$ & Optimal $m$ in~\cite{shomorony2021torn}& Rate of~\cite{shomorony2021torn}& Optimal $m$ in this work & Rate in this work & Capacity $e^{-\alpha}$ \\
\hline
0.1 & 4 & 0.607 & 6 & 0.732 & 0.905 \\
0.2 & 3 & 0.442 & 4 & 0.607 & 0.819 \\
0.3 & 2 & 0.331 & 3 & 0.515 & 0.741 \\
0.4 & 2 & 0.262 & 3 & 0.442 & 0.670 \\
0.5 & 2 & 0.203 & 3 & 0.372 & 0.607 \\
0.6 & 2 & 0.154 & 2 & 0.331 & 0.549 \\
0.7 & 2 & 0.116 & 2 & 0.296 & 0.497 \\
0.8 & 2 & 0.086 & 2 & 0.262 & 0.449 \\
0.9 & 2 & 0.063 & 2 & 0.231 & 0.407 \\
1.0 & 2 & 0.046 & 2 & 0.203 & 0.368 \\
1.1 & 2 & 0.033 & 2 & 0.177 & 0.333 \\
1.2 & 2 & 0.024 & 2 & 0.154 & 0.301 \\
1.3 & 2 & 0.017 & 2 & 0.134 & 0.273 \\
1.4 & 2 & 0.012 & 2 & 0.116 & 0.247 \\
1.5 & 2 & 0.009 & 2 & 0.100 & 0.223
\end{tabular}
\caption{Optimal $m$ values in~\cite{shomorony2021torn} and in this work, alongside rates and capacity,\\for representative values of $\alpha \in [0.1,1.5]$.}
\label{tab:f1f2capacity}
\end{table}
\section{Improved Torn Paper Coding with Lost Pieces via Local Alignment}\label{section:compare capacity for tpclp}
Building upon the fragmentation model of the TPC, the TPC-LP framework accounts for the practical reality that some fragments may be lost during transmission.
As mentioned in Section~\ref{Section:problem definition}, the TPC-LP~\cite{ravi2024recovering} models a scenario where each fragment $\vec{X}_i$ is independently lost with probability $d(\cdot)$ (a function of fragment length). 
Consequently, the decoder receives only the unordered multiset of the surviving (error-free) fragments.

In our analysis, we specialize this model to a ``hard-threshold'' regime in the following sense---we consider a scenario where the preservation of a fragment is determined by its length relative to the total sequence length~$n$; fragments failing to meet a logarithmic length threshold of~$m\log n$ for some constant $m$ are discarded, while those exceeding this threshold are retained for decoding with probability function $1-d(\cdot)$. 
This thresholding approach is physically natural, as very short fragments are highly susceptible to being lost during transmission, while longer fragments are more likely to be retained.
Building on this framework, we establish the following definition.
\begin{definition}\label{definition:threshold deletion function}
 For an integer $m\ge 2$, a deletion probability function
$d:\mathbb{N}\to[0,1]$ is called an \emph{($m\log n$)-threshold deletion function} if
\begin{align*}
    d(\ell)&=1 \text{ if } \qquad \ell<m\log n, \text{ and }\\
    d(\ell)&\in[0,1) \text{ if } \qquad \ell\ge m\log n.
\end{align*}
Equivalently, all fragments shorter than $m\log n$ are deleted, while fragments
of length at least $m\log n$ may be retained with arbitrary length-dependent
probability $1-d(\ell)$.
\end{definition}
In what follows, we show that the scheme presented in Section~\ref{section: improved encoding} can be used in the TPC-LP, and later compute the corresponding rates that enable vanishing error probability.
In the TPC-LP model, after the torn-paper channel produces fragments of lengths
$N_1,N_2,\ldots,N_K$, each fragment is independently deleted with probability
$d(N_i)$.
By Definition~\ref{definition:threshold deletion function}, every fragment with
length smaller than $m\log n$ is deleted, while fragments of length at least $m\log n$ may be retained with arbitrary length-dependent probability $1-d(N_i)$.

The decoding procedure in Section~\ref{Section:improved decoding} only uses
fragments whose lengths are sufficiently large  for local and global alignment, while all other fragments are treated as erasures (see Algorithm~\ref{algorithm for finding locations of p in improved decoding}). 
Therefore, deleting fragments below the threshold $m\log n$ does not interfere with the decoding process, since such fragments are not used whether or not they are present.

For fragments of length at least $m\log n$, the deletion function may retain
them with any probability depending on their length; this only changes the set
of fragments available to the decoder. 
In our TPC-LP scheme, the retained fragments that are long enough are processed exactly as in Section~\ref{Section:improved decoding}:
local alignment identifies the pilot subsequence, global alignment places the fragment in its correct position in the codeword, and the remaining unavailable positions are treated as erasures in~$\cC_\text{er}$ codewords.

Thus, the coding scheme in Section~\ref{section: improved encoding} can be used without any modification for a TPC-LP channel with $(m\log n)$-threshold deletion function, after adjusting $R_\text{er}$ so that the induced erasures in~$\cC_\text{er}$ could be corrected.
In turn, this choice of $R_\text{er}$ depends on the deletion probabilities $d(\ell)$ for $\ell\ge m\log n$.
Consequently, the same encoding and decoding scheme is valid for any $(m\log n)$-threshold
deletion function: all deleted fragments are treated as
erasures, while all retained (sufficiently long) fragments are aligned and used
for decoding.
The capacity of the TPC-LP is as follows.

\begin{lemma}\label{lemma:capacity of tplcp}\cite[Corollary 2]{ravi2024recovering}
Assuming all limits exist and are finite, the capacity of the TPC-LP is 
\begin{equation} \label{eq:tpc_capacity}
    C = \alpha \int_{1}^{\infty} (\kappa - 1) \left( 1 - \hat{d}(\kappa) \right) h(\kappa) \, d\kappa, \tag{13}
\end{equation}
where $\hat{d}(\kappa) \triangleq \lim_{n \to \infty} d(\kappa \log n)$ and $h(\kappa) \triangleq \lim_{n \to \infty} \Pr(N_i = \kappa \log n) \log n$.
\end{lemma}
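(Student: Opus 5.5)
This lemma is cited from \cite[Corollary 2]{ravi2024recovering}, so the plan is to reconstruct its proof along the lines of \cite{shomorony2021torn}. Two bounds are needed: a converse showing no rate above $C$ is achievable, and an achievability argument using random coding. Throughout I would work with the normalized fragment length $\kappa = N_i/\log n$. Since $p_n\log n\to\alpha$, the quantity $N_i/\log n$ converges in distribution to an exponential variable with rate $\alpha$, so $h(\kappa)=\alpha e^{-\alpha\kappa}$. The expected number of fragments is $K\approx np_n\approx \alpha n/\log n$.

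\textbf{Converse.} I would bound $I(X;Y)$, where $Y$ is the multiset of surviving fragments, by $H(Y)$. Then I would split according to how many surviving fragments fall in each length bin $[\kappa\log n,(\kappa+d\kappa)\log n)$.
\begin{itemize}
\item The multiset of fragments carries at most $\sum_i N_i$ bits of content, counting only surviving fragments.
\item The multiset loses about $\log(K_\kappa!)$ bits of ordering information in each bin, where $K_\kappa$ is the number of surviving fragments in that bin.
\item Since $K_\kappa\approx n\,p_n\,h(\kappa)(1-\hat d(\kappa))\,d\kappa/\log n$, Stirling's formula gives $\log K_\kappa!\approx K_\kappa\log n$.
\end{itemize}
The net entropy is therefore about $\sum_{\text{surviving}}(N_i-\log n)^+$. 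Normalizing by $n$ and using concentration of the empirical fragment-length distribution (a law of large numbers over the roughly $n/\log n$ pieces), this becomes $\alpha\int(\kappa-1)^+(1-\hat d(\kappa))h(\kappa)\,d\kappa$. The integral runs from $1$ because fragments shorter than $\log n$ contribute nothing beyond their count. I would make this rigorous with a typical-set argument on the length histogram, which has only polynomially many possibilities and so adds negligible entropy.

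\textbf{Achievability.} I would use an i.i.d.\ Bernoulli$(1/2)$ random codebook. The decoder looks for a codeword in which every received fragment appears as a substring, with lengths compatible with typical tearing. Fix a wrong codeword. For each surviving fragment of length $\kappa\log n$, the probability that it appears somewhere in that codeword is at most $n\,2^{-\kappa\log n}=n^{1-\kappa}$. Independence across nonoverlapping positions would give the joint bound $\prod n^{-(\kappa_i-1)^+}$, which is roughly $2^{-nC}$ after normalization. A union bound over $2^{nR}$ codewords then succeeds whenever $R<C$.

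\textbf{Main obstacle.} The hardest step is the achievability error analysis. The issue is that substring events for different fragments are dependent when the candidate placements overlap. I would first handle this by union-bounding over all injective placement maps of the fragments into disjoint positions of the wrong codeword. The number of such placements is about $\prod_i n = 2^{K\log n}$, and this combinatorial factor is exactly cancelled by the $-1$ in $(\kappa-1)$. So I expect the bound to close cleanly, with care needed only for fragments having $\kappa<1$, which should simply be ignored by the decoder.
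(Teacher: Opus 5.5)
The paper does not prove this lemma. It imports it verbatim from \cite[Corollary~2]{ravi2024recovering} and uses it only as a black box in Theorem~\ref{theorem:arbitrarily small gap generalized tpclp}, after substituting $h(\kappa)=\alpha e^{-\alpha\kappa}$. Your reconstruction is the standard Shomorony--Vahid argument that the cited corollary generalizes, and it is sound in outline. For the converse you use Fano together with a multiset-counting bound $H(Y)\lesssim\sum_{\text{retained}}(N_i-\log n)^+$. For achievability you use an i.i.d.\ uniform codebook with a substring-matching decoder that ignores fragments shorter than $\log n$. Compared with the paper's citation, it gains self-containedness. The cost is bookkeeping the cited work handles: concentration of the length counts, the tail of very long fragments, and expurgation from average to maximal error.

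Three things need fixing. First, your ``main obstacle'' is not an obstacle, and the fix you propose does not match the decoder you defined. Condition on the wrong codeword $X'$, the fragment lengths, and the deletion pattern. The retained fragments are then the contents of disjoint intervals of the true codeword, which is uniform and independent of $X'$, so they are independent uniform strings. Hence the events $\{\vec X_i \text{ occurs in } X'\}$ are conditionally independent, each with probability at most $\min\{1,n2^{-N_i}\}$. This yields $\prod_i n^{-(\kappa_i-1)^+}$ directly for your ``appears anywhere'' decoder. By contrast, a union bound over placements into \emph{disjoint} positions only controls the error event if the decoder itself demands a non-overlapping placement, which the true codeword always admits. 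As written, your decoder and your error analysis do not match.

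Second, in the converse the $\log K!$ saving must be taken per exact length, not per bin. Fragments in the same bin but of different lengths are distinguishable by length, so the correct count is $\prod_\ell\binom{2^\ell+K_\ell-1}{K_\ell}$. Alternatively, if you condition only on bin counts, you must charge each fragment the bin's maximal length. Either way the leading term is unchanged, because $\log K_\ell=(1-o(1))\log n$ whenever $1-\hat d(\kappa)>0$. Also, $K_\kappa\approx np_nh(\kappa)(1-\hat d(\kappa))\,d\kappa$, without your extra factor $1/\log n$; this too is harmless at first order.

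Third, the exact-length histogram takes $2^{O(\sqrt n)}$ values, not polynomially many. That is still $o(n)$, so it adds negligible entropy. With finitely many bins plus a single tail bin, the count is indeed polynomial.
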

\begin{theorem}\label{theorem:arbitrarily small gap generalized tpclp}
For~$m\ge 2$, let $d(\cdot)$ be an $(m\log n)$-threshold deletion function (Definition~\ref{definition:threshold deletion function}).
Assume that the normalized deletion function
\[
    \hat d(\kappa)= \lim_{n\to\infty} d(\kappa\log n)
\]
exists for all $\kappa\ge m$.
Then, for sufficiently large $m$, the encoding scheme in
Section~\ref{section: improved encoding} achieves an arbitrarily small
additive gap to the TPC-LP capacity~\eqref{eq:tpc_capacity}.
\end{theorem}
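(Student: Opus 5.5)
We plan to compare two explicit expressions, the TPC-LP capacity from Lemma~\ref{lemma:capacity of tplcp} and the rate achieved by the scheme of Section~\ref{section: improved encoding}, and to show that their difference is bounded by a quantity that vanishes as $m\to\infty$.

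\textbf{Step 1: evaluate the capacity.} Since $N_i\sim\operatorname{Geometric}(p_n)$ with $p_n\log n\to\alpha$, we have $h(\kappa)=\alpha e^{-\alpha\kappa}$. Because $\hat d(\kappa)=1$ for $\kappa<m$, the capacity becomes
\[
C=\alpha^2\int_m^\infty(\kappa-1)\bigl(1-\hat d(\kappa)\bigr)e^{-\alpha\kappa}\,d\kappa .
\]
We write this as $C=\Phi-\Psi$, where
\[
\Phi\triangleq\alpha^2\int_m^\infty \kappa\bigl(1-\hat d(\kappa)\bigr)e^{-\alpha\kappa}\,d\kappa,\qquad
\Psi\triangleq\alpha^2\int_m^\infty\bigl(1-\hat d(\kappa)\bigr)e^{-\alpha\kappa}\,d\kappa\le \alpha e^{-\alpha m}.
\]
The quantity $\Phi$ is the asymptotic fraction of codeword positions covered by retained fragments, i.e., the mass of bits lying in fragments of normalized length $\kappa\ge m$, weighted by their survival probability.

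\textbf{Step 2: choose the scheme parameters.} We run the scheme with an interleaving parameter $m_s\triangleq\lfloor m/(1+\delta)\rfloor$ in place of $m$. Then every retained fragment has length at least $m\log n\ge(1+\delta)m_s\log n$, so it belongs to $\cF$. By Lemmas~\ref{lemma: find pilot bits in rll encoding}--\ref{lemma:global alignment for qN}, every retained fragment is locally and globally aligned. The set of known positions is therefore exactly the union of the retained fragments, and all other positions are erasures. We redefine the redundancy of $\cC_{\mathrm{er}}$ as $r=(1-\Phi+3\eta)n_{\mathrm{er}}$, so that $R_{\mathrm{er}}=\Phi-3\eta$ by Lemma~\ref{lemma:full rank random parity matrix}. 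The pipeline of Lemma~\ref{lemma:random_matrix_capacity}, Lemma~\ref{lemma:high_probability_existence} and Corollary~\ref{corollary:total_decoding_error} then goes through unchanged, provided that the analogue of Lemma~\ref{lemma:bound on number of erasures} holds: with high probability, each de-interleaved codeword has at most $(1-\Phi+2\eta)n_{\mathrm{er}}$ erasures. The construction of Section~\ref{section: improved encoding} then gives the rate
\[
\Bigl(1-\tfrac{1}{m_s}\Bigr)\Bigl(1-\tfrac1\beta\Bigr)(\Phi-3\eta)\longrightarrow\Bigl(1-\tfrac1{m_s}\Bigr)\Phi \quad\text{as } \eta\to0,\ n\to\infty.
\]

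\textbf{Step 3: bound the gap.} The gap is
\[
C-\Bigl(1-\tfrac1{m_s}\Bigr)\Phi=\frac{\Phi}{m_s}-\Psi\le \frac{1}{m_s}\le\frac{1+\delta}{m-1-\delta},
\]
using $\Phi\le1$ and $\Psi\ge 0$, and this tends to $0$ as $m\to\infty$. (If one prefers a two-sided bound, note also $\Psi\le\alpha e^{-\alpha m}$.) Hence, for any $\epsilon>0$, choosing $m$ large enough gives an additive gap below $\epsilon$.

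\textbf{Main obstacle.} The hard step is the erasure-count concentration in Step 2, which replaces Lemma~\ref{lemma:bound on number of erasures}. We would first compute the expectation. Each position lies in a fragment of length $\ell$ with probability approximately $\ell p_n^2(1-p_n)^{\ell-1}$, and that fragment is retained with probability $1-d(\ell)$. Summing over $\ell$ and using the pointwise convergence $d(\kappa\log n)\to\hat d(\kappa)$ together with dominated convergence (the weights $\kappa e^{-\alpha\kappa}$ are integrable) shows that the expected covered fraction tends to $\Phi$. The de-interleaving into $m_s-1$ codewords only distributes covered positions uniformly up to $O(m_s)$ boundary effects per fragment, which are negligible. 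For concentration, the covered fraction is a sum over fragments of $\mathrm{len}(\vec X_i)\cdot\mathbf 1\{\text{retained}\}$. Conditioned on the break pattern, retention events are independent with bounded increments. Fragment lengths exceeding $c\log n$ have vanishing total mass, so each increment is at most $O(\log^2 n)$ after truncation. We would then apply a bounded-differences (McDiarmid) inequality over the $n-1$ independent break indicators and the retention coins, or alternatively a Chebyshev bound using the near-independence of disjoint blocks of length $\omega(\log n)$. Either route gives deviations of order $o(n)$ with high probability, which is what is needed.
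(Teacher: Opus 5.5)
Your proposal is correct and follows the paper's skeleton: restrict the capacity integral to $\kappa\ge m$ with $h(\kappa)=\alpha e^{-\alpha\kappa}$, set the rate of $\cC_{\mathrm{er}}$ just below the asymptotic retained coverage $\Phi$ (the paper's $F_d(m\log n)$ in~\eqref{equation:R_er}), obtain the overall rate $(1-\frac1m)\Phi$, and compare. It departs from the paper in three places.

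\textbf{Parameter choice.} The paper runs the scheme with interleaving parameter equal to the channel threshold $m$ and asserts that every retained fragment is used. However, Lemmas~\ref{lemma: find pilot bits in rll encoding} and~\ref{lemma:global alignment for qN} only align fragments of length at least $(1+\delta)m\log n$. Retained fragments with lengths in $[m\log n,(1+\delta)m\log n)$ are therefore silently lost, which is harmless only in the limit $\delta\to0$. Your choice $m_s=\lfloor m/(1+\delta)\rfloor$ removes this issue cleanly.

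\textbf{Gap bound.} The paper uses the exact cancellation $(\kappa-1)-(1-\frac1m)\kappa=\frac{\kappa-m}{m}$ to get $C-R_{\mathrm{ach}}=\frac{\alpha}{m}\int_m^\infty(\kappa-m)(1-\hat d(\kappa))h(\kappa)\,d\kappa\le e^{-\alpha m}/m$, an exponentially small gap. Your bound $\Phi/m_s-\Psi\le 1/m_s$ is only $O(1/m)$. That suffices for the statement, and within your parametrization you could still write $\Phi/m_s-\Psi=\frac{\alpha^2}{m_s}\int_m^\infty(\kappa-m_s)(1-\hat d(\kappa))e^{-\alpha\kappa}\,d\kappa\le\frac{(1+\alpha(m-m_s))e^{-\alpha m}}{m_s}$.

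\textbf{Concentration.} The paper rigorously computes only the \emph{expected} retained coverage. It sums over fragments after replacing the random count $K$ by $n/\ell_n$, then uses a Riemann sum with the envelope of Lemma~\ref{lemma:geometric-envelope} and dominated convergence. For the TPC-LP analogue of Lemma~\ref{lemma:bound on number of erasures} it relies on Remark~\ref{remark:tpclp_erasure_patterns}, even though Lemma~\ref{lemma:concentration of coverage} only concerns the undeleted coverage $V_\gamma$. Your per-coordinate computation of the expectation avoids handling $K$. Your explicit concentration argument (conditionally independent retention coins plus bounded differences or Chebyshev over the break indicators) supplies exactly the step the paper leaves to the remark.

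Two small fixes in that step:
\begin{itemize}
\item For constant $c$, the mass in fragments longer than $c\log n$ does not vanish; it is roughly $(\alpha c+1)e^{-\alpha c}$. Truncate instead at a length that is $\omega(\log n)$, e.g.\ $\log^2 n$, which is what your $O(\log^2 n)$ increment bound implicitly assumes.
\item Index the retention coins by fragment starting position, so they form a fixed family of independent variables for McDiarmid.
\end{itemize}
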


\begin{proof}
By Corollary~\ref{lemma:capacity of tplcp}, the TPC-LP capacity is
\[
    C
    =
    \alpha
    \int_{1}^{\infty}
    (\kappa-1)(1-\hat d(\kappa))h(\kappa)\,d\kappa.
\]
Since $d(\cdot)$ is a $(m\log n)$-threshold deletion function, all fragments shorter than
$m\log n$ are deleted, hence $\hat d(\kappa)=1$ for $\kappa<m$, and therefore,
\[
    C
    =
    \alpha
    \int_{m}^{\infty}
    (\kappa-1)(1-\hat d(\kappa))h(\kappa)\,d\kappa.
\]
For a torn paper channel with break probability~$p_n$ such that~$\lim_{n\to\infty}p_n\log n= \alpha$, the fragment lengths~$N_i$ are i.i.d. geometric random variables with parameter~$p_n$, and hence $\mathbb{E}[N_i] \triangleq \ell_n = 1/p_n$. Using the definition of~$h(\kappa)$ from Lemma~\ref{lemma:capacity of tplcp}, we calculate the asymptotic density (\cite[Eq.(14)]{ravi2024recovering}):
\begin{align}\label{eq:hkappa}
    h(\kappa) &= \lim_{n\to\infty} \Pr(N_i = \kappa\log n)\log n = \alpha e^{-\alpha\kappa}.
\end{align}
Thus,
\[
    C
    =
    \alpha^2
    \int_{m}^{\infty}
    (\kappa-1)(1-\hat d(\kappa))e^{-\alpha\kappa}\,d\kappa.
\]
The decoder uses the retained fragments of length at least $m\log n$. 
As in~\cite[Definition~1]{ravi2024recovering}, the asymptotic expected fraction of codeword coordinates covered by such retained fragments is denoted by
\begin{align}\label{equation:f_d(mlogn)}
        F_d(m\log n)
    \triangleq
    \lim_{n\to\infty}
    \mathbb{E}\left[
        \frac{1}{n}
        \sum_{i=1}^{K}
        N_i
        \mathbf{1}(\vec{X}_i\in Y,\;N_i\ge m\log n)
    \right].
\end{align}
where $\vec{X}_i\in Y$ means that the fragment is retained by the TPC-LP channel.

By the definition of the indicator function~$\mathbf{1}(\cdot)$, we have that $\mathbf{1}(\vec{X}_i\in Y,\;N_i\ge m\log n)=
\mathbf{1}(\vec{X}_i\in Y)\mathbf{1}(N_i\ge m\log n)$. 
Furthermore, by Lemma~\ref{lemma:number of fragments} in Appendix~\ref{appendix for good sets}, the total number of fragments $K$ tightly concentrates around $n/\ell_n$, where $\ell_n = \mathbb{E}[N_1]=1/p_n$. 
Since the fragment lengths $N_1, N_2, \dots, N_K$ are identically distributed, the expected value inside the sum in~\eqref{equation:f_d(mlogn)} is identical for all $i$. 
Thus, we can drop the index $i$ and evaluate the expectation using the first fragment $N_1$ as a representative, multiplied by~$n/\ell_n$.
Namely, we replace the sum over $K$ with $n/\ell_n$ copies of the expected value for~$N_1$; the factor of~$n$ cancels out, and we obtain 
\begin{align}\label{eq:fdmlogn}
F_d(m\log n)
&=
\lim_{n\to\infty}
\frac{1}{\ell_n}
\mathbb{E}\left[
N_1
\mathbf{1}({N_1\ge m\log n})
\mathbf{1}{(\vec{X}_1\in Y)}
\right] \nonumber\\
&=
\lim_{n\to\infty}
\frac{1}{\ell_n}
\sum_{a=1}^{\infty}
a\,
\Pr\left(
N_1
\mathbf{1}{(N_1\ge m\log n)}
\mathbf{1}{(\vec{X}_1\in Y)}
=a
\right).
\end{align}

For any \(a\ge 1\), the event
\[
N_1
\mathbf{1}{(N_1\ge m\log n)}
\mathbf{1}{(\vec{X}_1\in Y)}
=a
\]
occurs if and only if
\[
N_1=a\qquad\text{and}\qquad a\ge m\log n\qquad\text{and
}\qquad \vec{X}_1\in Y.
\]
 By the definition of the TPC-LP channel, conditioned on the event
\(N_1=a\), the fragment \(\vec X_1\) is retained with probability
\(1-d(a)\). Therefore,
\[
\Pr(\vec X_1\in Y\mid N_1=a)
=
1-d(a),
\]
and hence, by the definition of conditional probability,
\[
\Pr(N_1=a,\vec X_1\in Y)
=
\Pr(N_1=a)\Pr(\vec X_1\in Y\mid N_1=a)
=
\Pr(N_1=a)(1-d(a)).
\]
Substituting this into \eqref{eq:fdmlogn} gives
\begin{align}\label{eq:fdmlogn2}
F_d(m\log n)
&=
\lim_{n\to\infty}
\frac{1}{\ell_n}
\sum_{a=m\log n}^{\infty}
a\Pr\left(N_1=a,\vec{X}_1\in Y\right) \nonumber\\
&=
\lim_{n\to\infty}
\frac{1}{\ell_n}
\sum_{a= m\log n}^{\infty}
a(1-d(a))\Pr(N_1=a).
\end{align}

Now, let 
\[
    \Delta_n=\frac{1}{\log n}
    \qquad\text{ and } \qquad
\mathcal K_n=\left\{m+i\cdot\Delta_n:i\in\mathbb N\cup\{0\}\right\},
\]
and hence,
\begin{align}\label{eq:riemannsum}
    \eqref{eq:fdmlogn2}
    &=
    \lim_{n\to\infty}
    \frac{\log n}{\ell_n}
    \sum_{a= m\log n}^{\infty}
    \frac{a}{\log n}
    \left(1-d\left(\frac{a}{\log n}\cdot\log n\right)\right)
    \Pr\left(N_1=\frac{a}{\log n}\cdot\log n\right)
    \log n\cdot \Delta_n \nonumber\\
    &=\lim_{n\to\infty}
    \frac{\log n}{\ell_n}\sum_{\kappa\in\mathcal K_n}\kappa
    \left(1-d(\kappa\log n)\right)
    \Pr(N_1=\kappa\log n)\log n\cdot
    \Delta_n.
\end{align}
To proceed with the computation of~\eqref{eq:riemannsum}, we define a sequence of piece-wise constant functions $f_n: [m, \infty) \to \mathbb{R}$ as follows.
For each  $\kappa \in \mathcal{K}_n$, we set the value of $f_n(x)$ over the interval $[ \kappa, \kappa + \Delta_n )$ to match the corresponding summand:
$$f_n(x) = \kappa \left(1-d(\kappa\log n)\right) \Pr(N_1=\kappa\log n)\log n \text{ if } x \in [\kappa, \kappa + \Delta_n) \text{ for some }\kappa\in\cK_n;$$
since the intervals $\{[\kappa,\kappa+\Delta_n)\vert \kappa\in\cK_n\}$ are disjoint and cover~$[m,\infty)$ for every~$n$, the function~$f_n(x)$ is well defined for every~$x\in[m,\infty)$ and every~$n$.
Since $f_n(x)$ 
is fixed on each such interval,
its integral over $[m, \infty)$ exactly evaluates to the Riemann  sum:$$\int_m^\infty f_n(x) \, dx = \sum_{\kappa\in\mathcal K_n}\kappa \left(1-d(\kappa\log n)\right) \Pr(N_1=\kappa\log n)\log n \cdot \Delta_n,$$
and hence we ought to evaluate:
\begin{align}\label{equation:getAlphaOut}
\eqref{eq:riemannsum}=\lim_{n\to\infty} \frac{\log n}{\ell_n} \int_m^\infty f_n(x)  dx=\alpha\lim_{n\to\infty} \int_m^\infty f_n(x)  dx,
\end{align}
where the last equality follows since $\ell_n=\mathbb{E}[N_1]=1/p_n$ and $p_n\log n\to\alpha$ imply that $\frac{\log n}{\ell_n}=p_n\log n\xrightarrow{n\to\infty}\alpha$, and by the product law for limits.

To switch between the integral and the limit in~\eqref{equation:getAlphaOut}, we utilize Lebesgue's dominated convergence theorem\footnote{Lebesgue's dominated convergence theorem~\cite{wiki:Dominated_convergence_theorem} states that if $f_n \to f$ pointwise and $|f_n|\le g$ for some integrable function $g$, then $\lim_{n\to\infty}\int f_n
=
\int \lim_{n\to\infty} f_n
=
\int f$.
}. 
The first step in applying this theorem is to prove pointwise convergence of the sequence $f_n(x)$, i.e. $$\lim_{n\to\infty}f_n(x)=f(x)$$
for some function $f(x)$.
Indeed, as shown in~\eqref{eq:tpc_capacity}, we have $\hat{d}(\kappa) \triangleq \lim_{n \to \infty} d(\kappa \log n)$ and 
\(h(\kappa)\triangleq \lim_{n \to \infty}\Pr(N_1=\kappa\log n)\log n\), thus for every $x\in [m,\infty)$, $$\lim_{n\to\infty}f_n(x) = x (1 - \hat{d}(x)) h(x)\triangleq f(x).$$ 
Next, we ought to prove the existence of an integrable dominating function $g(x)$ such that $|f_n(x)|\le g(x)$ for all sufficiently large~$n$ and all $x\in[m,\infty)$.
By setting $a=x \log n$ in Lemma~\ref{lemma:geometric-envelope}, given in Appendix~\ref{section:lemma useful for thm2}, we have $$\Pr(N_1=x\log n)\log n\le Ce^{-cx}$$ for some constant $C,c>0$.
Thus, $$f_n(x)\le x(1-d(x\log n))Ce^{-cx}\le Cxe^{-cx}\triangleq g(x).$$
The dominating function $g(x) = C x e^{-c x}$ is strictly integrable over our domain:$$\int_m^\infty g(x) \, dx = C \int_m^\infty x e^{-c x} \, dx < \infty.$$
Since $g \in L^1([m, \infty))$, the dominated convergence theorem guarantees that the limit and integral can be switched: $$\lim_{n\to\infty} \int_m^\infty f_n(x) \, dx = \int_m^\infty \lim_{n\to\infty}f_n(x) \, dx= \int_m^\infty f(x) \, dx = \int_m^\infty x (1 - \hat{d}(x)) h(x) \, dx.$$

Hence,
\begin{align}\label{equation:R_er}
        F_d(m\log n)
    =
    \alpha
    \int_m^\infty
    \kappa(1-\hat d(\kappa))h(\kappa)\,d\kappa.
\end{align}

By the same erasure-reduction argument as in
Theorem~\ref{Theorem: rate calculation for improved encoding}, the retained fragments induce an effective erasure channel on the codewords of
$\cC_{\mathrm{er}}$. 
The asymptotic fraction of un-erased coordinates is
precisely $F_d(m\log n)$, and therefore the erasure code
$\cC_{\mathrm{er}}$ may be chosen with rate arbitrarily close to
$F_d(m\log n)$ while still guaranteeing vanishing decoding error
probability.
Furthermore, only $m-1$ of the $m$ interleaved subsequences
carry information symbols, since the remaining one is
the pilot sequence. 
Hence, the overall rate is reduced by a multiplicative
factor of $(1-1/m)$. 
Remark~\ref{remark:tpclp_erasure_patterns} from
Appendix~\ref{appendix for good sets} shows that the TPC-LP deletion function
only changes the fraction of retained fragments and does not change the way
erasures are distributed across the interleaved codewords of $\mathcal{C}_{\mathrm{er}}$. 
Therefore the same local-alignment, global-alignment, and erasure-decoding analysis in Section~\ref{Section:improved decoding} continues to apply.
Hence, the same argument used in Theorem~\ref{Theorem: rate calculation for improved encoding}
applies here as well.
Consequently, the achievable rate is
\[
    R_{\mathrm{ach}}
    =
    \left(1-\frac{1}{m}\right)
    \alpha
    \int_{m}^{\infty}
    \kappa(1-\hat d(\kappa))h(\kappa)\,d\kappa.
\]
To compare this rate to the TPC-LP capacity observe the following. 
\begin{align*}
    C-R_{\mathrm{ach}}
    &=
    \alpha\int_m^\infty
    (\kappa-1)(1-\hat d(\kappa))h(\kappa)\,d\kappa \\
    &\quad -
    \left(1-\frac{1}{m}\right)
    \alpha\int_m^\infty
    \kappa(1-\hat d(\kappa))h(\kappa)\,d\kappa \\
    &=
    \alpha\int_m^\infty
    \left(\kappa-1-\kappa+\frac{\kappa}{m}\right)
    (1-\hat d(\kappa))h(\kappa)\,d\kappa \\
    &=
    \frac{\alpha}{m}
    \int_m^\infty
    (\kappa-m)(1-\hat d(\kappa))h(\kappa)\,d\kappa.
\end{align*}
Since $0\le 1-\hat d(\kappa)\le 1$ and $h(\kappa)=\alpha e^{-\alpha\kappa}$~\eqref{eq:hkappa}, it follows that
\[
    C-R_{\mathrm{ach}}
    \le
    \frac{\alpha}{m}
    \int_m^\infty
    (\kappa-m)\alpha e^{-\alpha\kappa}\,d\kappa.
\]
Evaluating the integral yields
\[
    \frac{\alpha^2}{m}
    \int_m^\infty
    (\kappa-m)e^{-\alpha\kappa}\,d\kappa
    =
    \frac{e^{-\alpha m}}{m},
\]
therefore
\[
    0\le C-R_{\mathrm{ach}}
    \le
    \frac{e^{-\alpha m}}{m},
\]
and thus the difference~$C-R_\text{ach}$ vanishes as $m\to\infty$. 
Hence, for sufficiently large
$m$, the encoding scheme in Section~\ref{section: improved encoding} achieves
an arbitrarily small additive gap to the TPC-LP capacity.
\end{proof}

\begin{example}
    Consider a TPC-LP with \(\alpha=0.1\) and\footnote{
    Notice that unlike the TPC channel considered in Sections~\ref{section: improved encoding},
    where \(m\) served as a parameter that can be optimized, in the TPC-LP setting~$m$ is a parameter defined by the channel, which determines the deletion
    probability via Definition~\ref{definition:threshold deletion function}.} $m=10$, and the following
    length-dependent deletion rule:
    \[
    d(\ell)=
    \begin{cases}
    1, & \ell<10\log n,\\
    0.2, & 10\log n\le \ell<20\log n,\\
    0, & \ell\ge 20\log n.
    \end{cases}
    \]
    Equivalently, the asymptotic retention probability is
    \[
    1-\hat d(\kappa)=
    \begin{cases}
    0, & \kappa<10,\\
    0.8, & 10\le \kappa<20,\\
    1, & \kappa\ge 20.
    \end{cases}
    \]
    Since the asymptotic fragment length density is
    \[
    h(\kappa)=\alpha e^{-\alpha\kappa}=0.1e^{-0.1\kappa},
    \]
    the TPC-LP capacity from Lemma~\ref{lemma:capacity of tplcp} is
    \[
    C
    =
    0.1^2 \left(
    0.8 \int_{10}^{20}(\kappa-1)e^{-0.1\kappa}\,d\kappa
    +
    \int_{20}^{\infty}(\kappa-1)e^{-0.1\kappa}\,d\kappa
    \right).
    \]
    which evaluates numerically to
    \begin{align*}
    C\approx 0.6377.
    \end{align*}
    We employ a random linear binary code as the erasure code, whose rate is chosen according to~\eqref{equation:R_er} as follows.     
\[
R_{\mathrm{er}}
\overset{\eqref{equation:R_er}}{=}
0.1^2 \left(
0.8 \int_{10}^{20} \kappa e^{-0.1\kappa}\,d\kappa
+
1 \int_{20}^\infty \kappa e^{-0.1\kappa}\,d\kappa
\right),
\]
which numerically evaluates to:
\begin{align*}
R_{\mathrm{er}} \approx 0.6698.
\end{align*}
The overall achievable rate is then calculated as:
\begin{align*}
R_{\mathrm{ach}}
&=
\left(1-\frac{1}{10}\right)R_{\mathrm{er}} \approx 0.6028.
\end{align*}
 Comparing the theoretical capacity $C$ against our achieved rate $R_{\mathrm{ach}}$, we observe the gap:
\[
C - R_{\mathrm{ach}} \approx 0.0348.
\]
This example illustrates that the proposed construction can operate
    close to the TPC-LP capacity under a small additive gap.
\end{example}
\printbibliography
\appendices
\section{Omitted proof}\label{appendix}
\begin{proof}[Proof of Lemma~\ref{lemma:random linear code random erasures}]
Fix~$\cE\in\cT$, let~$t=|\cE|$,
and let $H_\cE$ denote the submatrix of $H$ formed by the columns indexed by $\cE$. 
A binary linear code successfully decodes the erasure pattern $\cE$ if and only if the columns of $H_\cE$ are linearly independent, which is equivalent to $\rank(H_\cE) = t$. 
Since the entries of $H$ are chosen independently and uniformly from $\mathbb{F}_2$, we can analyze the linear independence of $H_\cE$ by building the submatrix column by column.
Conditioned on the first $i$ columns being linearly independent, their span contains exactly $2^i$ vectors out of the $2^r$ vectors in $\mathbb{F}_2^r$. 
The probability that the $(i+1)$-th randomly chosen column belongs to this span (an individual failure) is $2^i / 2^r = 2^{i-r}$. 

Since rank deficiency occurs if any individual column belongs to the span of its predecessors, we can apply the union bound to sum these individual failure probabilities. 
Thus, the probability that $H_\cE$ is rank deficient is at most:
$$ \Pr_H(\rank(H_\cE) \ne t) \le \sum_{i=0}^{t-1}2^{i-r} < 2^{t-r}. $$
Since $r=(1-R_{\mathrm{er}})n_{\mathrm{er}}$ and $t\le(1-R_{\mathrm{er}}-\eta)n_{\mathrm{er}}$, it follows that $t-r \le -\eta n_{\mathrm{er}}$, and hence we obtain the bound $\Pr_H(\rank(H_\cE) \ne |\cE|) \le 2^{-\eta n_{\mathrm{er}}}$. 
Since $n_{\mathrm{er}} = \frac{(\beta-1)n}{m\beta}$ where~$\beta=o(\log(n))$, $\beta=\omega(1)$, and~$m=O(1)$, we have that $n_{\mathrm{er}}\to\infty$ whenever $n\to\infty$. 
Therefore, since $\eta > 0$, this failure probability tends to zero exponentially fast when $n\to\infty$.
\end{proof}
\section{Proofs regarding good erasure patterns}\label{appendix for good sets}
In this appendix, we show that the erasure pattern induced
by the torn paper channel belongs to $\mathcal{T}$ (see~\eqref{equation: definition of mathcal T}) with probability tending to $1$ as $n \to \infty$.
The following Definition~\ref{Definition: coverage of fragment}, Lemma~\ref{lemma:number of fragments} and Lemma~\ref{lemma:concentration of coverage} are used to prove Lemma~\ref{lemma:bound on number of erasures}.
\begin{definition}\label{Definition: coverage of fragment}\cite[Definition~1]{shomorony2021torn}
    For a constant $\gamma$, the \textit{coverage} $V_\gamma$ is the fraction of output bits which reside in fragments of length at least $\gamma\log n$, i.e.,
    \begin{equation*}
        V_\gamma=\frac{1}{n}\sum_{i=1}^{K}N_i\mathbf{1}({N_i\geq \gamma\log n }),
    \end{equation*}
    where $\mathbf{1}({N_i\geq \gamma\log n })$ is the indicator function of the event $N_i\geq \gamma\log n $.
\end{definition}
\begin{lemma}\label{lemma:number of fragments}
 Let $K$ be the smallest index such that
\[
    \sum_{i=1}^{K} N_i \ge n,
\]
where $N_1,N_2,\ldots$ are i.i.d. $\operatorname{Geometric}(p_n)$ random variables representing the lengths of broken fragments. Then
\[
    K = O\!\left(\frac{n}{\log n}\right)
\]
with high probability.
\end{lemma}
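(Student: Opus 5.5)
The plan is to bound $K$ by comparing it to a sum of i.i.d.\ indicators and then applying a Chernoff bound. The breaks are modeled as independent $\operatorname{Bernoulli}(p_n)$ events at each of the $n-1$ gaps between consecutive bits. Under this model, $K-1$ is exactly the number of breaks that occur among those $n-1$ gaps. Equivalently, by the stopping-time definition with i.i.d.\ $\operatorname{Geometric}(p_n)$ lengths, $K>k$ holds if and only if $\sum_{i=1}^{k}N_i<n$. That event means fewer than $n$ trials were needed to see $k$ successes, so at least $k$ successes occurred among the first $n-1$ trials. Either way, $K-1\sim\operatorname{Binomial}(n-1,p_n)$, and it suffices to bound the upper tail of this binomial.

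First, I will use the regime $p_n\log n\to\alpha$ with $\alpha$ a constant. This gives $p_n\le 2\alpha/\log n$ for all sufficiently large $n$, and hence $\mu\triangleq\mathbb{E}[K-1]=(n-1)p_n\le 2\alpha n/\log n$. In the degenerate case $\alpha=0$, I simply replace the mean by the upper bound $n/\log n$; stochastic domination by a $\operatorname{Binomial}(n-1,1/\log n)$ variable covers this case. Next, I will apply the multiplicative Chernoff bound $\Pr(X\ge 2\mu')\le e^{-\mu'/3}$, where $\mu'=\max(\mu,\,n/\log n)$ is an upper bound on the mean. This yields
\[
\Pr\!\left(K\ge 1+2\max(2\alpha,1)\tfrac{n}{\log n}\right)\le e^{-n/(3\log n)}\to 0,
\]
which proves $K=O(n/\log n)$ with high probability.

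The only delicate point is making the identification of $K-1$ with a binomial count rigorous under the paper's stopping-time definition of $K$, which is the step I expect to be the main obstacle. The equivalence $\{K>k\}=\{\sum_{i\le k}N_i<n\}$ already reduces the question to a statement about partial sums. If one prefers to avoid the binomial identification, there is an alternative route: set $k=\lceil c\,n/\log n\rceil$ and bound $\Pr(\sum_{i\le k}N_i<n)$ directly using a lower-tail Chernoff bound for sums of geometric variables. That sum has mean $k/p_n\approx c\,n/\alpha$, which exceeds $n$ by a constant factor once $c>\alpha$, so the lower tail decays exponentially in $k$. I would present the binomial argument as the main proof and mention the geometric-sum bound as the direct alternative.
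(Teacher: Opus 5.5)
Your proposal is correct and follows essentially the same route as the paper: it writes $K-1$ as the number of independent $\mathrm{Ber}(p_n)$ breaks among the $n-1$ gaps, bounds the mean by a constant times $n/\log n$ using $p_n\log n\to\alpha$, and applies the multiplicative Chernoff bound $\Pr(X\ge 2\mu')\le e^{-\mu'/3}$ with an upper bound $\mu'$ on the mean. Your justification of the binomial identification via $\{K>k\}=\{\sum_{i\le k}N_i<n\}$ and your treatment of $\alpha=0$ through $\max(\mu,n/\log n)$ are small refinements. The paper simply asserts $K=1+\sum_t B_t$ from the break model and handles both points implicitly by choosing some constant $c>0$ with $p_n\le c/\log n$.
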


\begin{proof}
For each $t\in\{1,\dots,n-1\}$, let $B_t$ be the indicator of the event that a break occurred between $x_t$ and $x_{t+1}$. Then
\[
    B_t \sim \operatorname{Ber}(p_n),
\]
and the random variables $B_1,\dots,B_{n-1}$ are independent. Since every break
increases the number of fragments by one, we have
\[
    K = 1+\sum_{t=1}^{n-1} B_t.
\]
Since $\lim_{n\to\infty} p_n\log n=\alpha$, there exists a constant $c>0$ such that,
for all sufficiently large $n$,
\[
    p_n \le \frac{c}{\log n},
\]
and hence
\[
    \mathbb{E}\left[\sum_{t=1}^{n-1} B_t\right]
    =
    (n-1)p_n
    \le
    \frac{c(n-1)}{\log n}.
\]
By Chernoff's bound, for all sufficiently large $n$,
\[
    \Pr\left(
        \sum_{t=1}^{n-1}B_t
        >
        2\frac{c(n-1)}{\log n}
    \right)
    \le
    \exp\left(-\frac{c(n-1)}{3\log n}\right),
\]
which tends to zero as $n\to\infty$. Therefore, with high probability,
\begin{align*}
    K
    &=
    1+\sum_{t=1}^{n-1}B_t
    \le
    1+2\frac{c(n-1)}{\log n}
    =
    O\!\left(\frac{n}{\log n}\right).\qedhere
\end{align*}
\end{proof}
\begin{lemma}\label{lemma:concentration of coverage}\cite[Lemma~6]{shomorony2021torn}
For a constant~$\gamma$, the coverage~$V_\gamma$ satisfies the following two properties
    \begin{align}\label{Equation: limit of coverage}
        \Pr\!\left( \left| V_{\gamma} - (\alpha\gamma + 1)e^{-\alpha\gamma} \right| > \epsilon \right)
\;\xrightarrow[n\to\infty]{} 0,\qquad \text{and} \qquad \lim_{n \to \infty} \mathbb{E}[V_\gamma]
    = (\alpha \gamma + 1)e^{-\alpha \gamma}.
\end{align}
\end{lemma}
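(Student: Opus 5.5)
The statement is Lemma~\ref{lemma:concentration of coverage}, cited from~\cite[Lemma~6]{shomorony2021torn}. I will first compute the expectation $\mathbb{E}[V_\gamma]$, and then obtain concentration from a second-moment (Chebyshev) bound.

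\emph{Expectation.} Rather than summing over the random number of fragments $K$, I will write $V_\gamma$ per coordinate: $V_\gamma=\frac1n\sum_{t=1}^n Z_t$, where $Z_t$ indicates that position $t$ lies in a fragment of length at least $\gamma\log n$. The break indicators $B_1,\dots,B_{n-1}$ are i.i.d.\ $\mathrm{Ber}(p_n)$. For an interior position $t$, i.e., at distance at least $\gamma\log n$ from both ends of the codeword:
\begin{itemize}
\item the fragment containing $t$ extends $L$ positions to the left and $R$ positions to the right of $t$;
\item $L$ and $R$ are independent, each distributed as $\mathrm{Geometric}(p_n)-1$ (truncated at the boundary);
\item $Z_t=1$ exactly when $L+R+1\ge \gamma\log n$.
\end{itemize}
The sum of two independent geometric variables is negative binomial, so
\[
\Pr(L+R+1\ge s)=(1-p_n)^{s-1}\bigl(1+(s-1)p_n\bigr),\qquad s=\gamma\log n.
\]
Since $p_n\log n\to\alpha$, we have $(1-p_n)^{\gamma\log n}\to e^{-\alpha\gamma}$ and $(s-1)p_n\to\alpha\gamma$, so this probability tends to $(\alpha\gamma+1)e^{-\alpha\gamma}$. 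Only $O(\log n)$ positions are near the boundary, and they contribute $o(1)$ after division by $n$. This gives the second claim of the lemma.

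\emph{Concentration.} I will bound $\mathrm{Var}(V_\gamma)$. The key observation is that $Z_t$ and $Z_{t'}$ are close to independent once $|t-t'|$ is large. I will take $D=\log^2 n$ and argue as follows.
\begin{itemize}
\item Modify $Z_t$ into $\tilde Z_t$, which only examines the breaks within distance $D/2$ of $t$. If no break is found within that window in one direction, $\tilde Z_t$ treats the fragment as long in that direction.
\item The modification changes $Z_t$ only when the window contains no break on one side, an event of probability at most $(1-p_n)^{D/2}=e^{-\Theta(\log n)}=o(1)$. Consequently $\mathbb{E}\bigl|V_\gamma-\tilde V_\gamma\bigr|=o(1)$, where $\tilde V_\gamma$ is defined from the $\tilde Z_t$ in the same way.
\item The variables $\tilde Z_t$ and $\tilde Z_{t'}$ depend on disjoint sets of breaks, and are therefore exactly independent, whenever $|t-t'|>D$.
\end{itemize}
It follows that
\[
\mathrm{Var}(\tilde V_\gamma)\le \frac{1}{n^2}\cdot n\cdot(2D+1)=O\!\left(\frac{\log^2 n}{n}\right)\to0.
\]
Chebyshev's inequality for $\tilde V_\gamma$, combined with Markov's inequality for $|V_\gamma-\tilde V_\gamma|$, then yields convergence in probability of $V_\gamma$ to $(\alpha\gamma+1)e^{-\alpha\gamma}$.

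The main obstacle is the dependence between the $Z_t$: a single fragment correlates all the positions it covers. The truncation step above resolves it. Alternatively, one could condition on $K$ (using Lemma~\ref{lemma:number of fragments}) and apply a law of large numbers to the i.i.d.\ terms $N_i\mathbf{1}(N_i\ge\gamma\log n)$. The catch is that $K$ is a stopping time, which the per-coordinate formulation avoids.
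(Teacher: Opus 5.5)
Your proof is correct. The paper gives no argument of its own here: Lemma~\ref{lemma:concentration of coverage} is imported from \cite[Lemma~6]{shomorony2021torn}. So the useful comparison is with the fragment-level route suggested by the paper's definition of $V_\gamma$ as a sum up to the stopping time $K$, which is also the form the paper manipulates when computing $F_d(m\log n)$ in Theorem~\ref{theorem:arbitrarily small gap generalized tpclp}. On that route, Wald's identity gives $\mathbb{E}[V_\gamma]=\frac{1+(n-1)p_n}{n}\,\mathbb{E}[N_1\mathbf{1}(N_1\ge s)]$ with $s=\lceil\gamma\log n\rceil$. The identity $p_n\mathbb{E}[N_1\mathbf{1}(N_1\ge s)]=(1-p_n)^{s-1}(1+(s-1)p_n)$ is exactly your negative-binomial tail; it is the inspection paradox seen from the fragment side rather than the position side. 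Concentration then follows by sandwiching $\sum_{i\le K}$ between two i.i.d.\ sums of deterministic length on the event $|K-np_n|\le\epsilon np_n$, using nonnegativity of the summands. So the stopping-time obstacle you mention is mild. However, that route needs two-sided concentration of $K$, while Lemma~\ref{lemma:number of fragments} only supplies the upper tail. Your per-coordinate decomposition needs nothing about $K$, treats the boundary explicitly, and yields a quantitative variance bound.

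Two remarks would tighten your write-up. First, the truncation at $D=\log^2 n$ is unnecessary. Since $L+R+1\ge s$ as soon as either $L\ge s$ or $R\ge s$, we have $Z_t=\mathbf{1}(\min(L,s)+\min(R,s)+1\ge s)$, so $Z_t$ depends only on the breaks within distance $s$ of $t$. Consequently, once $D/2\ge s$:
\begin{itemize}
\item your $\tilde Z_t$ equals $Z_t$ at every position at distance at least $D/2$ from the ends, so the modification never changes anything there, rather than changing it with small probability;
\item $Z_t$ and $Z_{t'}$ are independent whenever $|t-t'|\ge 2s$;
\item Chebyshev applies directly with $\mathrm{Var}(V_\gamma)=O(\log n/n)$.
\end{itemize}
This also removes your use of $\alpha>0$ in the step $(1-p_n)^{D/2}=o(1)$.

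Second, in the paper's definition $N_K$ is the untruncated geometric variable that overshoots position $n$. Hence $V_\gamma=\frac1n\sum_t Z_t$ holds only up to an additive error lying in $[0,N_K/n]$. Since $\Pr(N_K=a)\le a\Pr(N_1=a)$, one gets $\mathbb{E}[N_K]\le\mathbb{E}[N_1^2]=O(\log^2 n)$. So this error vanishes in $L^1$, and both claims transfer to the paper's $V_\gamma$. A sentence acknowledging this would make your identification with the paper's quantity exact.
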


\begin{lemma}\label{lemma:bound on number of erasures}
Consider the decoding procedure after global alignment (Step \textbf{B3}) and de-interleaving (Step \textbf{C2}).
For each $i\in[m-1]$, let $E_i$ be a random variable representing the set of erased coordinates in the $i$-th estimated codeword $\hat{\mathbf{r}}_{v(i)}$.
Let $R_{\mathrm{er}}$ and $n_{\mathrm{er}}$ be the rate and block length of the random binary linear code $\cC_{\mathrm{er}}$ in Definition~\ref{definition:random linear erasure code}. 
Then, for any fixed $\eta>0$ and for every $i \in [m-1]$, we have that~$\Pr(E_i\in\cT)\overset{n\to\infty}{\longrightarrow }1$.
\end{lemma}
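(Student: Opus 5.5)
The plan is to bound $|E_i|$ by the number of codeword positions not covered by fragments in $\cF$, and then use the concentration of the coverage $V_\gamma$ with $\gamma=(1+\delta)m$ from Lemma~\ref{lemma:concentration of coverage}.

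\textbf{Step 1: relate erasures to coverage.} Every fragment of length at least $\gamma\log n=(1+\delta)m\log n$ lies in $\cF$ and, by Lemmas~\ref{lemma: find pilot bits in rll encoding}--\ref{lemma:global alignment for qN}, is placed correctly. So a coordinate of $\hat{\boldc}_v$ is erased only if it lies in a fragment shorter than $\gamma\log n$, and the number of erased coordinates of $\hat{\boldc}_v$ is at most $n(1-V_\gamma)$.

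\textbf{Step 2: pass to a single de-interleaved codeword.} A fragment of length $N$ occupies a contiguous interval. It therefore contains at most $\lceil N/m\rceil\le N/m+1$ positions congruent to $i$ modulo $m$. Removing the RLL-added positions in Algorithm~\ref{algorithm:retrieve Cer from aligned fragments} can only decrease this count. Summing over discarded fragments gives
\begin{align*}
|E_i|\le \frac{n}{m}(1-V_\gamma)+K.
\end{align*}
By Lemma~\ref{lemma:number of fragments}, $K=O(n/\log n)=o(n_{\mathrm{er}})$ with high probability, since $n_{\mathrm{er}}=\frac{(\beta-1)n}{m\beta}=\Theta(n)$. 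Also $\frac{n}{m}=n_{\mathrm{er}}\frac{\beta}{\beta-1}=n_{\mathrm{er}}(1+o(1))$. Hence, with high probability,
\begin{align*}
|E_i|\le n_{\mathrm{er}}(1-V_\gamma)+o(n_{\mathrm{er}}).
\end{align*}

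\textbf{Step 3: apply concentration.} By Lemma~\ref{lemma:concentration of coverage} with $\epsilon=\eta/2$, with high probability $V_\gamma\ge(\alpha\gamma+1)e^{-\alpha\gamma}-\eta/2$. Definition~\ref{definition:random linear erasure code} and Lemma~\ref{lemma:full rank random parity matrix} give $R_{\mathrm{er}}=(\alpha\gamma+1)e^{-\alpha\gamma}-3\eta$. Then
\begin{align*}
|E_i|\le n_{\mathrm{er}}\left(1-R_{\mathrm{er}}-3\eta+\eta/2\right)+o(n_{\mathrm{er}})\le(1-R_{\mathrm{er}}-\eta)n_{\mathrm{er}}
\end{align*}
for large $n$, so $E_i\in\cT$. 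A union bound over the failure events of Lemma~\ref{lemma:number of fragments} and Lemma~\ref{lemma:concentration of coverage} gives $\Pr(E_i\in\cT)\to1$. The slack $3\eta$ versus $\eta$ in the definition of $r$ is what absorbs the $\epsilon$ and the $o(1)$ terms.

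\textbf{Main obstacle.} The delicate point is Step 2. The erasures are bursty, not memoryless, so we cannot directly assume that each de-interleaved codeword receives exactly a $1/m$ share of the erased positions. The per-fragment rounding $\lceil N/m\rceil$ handles this, and its cumulative error is only $K=o(n)$. The same step must also check that the boundary fragments straddling the RLL blocks introduce no extra erasures, which holds because Algorithm~\ref{algorithm:retrieve Cer from aligned fragments} only removes positions.
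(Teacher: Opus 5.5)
Your proposal is correct and follows essentially the same route as the paper. It bounds the erased positions by the discarded coverage $n(1-V_\gamma)$, controls the per-fragment boundary effect using $K=O(n/\log n)$ from Lemma~\ref{lemma:number of fragments}, and finishes with the concentration of $V_\gamma$ and the $3\eta$ slack built into $R_{\mathrm{er}}$.

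The only difference is in the per-interval count, and it is minor. The paper counts $\cD_i$ directly, using density $\frac{\beta-1}{m\beta}$ and a $c_1\beta$ boundary term, which is where it relies on $\beta=o(\log n)$. You instead count all positions $\equiv i \pmod m$ with a $+1$ boundary term and absorb the factor $\frac{\beta}{\beta-1}$ using $\beta=\omega(1)$. Your version is equally valid and slightly simpler.
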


\begin{proof}
Let $G\subseteq[n]$ be the set of coordinates of the interleaved sequence $\boldc_v$ that lie in fragments of length strictly smaller than $\gamma\log n$. 
These fragments are discarded by the decoder in Step~$\mathbf{A}$, and by Definition~\ref{Definition: coverage of fragment} we have that
\[
    |G|
    =
    \sum_{j=1}^{K}N_j\mathbf{1}({N_j<\gamma\log n})
    =
    n(1-V_\gamma).
\]
After global alignment, the decoder obtains an estimate $\hat{\boldc}_v$ that agrees with $\boldc_v$ on all coordinates not in $G$, and contains erasures on the coordinates in $G$. 
We now analyze how these erasures appear in each $\cC_{\mathrm{er}}$-codeword after removing the $\mathrm{RLL}(0,\beta-1)$ added $1$'s and de-interleaving.
For each $i\in[m-1]$, define
\[
    \cD_i
    =
    \{mt+i:0\le t\le n/m-1,\ t\not\equiv (\beta-1) \bmod{\beta}\}.
\]
The set $\cD_i$ consists exactly of the global positions in $\boldc_v$ corresponding to the symbols of the $i$-th codeword in $\cC_{\mathrm{er}}$.
The number of erasures in $\hat{\boldr}_{v(i)}$ is precisely $|E_i| = |G\cap \cD_i|$.

For any interval $\cI\subseteq[n]$, there exists a constant $c_1=c_1(m)$ such that 
\begin{equation}\label{equation: I cap D_i}
        |\cI\cap \cD_i|
    \le
    \frac{\beta-1}{m\beta}|\cI|+c_1\beta.
\end{equation}
The correctness of~\eqref{equation: I cap D_i} is proved by observing that \(\cD_i\) contains exactly \(\beta-1\) positions among every
\(m\beta\) consecutive positions. Hence the first term reflects the density of
\(\cD_i\), whereas the additive term \(c_1\beta\) accounts for the incomplete
\(m\beta\)-blocks intersecting the two endpoints of \(\cI\).
Therefore, since $G$ is a union of at most $K$ fragments (each of which is an interval), we obtain
\[
    |E_i| = |G\cap \cD_i|=\sum_{\boldf\in G}|\operatorname{entries}(\boldf)\cap \cD_i| \overset{\eqref{equation: I cap D_i}}{\le}\sum_{\boldf\in G} \left(\frac{\beta-1}{m\beta}|\operatorname{entries}(\boldf)|+c_1\beta\right)
    =
    \frac{\beta-1}{m\beta}|G|+c_1\beta K,
\]
where $\operatorname{entries}(\boldf)$ is the segment (i.e., consecutive subsets) of~$[n]$ corresponding to the fragment~$\boldf$.
By Lemma~\ref{lemma:number of fragments}, for all sufficiently large $n$, with high probability, $K \le c_2\frac{n}{\log n}$ for some constant $c_2>0$. Since $\beta=o(\log n)$, we have $c_1\beta K \le c_1c_2\frac{\beta n}{\log n} = o(n)$.
Therefore, with high probability we have that for every $i\in[m-1]$,
\[
    |E_i|
    \le
    \frac{\beta-1}{m\beta}n(1-V_\gamma)+o(n).
\]

Recall that the block length of $\cC_{\mathrm{er}}$ is $n_{\mathrm{er}} = |\cD_i| = \frac{(\beta-1)n}{m\beta}$.
Thus, dividing by $n_{\mathrm{er}}$, the fraction of bits erased in every $\hat{\boldr}_{v(i)}$ is at most
\[
    \frac{|E_i|}{n_{\mathrm{er}}}
    \le
    1-V_\gamma+o(1),
\]
where the~$o(1)$ expression follows since~$\beta=\omega(1)$, and since~$m$ is a constant.
Now, by the left-hand side of~\eqref{Equation: limit of coverage} in Lemma~\ref{lemma:concentration of coverage}, $V_\gamma$ converges in probability towards $(\alpha\gamma+1)e^{-\alpha\gamma}$. Hence, for any fixed $\eta>0$, we have $V_\gamma \ge (\alpha\gamma+1)e^{-\alpha\gamma}-\eta$ with high probability as $n \to\infty$. 
By Lemma~\ref{lemma:full rank random parity matrix}, the rate of $\cC_{\mathrm{er}}$ is $R_{\mathrm{er}} = (\alpha\gamma+1)e^{-\alpha\gamma}-3\eta$. 
, and hence, substituting $R_{\mathrm{er}}$ yields:
\[
    V_\gamma \ge R_{\mathrm{er}} + 2\eta.
\]
Consequently, the fraction of erased bits is bounded by:
\[
    \frac{|E_i|}{n_{\mathrm{er}}} \le 1 - (R_{\mathrm{er}} + 2\eta) + o(1) = 1 - R_{\mathrm{er}} - 2\eta + o(1).
\]
For all sufficiently large $n$, the $o(1)$ term is strictly less than $\eta$. Therefore, the size satisfies $|E_i| \le (1-R_{\mathrm{er}}-\eta)n_{\mathrm{er}}$ with high probability for every $i \in [m-1]$. This implies that the random variable $E_i$ falls into the good set $\cT$ with high probability, yielding $\Pr(E_i \in \cT) \to 1$.
\end{proof}

\begin{remark}\label{remark:tpclp_erasure_patterns}
The argument in Lemma~\ref{lemma:bound on number of erasures} extends to the
TPC-LP setting since the additional deletion step does not change the
structure of the induced erasure patterns. 
In the coding scheme presented in Section~\ref{Section:improved decoding}, the decoder discards all fragments whose lengths are below the alignment
threshold, and these discarded fragments induce erasures that are unions of
whole intervals in the transmitted codeword. 
In the TPC-LP setting considered in Section~\ref{section:compare capacity for tpclp}, the deletion function also removes all fragments whose lengths are below
the threshold $m\log n$, exactly as the ordinary TPC decoder would discard
unusable short fragments. 
For longer fragments, the deletion function may remove additional fragments,
but this deletion decision depends only on the fragment length and not on where
the fragment originated in the codeword. Hence, every $\mathcal{C}_{\mathrm{er}}$-codeword is affected by the erasures evenly.
Therefore, all deletions still correspond to
removing whole intervals in the original transmitted word. Since each such
interval intersects the sets $D_i$ according to the same counting bound as
in~\eqref{equation: I cap D_i}, the resulting erasures remain evenly
distributed across the $m-1$ interleaved codewords  of
$\mathcal{C}_{\mathrm{er}}$, up to the same negligible boundary term. Thus, the
TPC-LP deletion function only changes the asymptotic fraction of coordinates
covered by usable retained fragments. Consequently, the same random
parity-check matrix argument from Lemmas~\ref{lemma:random linear code random erasures}--\ref{lemma:high_probability_existence} still provides vanishing decoding error probability,
provided that the rate $R_{\mathrm{er}}$ is chosen according to the $(m\log n)$-threshold deletion function and coverage $F_d(m\log n)$. 
\end{remark}

\section{lemma for theorem~\ref{theorem:arbitrarily small gap generalized tpclp}}\label{section:lemma useful for thm2}
\begin{lemma}\label{lemma:geometric-envelope}
Let \(N_1\) be a geometric random variable with parameter \(p_n\), that is,
\[
    \Pr(N_1=a)=p_n(1-p_n)^{a-1},\qquad a=1,2,\ldots .
\]
Assume that
\[
    p_n\log n\to \alpha
\]
for some \(\alpha\in(0,\infty)\). Then there exist constants \(C,c>0\) and \(n_0\in\mathbb N\) such that, for all \(n\ge n_0\) and all integers \(a\ge 1\),
\[
    \log n\cdot \Pr(N_1=a)
    \le
    Ce^{-c\frac{a}{\log n}}.
\]
\end{lemma}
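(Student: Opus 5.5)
The plan is to bound the two factors of $\log n\cdot p_n(1-p_n)^{a-1}$ separately, using only the convergence $p_n\log n\to\alpha$ with $\alpha\in(0,\infty)$. Because this limit is finite and positive, there is an $n_0$ such that for all $n\ge n_0$
\[
\tfrac{\alpha}{2}\le p_n\log n\le 2\alpha .
\]
We also take $n_0$ large enough that $p_n\le \tfrac12$ for $n\ge n_0$; this is possible since $p_n\to 0$ whenever $\log n\to\infty$.

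\textbf{Prefactor.} For $n\ge n_0$ we have $\log n\cdot p_n\le 2\alpha$ directly.

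\textbf{Geometric tail.} We use $1-p\le e^{-p}$, which gives
\[
(1-p_n)^{a-1}\le e^{-p_n(a-1)}=e^{p_n}\,e^{-p_n a}\le e^{1/2}\,e^{-p_n a}.
\]
Then the lower bound $p_n\ge \frac{\alpha}{2\log n}$ yields
\[
e^{-p_n a}\le e^{-\frac{\alpha}{2}\cdot\frac{a}{\log n}} .
\]

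\textbf{Conclusion.} Combining the two bounds, for every $n\ge n_0$ and every integer $a\ge1$,
\[
\log n\cdot\Pr(N_1=a)\le 2\alpha e^{1/2}\,e^{-\frac{\alpha}{2}\cdot\frac{a}{\log n}} .
\]
This is the claimed bound with $C=2\alpha\sqrt e$ and $c=\alpha/2$. Both constants depend only on $\alpha$, so the bound is uniform in $a$ and in $n\ge n_0$.

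There is no real obstacle here. The only care needed is that the constants must not depend on $n$ or $a$, and that is why we use the two-sided bound on $p_n\log n$ rather than the limit itself. The lower bound $p_n\log n\ge \alpha/2$ is what requires $\alpha>0$: it fixes the decay rate $c$ as a positive constant independent of $n$.
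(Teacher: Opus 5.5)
Your proposal is correct and follows essentially the same route as the paper: sandwich $p_n\log n$ between two positive constants, apply $1-x\le e^{-x}$, and absorb the off-by-one factor from $a-1$ into the constant. The only cosmetic difference is that you absorb $e^{p_n}\le e^{1/2}$ using $p_n\le \tfrac12$, whereas the paper first substitutes the lower bound on $p_n$ and then absorbs $e^{A/\log n}\le e^{A}$ using $\log n\ge 1$.
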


\begin{proof}
Since \(p_n\log n\to \alpha\in(0,\infty)\), there exist constants
\(A,B>0\) and \(n_0\in\mathbb N\) such that, for all \(n\ge n_0\),
\[
    A \le p_n\log n \le B.
\]
In particular,
\[
    p_n\le \frac{B}{\log n}
    \qquad\text{and}\qquad
    p_n\ge \frac{A}{\log n}.
\]

For \(a\ge 1\), using \(1-x\le e^{-x}\), we have
\[
\begin{aligned}
    \log n\cdot \Pr(N_1=a)
    &= \log n\cdot p_n(1-p_n)^{a-1} \\
    &\le \log n\cdot  p_n e^{-p_n(a-1)} \\
    &\le B e^{-p_n(a-1)}.
\end{aligned}
\]
Using \(p_n\ge A/\log n\), we get
\[
    e^{-p_n(a-1)}
    \le
    e^{-\frac{A(a-1)}{\log n}}.
\]
Therefore
\[
    \log n\cdot \Pr(N_1=a)
    \le
    Be^{-\frac{A(a-1)}{\log n}}.
\]
Now
\[
    -\frac{A(a-1)}{\log n}
    =
    -\frac{Aa}{\log n}+\frac{A}{\log n}.
\]
For all sufficiently large \(n\), \(\log n\ge 1\), so
\[
    e^{\frac{A}{\log n}}\le e^A.
\]
Hence
\[
    \log n\cdot \Pr(N_1=a)
    \le
    Be^A
    e^{-A\frac{a}{\log n}}.
\]
Thus the desired bound holds with $C=Be^A$ and $c=A$.
\end{proof}
\end{document}